\documentclass[11pt]{article}

\usepackage[margin=1in]{geometry}
\usepackage{amsmath,amssymb,amsthm,mathtools}
\usepackage{algorithm}
\usepackage{algpseudocode}
\usepackage{microtype}
\usepackage{enumitem}
\usepackage[colorlinks=true,linkcolor=blue,citecolor=blue,urlcolor=blue]{hyperref}\usepackage{xcolor}
\usepackage{amsthm}
\usepackage{cleveref}
\usepackage{aliascnt}
\usepackage{bbm}
\usepackage{url}
\usepackage{bm}
\usepackage{comment}
\usepackage[sort&compress,round]{natbib}
\usepackage[titletoc]{appendix}
\usepackage[nottoc]{tocbibind}
\usepackage{placeins}
\allowdisplaybreaks

\algblockdefx[WHENEVER]{Whenever}{EndWhenever}
  [1]{\textbf{whenever} #1 \textbf{do}}
  {\textbf{end whenever}}

\newtheorem{theorem}{Theorem}[section]

\newaliascnt{lemma}{theorem}
\newtheorem{lemma}[lemma]{Lemma}
\aliascntresetthe{lemma}
\crefname{theorem}{theorem}{theorems}
\Crefname{theorem}{Theorem}{Theorems}
\crefname{lemma}{lemma}{lemmas}
\Crefname{lemma}{Lemma}{Lemmas}

\newaliascnt{fact}{theorem}
\newtheorem{fact}[fact]{Fact}
\aliascntresetthe{fact}
\crefname{fact}{fact}{facts}
\Crefname{fact}{Fact}{Facts}

\newaliascnt{corollary}{theorem}
\newtheorem{corollary}[corollary]{Corollary}
\aliascntresetthe{corollary}
\crefname{corollary}{corollary}{corollaries}
\Crefname{corollary}{Corollary}{Corollaries}

\crefname{informaltheorem}{informal theorem}{informal theorems}
\Crefname{informaltheorem}{Informal Theorem}{Informal Theorems}

\usepackage{thmtools,thm-restate}

\newenvironment{proofsketch}
  {\begin{proof}[Proof sketch]}
  {\end{proof}}

\makeatletter
\newcommand{\citeinorder}[1]{%
  \begingroup
  \def\NAT@sort{0}%
  \def\NAT@cmprs{0}%
  \cite{#1}%
  \endgroup
}
\makeatother

\newcommand{\rate}{\operatorname{rate}}

\newcommand{\opt}{\mathrm{Opt}}
\newcommand{\alg}{\mathrm{Alg}}
\newcommand{\name}{\textsc{ThresholdBelady}}
\newcommand{\manpol}{\textsc{ManThresholdBelady}}
\newcommand{\namepr}{\textsc{GuardedEviction}}

\newcommand{\expect}{\operatorname{E}}
\newcommand{\error}{\mathcal{E}}

\newcommand{\lopt}{\mathrm{L}_\mathrm{opt}}
\newcommand{\lman}{\mathrm{L}_\mathrm{man}}
\newcommand{\lghost}{\mathrm{L}_\mathrm{gst}}
\newcommand{\pname}{\mathrm{SWC}}
\newcommand{\pnameit}{\mathit{SWC}}
\newcommand{\ohp}{\textsc{ManGuardedEviction}}
\newcommand{\manpar}{\textsc{ManPartition}}
\newcommand{\manfinal}{\textsc{MandatoryAdmission}}
\newcommand{\greedypart}{\textsc{GreedyPartition}}
\newcommand{\errorname}{\textsc{ErroneousPredictions}}
\newcommand{\robustname}{\textsc{RobustSwap}}
\newcommand{\con}{\textsc{Cont}}
\newcommand{\cert}{\textsc{R}}

\newcommand{\pol}{\mathcal{P}}
\newcommand{\cost}{\mathrm{cost}}

\newcommand{\dimitris}[1]{\textcolor{blue}{[Dimitris: #1]}}

\title{Stochastic Weighted Caching with Per-Request Predictions}
\author{
Shuchi Chawla \smallskip \\\medskip shuchi@cs.utexas.edu \\ UT Austin
\and 
Dimitris Christou\smallskip  \\\medskip christou@cs.utexas.edu \\ UT Austin
}

\date{}

\begin{document}

\maketitle
\thispagestyle{empty}
\begin{abstract}
Both the unweighted and weighted variants of caching are cornerstone problems in competitive analysis, with tight randomized competitive ratios of $\Theta(\log k)$, where $k$ is the cache size. We study weighted caching with per-request predictions, where each page request is accompanied by a prediction of the next arrival time of the same page. For unweighted caching, such perfect predictions suffice to implement Belady’s offline optimal rule. For weighted caching, however, previous work shows that a logarithmic lower bound persists even with perfect predictions on adversarial request sequences. Stochastic assumptions alone are also insufficient: the classical caching lower bounds already apply to uniform independent requests. In this work, we ask whether combining these two sources of additional information can overcome this barrier, and answer this question affirmatively.

Specifically, we develop an efficient 
cache eviction policy that achieves a competitive ratio of $15.75$ using perfect per-request predictions when page requests are drawn independently from an arbitrary, known static distribution in every step, and page costs are arbitrary. Underlying our analysis is a novel lower bound on the optimal cost obtained through an ex-ante relaxation of the cache's covering constraints. Our algorithm's performance degrades gracefully under erroneous predictions: we prove a bound of $O(\min\{1+\mathcal E,\log k\})$ on the competitive ratio, where $\mathcal E$ aggregates each page’s average prediction error, and may be unknown to the algorithm.
    
\end{abstract}
\setcounter{tocdepth}{2}
\tableofcontents
\thispagestyle{empty}

\newpage
\setcounter{page}{1}
\section{Introduction}

Caching (also known as paging) is a foundational problem in online algorithms and competitive analysis. In the most basic version, an online algorithm, or cache eviction policy, serves requests from a universe of $n$ pages using a cache of capacity $k<n$. Each time a requested page is absent from the cache, it must be fetched at unit cost; if the cache is full, the policy chooses which page to evict. In the weighted version, fetching page $p$ incurs a page-specific cost $w_p\ge 0$, modeling differences in retrieval cost arising, for example, from where pages reside in a storage hierarchy. Caching and its variants have attracted sustained attention for their widespread use in computer systems, their role as special cases of central online problems such as $k$-server and metrical task systems, and their value as testing grounds for new techniques in online algorithms (e.g. \citeinorder{buchbinder2009,buchbinder2014,bubeck2018}).

The optimal worst-case competitive guarantees for both unweighted and weighted caching are well understood. For unweighted caching, \cite{fiat1991} obtained an $O(\log k)$ competitive ratio through the randomized marking algorithm and established a matching lower bound up to constant factors. The lower bound, in particular, holds even for the special case of $n=k+1$ pages and with requests drawn \textit{independently and uniformly at random} from the universe. For weighted caching, obtaining a sublinear competitive ratio remained an open problem for many years, until \cite{bansal2012} developed an $O(\log k)$-competitive randomized algorithm using an online primal-dual approach, achieving the optimal worst-case ratio up to constant factors and closing the gap between the weighted and unweighted versions. 
The $\Omega(\log k)$ lower bound, together with the problem's practical importance, motivates understanding what additional structure or information allows eviction policies to achieve constant competitiveness.


Learning-augmented algorithms offer one approach to overcoming these worst-case barriers by giving online algorithms predictions about future inputs. In pioneering work, \citet{lykouris2018} demonstrated the power of this approach for caching using a simple prediction model: whenever a page is requested, the algorithm receives a prediction of that page's next request time. These \emph{per-request predictions} (PRP) are natural in practice, as a process requesting a page may be able to estimate when it will next need that page. For unweighted caching, perfect PRP suffice to implement Belady's optimal eviction rule~\citep{belady1966}, reducing the competitive ratio from $\Theta(\log k)$ to $1$. Unfortunately, the power of this prediction model does not extend to weighted caching in the worst case: even with \textit{perfect per-request predictions}, every randomized online algorithm has competitive ratio $\Omega(\log k)$~\citep{jiang2020}. 

In this work, we show that \textbf{input stochasticity and per-request predictions together reduce the competitive ratio for weighted caching from $\bm{\Theta(\log k)}$ to $\bm{O(1)}$.} 

We study \emph{Stochastic Weighted Caching} (SWC), where pages have arbitrary fetching costs $w_p\ge 0$ and independent Poisson request processes with known, arbitrary rates $\mu_p>0$. At each request, the algorithm receives a prediction of that page's next arrival time. We measure competitiveness using long-run expected cost per unit time. With perfect predictions, our algorithm is $O(1)$ competitive against the offline optimum, independent of the cache size, the number of pages, and the range of weights and arrival rates. Crucially, neither stochasticity nor predictions alone suffice: as discussed, the logarithmic lower bound persists both under stochastic requests without predictions and under adversarial requests with perfect per-request predictions.

\begin{restatable}{theorem}{perfectpredictiontheorem}
\label{thm:perfect}
    Any instance of $\pnameit$ admits an efficient deterministic online caching policy (\Cref{alg:mandatory-final}) that is $15.75$-competitive when supplied with perfect per-request predictions.
\end{restatable}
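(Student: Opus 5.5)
The plan has two halves: an ex-ante lower bound on $\opt$, and an upper bound on the algorithm's long-run rate compared against that lower bound term by term.

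\textbf{Lower bound on $\opt$.} Fix any policy, possibly offline, and take its long-run behavior. Let $x_p$ be the long-run fraction of page $p$'s requests that are served from the cache (hits). The cost rate is then $\sum_p w_p\mu_p(1-x_p)$. A hit on page $p$ requires $p$ to occupy a cache slot for the entire interarrival gap preceding that request. Hence the time-average occupancy of $p$ is at least the expected total length of the gaps ending in hits. The best an offline policy can do is keep $p$ only across its shortest gaps. For a Poisson process with rate $\mu_p$, the gaps are i.i.d.\ $\mathrm{Exp}(\mu_p)$. Keeping the fraction $x_p$ of shortest gaps costs occupancy
\[
c_p(x_p)=\mu_p\,\mathbb{E}\bigl[G\,\mathbbm{1}\{G\le F^{-1}(x_p)\}\bigr],
\]
which is convex in $x_p$. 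The cache constraint, averaged ex ante, gives $\sum_p c_p(x_p)\le k$. So $\opt$ is at least the value of the convex program
\[
\min \sum_p w_p\mu_p(1-x_p)\quad\text{s.t.}\quad \sum_p c_p(x_p)\le k,\ \ x_p\in[0,1].
\]
Its KKT conditions yield a Lagrange multiplier $\lambda$. At the optimum, page $p$ keeps exactly the gaps of length at most $\tau_p=w_p/\lambda$. The intuition is that the marginal cost of keeping a gap of length $g$ is $\lambda g$ and its benefit is $w_p$. The occupancy budget is then tight, $\sum_p c_p(\tau_p)=k$, or else $\lambda=0$.

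\textbf{Algorithm and upper bound.} The natural policy (\name, made mandatory-admission in \Cref{alg:mandatory-final}) uses the per-request prediction as follows. On a request to $p$ at time $t$ with predicted next arrival $t'$, the page is retained if $t'-t\le \tau_p$, possibly with the threshold scaled by a constant. Otherwise it is marked evictable. The remaining issue is overflow, since the ex-ante budget holds only in expectation. When the cache is full and no marked page exists, some page whose gap is short must be evicted. I would evict using a Belady-like rule among retained pages, choosing the one with the largest remaining time relative to $w_p$, or equivalently charging evictions against a ghost budget.

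The cost then splits into two parts. Misses on long gaps are exactly the LP's $\sum_p w_p\mu_p(1-x_p^*)$. Forced evictions due to overflow are the other part. To bound the overflow, I would run the thresholds at a scaled budget $k/\alpha$, which shrinks $\tau_p$ and costs at most a constant factor in the LP objective because of convexity and concavity of the exponential tail. I would then show that the expected weighted cost of forced evictions is at most a constant times the LP value. The argument charges each forced eviction of $p$ to the time-occupancy excess above $k/\alpha$. It uses $w_p\le\lambda\tau_p$ for retained gaps, so that evicting $p$ mid-gap costs at most $\lambda$ times the residual occupancy, which is in turn bounded by $\lambda\cdot$(excess occupancy). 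Markov-type or Poisson concentration bounds on the occupancy in excess of $k/\alpha$ then give the result. Optimizing $\alpha$ and the eviction rule should produce the constant $15.75$.

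\textbf{Main obstacle.} The hardest step is bounding the overflow cost. Occupancy is a sum of dependent-in-time indicator processes, and the charging must convert the excess slot-time into weighted eviction cost without losing a $\log k$ factor. My first attempt would be to pair each forced eviction with the residual predicted gap of the evicted page. Under the eviction rule, the evicted page maximizes residual$/\tau_p$, so the total weighted cost of forced evictions is at most $\lambda$ times the integral of the overflow. Taking expectations via PASTA and linearity over the independent pages would then bound this by a constant times $\lambda k$. Since $\lambda k\le$ LP cost plus a constant, this closes the argument.
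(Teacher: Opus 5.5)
\textbf{The lower bound is too weak.} Your convex program is a valid lower bound, but it is exactly the paper's optional-admission relaxation, and under mandatory admission it cannot support any constant. With $f_p:=1-x_p$ you have $c_p(x_p)=1-\Phi(f_p)$. So $\sum_p c_p(x_p)\le k$ is the same as $\sum_p\Phi(f_p)\ge r$, and your program is $\lopt(S,r)$.

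This value can be $0$ while $\rate(\opt)>0$. Take two pages with $\mu_a=\mu_b=1$, $w_a=0$, $w_b=1$, and $k=1$. Setting $x_b=1$, $x_a=0$ is feasible with value $0$. Yet every mandatory-admission policy holds only the last requested page, so it misses $b$ at every $b$-request that follows an $a$-request, i.e.\ at rate $1/2$. Hence no upper-bound argument, however sharp, can give $\rate(\alg)\le 15.75\cdot\mathrm{LP}$.

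The missing idea is the ghost contraction. Merge a set $G$ of $r$ pages into one zero-weight page of rate $\mu_G$. In the resulting one-hole instance, ghost misses and normal-page misses alternate, which gives the balance constraint $\mu_g f_g=\sum_{p\notin G}\mu_p f_p$. This yields $\lghost(S,G)$ and the bound $\lman=\max\{\lopt,\max_{G}\lghost\}$. The contracted residual component's certificate is charged to $\lghost$, via the ghost--page swap and prefix candidates. The simulation cost $\sum_{p\in G}q_p$ and the ordinary components are charged to $\lopt$.

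\textbf{Two upper-bound steps are false.} First, shrinking the occupancy budget from $k$ to $k/\alpha$ raises the hole requirement from $r$ to $n-k/\alpha$. This is not a constant-factor loss when $r\ll k$. In a one-hole instance with $n$ pages of equal intrinsic cost $Q$, $\lopt(S,1)=\Theta(Q/\log n)$. With $\alpha=2$, the optimum instead needs average coverage $\Phi(f_p)\approx 1/2$ (uniform spreading is optimal by concavity of $\Phi$), so $f_p=\Theta(1)$ and the value is $\Theta(nQ)$. The affordable direction is to inflate hole coverage by a constant, via $\Phi(6x)\ge2\Phi(x)$, or to lower the lateness threshold to $\theta=c\log m$ with $c<1$.

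Second, $\lambda k\le \mathrm{LP}+O(1)$ fails. In the same instance $\lambda\approx\mathrm{LP}\approx Q/\log n$, so $\lambda k\approx n\cdot\mathrm{LP}$. The relaxation pays about $\lambda$ per unit of \emph{hole} coverage on expensive pages, i.e.\ $\Theta(\lambda r)$, not $\lambda$ per unit of occupancy.

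Beyond these, the policy you describe is not \Cref{alg:mandatory-final}. The overflow charging, including cascades when a forcibly evicted page returns, is asserted rather than proved, and nothing in the sketch produces the constant $15.75$. The paper uses $\lambda$ only to partition pages into one-hole components plus a contracted residual. In each component it runs a residual-time threshold rule with $\theta=\tfrac34\log m$ on a cheapest prefix, with the two highest-rate pages as buffers. About $m^{1/4}$ anchors then qualify at each search, and the all-fail event has probability $(1-m^{-3/4})^{m-2}$.
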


Observe that a caching instance can be specified by the order of requests alone, since fetching costs do not depend on the time elapsed between requests. We nevertheless use the continuous-time Poisson model because it allows our algorithm to cleanly exploit the independence of the arrival processes for different pages. 
Moreover, the Poisson model is equivalent to the static distribution model at the level of request order: after discarding arrival times, successive requests are independent draws from a fixed distribution, with page \(p\) chosen with probability proportional to $\mu_p$ (see Appendix~\ref{app:discussion}). Although our algorithm assumes this distribution is known, it uses only parameters that can be efficiently estimated from observations. 

The Poisson model also naturally captures settings in which requests for different pages originate from separate, independent sources. This interpretation motivates our per-request prediction model: when issuing a request, a source may be able to predict the time of its own next request without knowing which requests other sources will issue in the meantime. Our model requires only this prediction of the requested page's next arrival time. By contrast, prior work showed that constant-competitive guarantees for adversarial weighted caching require a much stronger prediction model that reveals, at each request to a page $p$, the entire sequence of requests preceding $p$'s next arrival~\citep{jiang2020}.




Unlike in unweighted caching, it is not immediately clear how to use per-request predictions in the stochastic weighted setting. Should the algorithm evict the page whose next request is furthest in the future, or measure this delay relative to the page's expected interarrival time? How should the eviction decision depend on the page's weight? We give a high-level overview here, leaving the precise rules and treatment of corner cases to the formal description. At the heart of our algorithm is a subroutine called \emph{Guarded Eviction}, which handles the special case $n=k+1$, where the cache can hold all but one page. The subroutine uses a weight threshold and a lateness threshold. Pages whose weights exceed the weight threshold remain in the cache regardless of their next arrival times. Among the remaining pages, a page is evicted regardless of its weight if the time until its next request exceeds its expected interarrival time by more than the lateness factor. To extend this approach to arbitrary $n$, our algorithm solves a relaxation that estimates how often each page is evicted by the hindsight optimum. It uses the resulting solution to partition the cache and the page universe into components, each of which can independently run Guarded Eviction. In the next subsection, we explain why this approach achieves a constant competitive ratio, and discuss the main challenges in the analysis.

We turn our attention next to the more realistic setting of erroneous predictions. Following the learning-augmented caching literature, we measure the error of each prediction as the absolute difference between the predicted and actual next arrival times. We allow an adversary with full knowledge of the realized request sequence to choose predictions arbitrarily. Because requests arrive in continuous time and our performance guarantees hold in expectation over the arrival process, we express the adversary's error budget as an expected error rate $\error$: over any long-enough time horizon $T$, the sum of prediction errors, taken in expectation over the realized request sequence, is at most $\error T$. Subject to this constraint, the adversary may allocate error arbitrarily across realized request sequences and individual requests.

Our algorithm's parameters, partitions, and thresholds depend only on the known page weights and arrival rates, so the same algorithm applies without modification. We show that its performance degrades gracefully as prediction error increases and obtain the following result.


\begin{restatable}{theorem}{erroneouspredictiontheorem}
\label{thm:erroneous}
    Any instance of $\pnameit$ admits an efficient deterministic online caching policy (\Cref{alg:error-final}) that is
    $O(\min\{1+\error,\log k\})$-competitive when supplied with per-request predictions of total error $\error$.
\end{restatable}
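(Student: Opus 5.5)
The plan is to prove \Cref{thm:erroneous} by robustifying the policy of \Cref{thm:perfect}. We take the perfect-prediction policy (partition plus Guarded Eviction), run it on the given, possibly erroneous, predictions, and combine it with a standard $O(\log k)$-competitive weighted caching algorithm, e.g.\ the primal-dual algorithm of \cite{bansal2012} or a deterministic analogue. The combination uses a cost-based switching scheme in the style of the robustification combiners from the learning-augmented caching literature. The bound then splits into two independent claims. First, the prediction-following policy has long-run cost rate at most $O(1+\error)\cdot\opt$. Second, the combiner loses only a constant factor relative to the better of its two components. For the second claim we simulate both policies and follow the one with smaller accumulated cost, switching only when its cost exceeds the other's by a factor of $2$ (doubling). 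Each switch requires reconfiguring the cache at cost at most the cost of the simulated target's configuration changes, so the standard geometric argument gives total cost $O(\min\{\cost_1,\cost_2\})$. Taking long-run expectations yields $O(\min\{1+\error,\log k\})$.

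The main work is the first claim. We reuse the ex-ante relaxation lower bound on $\opt$ that underlies \Cref{thm:perfect}. In that analysis, the algorithm's cost is charged to the relaxation variables through each component's Guarded Eviction: every eviction made because a page's next request is ``late'' (beyond its lateness threshold times $1/\mu_p$) is paid for by the page's expected fetch rate in the relaxation. With erroneous predictions, we classify each eviction decision as either \emph{correct} (it would also have been taken with true next arrival times) or \emph{erroneous}. Correct decisions are charged exactly as in the perfect case, giving the constant $15.75$-type term. For an erroneous decision, the page evicted, or the page kept instead, has prediction error at least comparable to the gap between its threshold and its true lateness. The wasted fetch costs at most the relevant page weight $w_p$.

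The step I expect to be hardest is converting these wasted weights into the unweighted additive error budget $\error T$. Prediction error is measured in time, while cost is in weight. The natural fix is to normalize: the statement says $\error$ aggregates each page's average prediction error, so I would read $\error$ as error measured in units of $1/\mu_p$ and weighted so that an error of order $1/\mu_p$ on page $p$ ``pays'' for a fetch of cost comparable to that page's contribution to $\opt$. Concretely, I would show that a misclassified eviction of $p$ requires error at least $c/\mu_p$ for a constant $c$ depending on the lateness factor. I would also show that the expected number of such mistakes per unit time is at most $O(\mu_p\cdot\text{err}_p)$. Since each mistake costs at most the weight charged to that component in the relaxation, summing over components (which are independent and disjoint by the partition) bounds the total erroneous cost by $O(\error)\cdot\opt$.

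If the normalization does not align directly with the relaxation's per-page charge, I would fall back on bounding each component's erroneous cost by the component's optimal fractional cost times its error. Because the components run independently, this per-component bound composes additively across components.
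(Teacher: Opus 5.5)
\paragraph{Main gap: charging failed searches to prediction error.}
Your architecture matches the paper's. A constant-loss combiner supplies robustness. Consistency comes from running the perfect-prediction decomposition and proving a per-component bound of $O(1+\error)$ times the component's certificate. The gap is in the one step that actually uses predictions: the analysis of \manpol{} under errors.

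Your key claim, that a misclassified decision requires error at least $c/\mu_p$, is false for a threshold rule. Take an anchor whose true next arrival is slightly after $s+\theta/\mu_p$ but whose prediction is slightly before it. This turns a successful search into a failed one, and so an extra buffer miss of cost $w_b$, with arbitrarily small error.

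Moreover, an early prediction stays in force for its whole error window, during which searches occur at rate up to $\mu_b$. Charging each failed search entirely to one underpredicted anchor therefore charges that prediction $q_b$ per unit of error. That is a factor $\Theta(\log m)$ more than the certificate scale $Q/\log m$ can absorb.

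The missing idea is that a failed search typically requires \emph{many} anchors to be wrong at once:
\begin{itemize}
\item The paper lowers the threshold to $\theta=\frac14\log m$ so that, at a typical buffer arrival, at least $2\theta$ anchors have true gaps at least $2\theta/\mu_p$. If none of them qualifies, each is underpredicted by more than $\theta/\mu_p$.
\item It splits $w_b$ into $2\theta$ equal shares and spreads each share over $[s,s+\theta/\mu_b]$.
\item It shows that all charges to a fixed prediction lie in $[t'_p-\theta/\mu_p,\,t_p-\theta/\mu_p]$, an interval whose length is exactly the underprediction.
\item It bounds the charge density by $4q_b/\theta$ via a second typicality condition: at most $8\theta$ arrivals of $b$ within distance $\theta/\mu_b$.
\item Atypical arrivals are controlled by Chernoff and Poisson tail bounds.
\end{itemize}
False positives also need proportional rather than all-or-nothing charging: $w_b+w_p\le\frac{2Q}{\theta}\bigl((t-s)+(t'-t)\bigr)$. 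No reinterpretation or reweighting of $\error$ is needed, since the conversion rate $Q/\theta$ from error time to cost is uniform across the pages of a component.

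\paragraph{Secondary gaps.}
The theorem asserts a \emph{deterministic} policy, but the primal-dual $O(\log k)$ algorithm is randomized. No deterministic algorithm beats ratio $k$ for adversarial weighted caching, so there is no ``deterministic analogue'' to plug in. You need a prediction-free fallback that exploits the stochastic model. The paper uses \robustname: contract the $r$ pages of smallest $q_p$ into a ghost, swap it with the best remaining page, and pin the rest. Its $4\log(2(k+1))$ guarantee is proved from $\lopt$ and $\lghost$.

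The residual component runs on the contracted instance, so you must also define ghost predictions from the original ones and bound their error. The paper uses $t'_g=\max\{s,\min_{p\in G}t'_p\}$ and shows $\epsilon_g\le\sum_{p\in G}\epsilon_p$.

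Finally, in your combiner the reconfiguration cost is bounded by \emph{both} simulated policies' accumulated costs plus a constant, not by the target's alone. The doubling argument survives this correction.
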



Crucially, our policy does not need to know the prediction error in advance. Furthermore, it satisfies all of the desired properties of learning-augmented algorithms: its competitive ratio is constant for accurate predictions (consistency), grows
at most linearly with their error (smoothness), and remains logarithmic in the cache size regardless of their quality (robustness).

Finally, one might wonder whether stochasticity coupled with per-request predictions is so powerful as to leave a vanishing gap---$1+o(1)$---between the online algorithm and the offline optimum. We show that this is not the case: every online algorithm has a competitive ratio bounded away from $1$ by an absolute constant. See Appendix~\ref{app:lb}.

\begin{restatable}{theorem}{lowerbound}
\label{thm:lowerbound}
    There exists an instance of $\pnameit$ such that the competitive ratio of any online caching policy with perfect per-request predictions is at least $1.001$. 
\end{restatable}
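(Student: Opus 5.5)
The plan is to build a small explicit instance and compare the online and offline optima directly. The instance uses $n=k+1$ with $k=1$: two pages, so the cache holds one page and every request to the other page forces a swap. Because there is no choice of \emph{which} page to evict, the online policy's only freedom is \emph{whether} to admit the requested page. I would therefore work in a model where the policy may serve a request without caching it. Without bypassing, $k=1$ is degenerate: every policy pays for every request to the uncached page, so online equals offline. If the paper's model forbids bypassing, I would use $k=2$, $n=3$ instead, with one very heavy page that is effectively always cached, which reduces to the same two-page decision.

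\textbf{Step 1 (instance).} Take pages $a,b$ with rates $\mu_a=\mu_b=1$ and weights $w_a=1$, $w_b=w$ for a constant $w$ to be tuned. In the $k=2$ variant, take $\mu_a\neq\mu_b$ and equal weights; the point is only to create an asymmetry.

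\textbf{Step 2 (structure of the online optimum).} Under Poisson arrivals with perfect predictions, the state at a decision epoch is: which pages are cached, the revealed next-arrival time of the requested page, and the already-revealed next-arrival times of the other pages. By memorylessness, the unrevealed parts of the future are independent of the past. The long-run average-cost online optimum is then the solution of an average-cost MDP whose state includes one real number, namely the predicted gap of the current page relative to the others. I would restrict to threshold policies and argue optimality by a standard exchange argument. I would then compute the online optimal cost rate $c_{\mathrm{on}}$ in closed form, or bound it from below by a one-step lookahead relaxation.

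\textbf{Step 3 (offline optimum).} The offline optimum knows the full sequence, so it can solve each maximal run structure exactly, for instance by dynamic programming over the alternation pattern of requests. Its cost rate $c_{\mathrm{off}}$ is computable as an expectation over run lengths, which are geometric.

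\textbf{Step 4 (the gap).} Exhibit a configuration where the online policy is genuinely uncertain. Suppose $a$ is requested and its next request is at time $t$. Whether keeping $a$ pays off depends on how many $b$-requests arrive before $t$. The prediction reveals only the next request of $b$, not the ones after it. So the online policy faces a nondegenerate random variable, such as the number of $b$-arrivals in a window, that the offline optimum sees exactly. Conditioned on the revealed information, each online action incurs an expected excess cost of at least $\delta>0$ over the hindsight-best action, on an event of positive probability $q$. Summing over epochs gives $c_{\mathrm{on}}\ge c_{\mathrm{off}}+q\delta\cdot(\text{epoch rate})$, and therefore a ratio of at least $1.001$ for a suitable $w$.

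\textbf{Main obstacle.} The hardest part is making the per-epoch excess argument rigorous. Online decisions influence future states, so a local regret bound does not immediately lower-bound the long-run rate. I would handle this by showing that the hindsight-optimal decision at each epoch depends only on the window until the next request of the current page. Within that window, the offline cost decomposes additively. The long-run cost difference is then a sum of independent per-window regrets. The constant $1.001$ is deliberately weak, so I would finish with crude numeric estimates rather than an exact optimization.
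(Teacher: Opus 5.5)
There is a genuine gap, and it sits exactly at the model question you hedge on. The paper's lower bound for $\pnameit$ is for the standard mandatory-admission model; its proof uses that every boundary cache must contain the requested light page. Your main instance ($n=2$, $k=1$, with bypassing) therefore addresses a different problem. The bound does not transfer, because permitting bypass both enlarges the class of online policies and lowers the offline benchmark.

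Your fallback does not repair this. If the third page $c$ is so heavy that it is effectively always cached, then under mandatory admission every miss on the uncached light page must evict the other light page. That is precisely the degenerate $k=1$ situation you yourself observed, where online cost equals offline cost. With three pages and mandatory admission, the only substitute for bypassing is evicting $c$. So $c$ must be only \emph{moderately} heavy, so that this eviction is sometimes right and sometimes wrong.

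The paper takes $\mu_a=\mu_b=\mu_c=1$, $w_a=w_b=1$, $w_c=5/2$. At a light request $p$ (with $q$ the other light page), perfect predictions reveal $t_q<t_p<t_c$, hence $H=t_c-t_p$. They do not reveal whether a further $q$-request falls in $(t_p,t_c)$, an event of conditional probability $1-e^{-H}$. A policy that has already evicted $c$ loses at least $1/2$ if no such request occurs. A policy still holding $c$ loses at least $1/2$ if one does. Since $e^{-H}$ is uniform on $[0,1]$ across occurrences, every policy loses $\frac12\int_0^1\min\{v,1-v\}\,dv=1/8$ in expectation. Your diagnosis of the source of uncertainty, that only the next request of the other page is revealed, is the right one. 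It just has to be attached to a decision that actually exists in the mandatory model.

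Your resolution of the main obstacle is also not sound as stated. The offline cost does not decompose additively over windows ending at the current page's next request. The cache state at a window's end affects all later costs, so the hindsight-best action in a window is not a local quantity. Per-window regrets measured against it therefore need not add up to a lower bound on $\rate(\pol)-\rate(\opt)$.

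The paper avoids computing either optimum, so your Steps 2 and 3 (a continuous-state MDP with a threshold-optimality exchange argument, and a DP for the offline optimum) are unnecessary. Instead it fixes an arbitrary online policy $\pol$ and defines disjoint intervals from a specific request pattern. The retention rule depends only on the past, so conditioning on an interval leaks nothing about the unrevealed $q$-requests. It then builds a comparison policy that follows $\pol$ outside these intervals. Inside each interval, it plays a clairvoyant strategy that \emph{matches $\pol$'s cache at both endpoints}.

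This re-synchronization is what makes the savings additive, giving $\rate(\pol)-\rate(\opt)\ge\frac{23}{2304}\cdot\frac18$. Combined with $\rate(\opt)\le 1$ (pin $c$ and keep the most recently requested light page), this yields the ratio $1.001$. You need this boundary-matching replacement device, or an equivalent, in place of the window-decomposition claim.
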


\subsection{Technical Overview}

We measure the performance of a caching policy $\alg$ in terms of its asymptotic expected cost rate, defined as
\[
    \rate(\alg):=\limsup_{T\to\infty}
    \frac{\operatorname{E}[\operatorname{cost}_{\alg}(T)]}{T},
\]
where $\operatorname{cost}_{\alg}(T)$ is the total fetching cost incurred by $\alg$ on the (random) instance containing all requests up to some time $T\ge 0$. We say that $\alg$ is $c$-competitive if $\rate(\alg)\le c\cdot\rate(\opt)$, where $\opt$ denotes the offline optimum over the realized sequence.

\subsubsection*{Intuition behind the constant factor: the case of a single  hole}
To understand why the gap between online and offline policies is constant in our setting, consider the special case $n=k+1$. In this setting, a full cache leaves exactly one page outside; we call this page the \emph{hole}. For this thought experiment, we also assume \emph{optional admission}: a policy must pay the cost of a cache miss, but need not bring the requested page into the cache.

Without predictions, memorylessness implies that whenever page $p$ is the hole, the instantaneous expected cost rate is $\mu_p w_p$. Thus, the optimal online policy without predictions simply keeps a page minimizing this quantity permanently outside the cache, achieving a long-run cost rate of $\min_p \mu_p w_p$. We call $q_p :=w_p\mu_p$ the {\em intrinsic cost} of page $p$. This quantity provides a common scale for comparing pages with different fetching costs and arrival rates. 

How can perfect predictions improve on this baseline? One possibility is to exploit intervals during which a page remains unrequested for much longer than its mean interarrival time. Consider a \emph{Threshold Belady} rule with a parameter $\theta>0$. On a cache miss, after receiving the requested page's new prediction, the policy looks for a page $p$ whose next request is at least $\theta/\mu_p$ time units away and makes any such page the hole. If no page qualifies, it leaves the requested page outside the cache. Whenever a qualifying page $p$ is chosen, the interval until the next miss has length at least $\theta/\mu_p$ and ends with a cost of $w_p$. Amortizing this cost over the interval gives a rate of at most $w_p\mu_p/\theta = q_p/\theta \le \max_p q_p/\theta$. We would therefore like to choose $\theta$ as large as possible while keeping failures sufficiently rare. At any fixed time, the probability that page $p$ does not return for another $\theta/\mu_p$ units of time is exactly $e^{-\theta}$ since the next arrival time of $p$ is distributed as $\mathrm{Exp}(\mu_p)$. So choosing $\theta\approx\log n$ suffices to control the probability and the cost of failures.

This rough calculation already provides a hint that predictions can allow an online algorithm to overcome the logarithmic barrier for the competitive ratio. When all pages have comparable intrinsic costs, the above bound improves on the optimal cost without predictions by a logarithmic factor. With unequal intrinsic costs, however, the dependence on $\max_p q_p$ can be too expensive: even a long interval without requests may not justify leaving a page of high intrinsic cost outside the cache. A natural refinement is to protect the expensive pages and apply the Threshold Belady rule only among the cheaper ones.

In particular, order the pages in non-decreasing order of intrinsic costs: $q_1\le\cdots\le q_n$, and let $P_m$ denote the prefix of size $m$ over this ordering. Apply the Threshold Belady rule to $P_m$ for $m\ge2$ with $\theta_m=(1/2)\log m$, and keep the remaining $n-m$ pages permanently in cache. The preceding calculation gives a cost rate of $O(q_m/\theta_m)$. For $m=1$, simply leave page $1$ permanently outside, at rate $q_1$. Choosing the best prefix yields a cost rate of $O(\alpha)$, where
\[
    \alpha:=\min_{1\le m\le n}\frac{q_m}{\log(m+1)}.
\]
We call this policy {\em Guarded Eviction}.

Why can the offline optimum not do asymptotically better? For a hole interval to have amortized cost rate below $\alpha'$, the offline optimum needs to find a page $p$ that will remain unrequested for a $w_p/\alpha'$ amount of time. If $\alpha'$ is too small, the likelihood that such a page exists under the stochastic arrival process will be small. For example, if {$\alpha>0$ and $\alpha'=\alpha/2$, then at any fixed time the probability that page $p$ will return after $w_p/\alpha'$ units of time is at most $e^{-w_p\mu_p/\alpha'} = e^{-2q_p/\alpha}$. Since $\alpha\le q_p/\log(p+1)$, this probability is at most $e^{-2\log(p+1)}=1/(p+1)^2$. In other words, the probability that the offline optimum can find {\em any} page $p$ that incurs a cost rate below $\alpha/2$ until the next miss is at most
\[
    \sum_p (p+1)^{-2}<\frac{2}{3}.
\]
Therefore, this suggests that for a constant fraction of the misses, the optimum will incur a cost rate of $\Omega(\alpha)$---only a constant factor smaller than the online cost.

\subsubsection*{A formal lower bound on the optimum}

The preceding calculation does not yet constitute a proof: whether the offline optimum can find a page with a low amortized cost rate is correlated with the time until its next miss. One of our main technical contributions is a simple lower bound on the optimal cost rate expressed solely in terms of the pages' intrinsic costs. We obtain this bound through an \emph{ex-ante relaxation} of the policy's covering constraint. A feasible policy must ``cover'' the hole by keeping at least one page outside the cache at every point in time. The relaxation imposes this requirement only on average: the expected number of pages outside the cache, averaged over time, must be at least one. Thus, overlapping coverage at some times may compensate for uncovered periods at others.

Suppose a policy misses a fraction $f_p$ of page $p$'s requests. Each miss allows the page to have been outside the cache during at most the preceding interarrival gap.
Even with complete knowledge of the future, the most time the policy could obtain from these misses comes from selecting the longest $f_p$ fraction of its gaps. For the Poisson arrival process, these are precisely the gaps exceeding $\log(1/f_p)/\mu_p$, whose average length is therefore $(1+\log(1/f_p))/\mu_p$ due to memorylessness. Since the selected gaps occur at rate $\mu_pf_p$, they cover at most a
    \[
        \Phi(f_p):=f_p\bigl(1+\log(1/f_p)\bigr)
    \]
fraction of the timeline in expectation. The fractions $f_p$ must therefore satisfy $\sum_p \Phi(f_p)\ge 1$. Minimizing the policy's cost rate of $\sum_pq_pf_p$ subject to the constraint provides a lower bound.

Some algebra then shows that for every instance $(\mu_p,w_p)_{p\in[n]}$, the prefix benchmark $\alpha$ is at most a constant times the value of this relaxation. Together with the bound of $O(\alpha)$ on the cost rate of Guarded Eviction, this establishes the desired constant-factor gap.

\subsubsection*{The case of general $\bm{n}$}

The special case of $n=k+1$ is easy to generalize to arbitrary $n$ for the case of optional admission. Let $r=n-k$ denote the number of holes. We show that using the optimal solution to the ex ante relaxation, and in particular the covering constraint $\sum_p \Phi(f_p)\ge r$, we can greedily decompose the set of all pages into $r$ subsets in a manner that preserves the lower bound up to a constant factor. We then run the one-hole policy on each subset independently, generating all required holes. 

\subsubsection*{Mandatory admission} 

We now consider the standard caching model where requested pages must be brought into cache. We follow the same high-level
approach as for optional admission, but must now account for the additional evictions forced by admitting every requested page. This entails strengthening the lower bound, adapting the guarded eviction policy and modifying the partition. In particular:
\begin{enumerate}
    \item The lower bound described above can substantially underestimate the optimal cost. For example, $r$ pages of weight zero can supply all $r$ units of coverage at no cost in the relaxation. Under mandatory admission, however, their requests force other pages out of the cache, potentially generating positive cost when those pages return. To capture this interaction, we introduce an auxiliary \emph{ghost page} representing the combined request stream of a selected set of $r$ pages. We then impose an additional balance constraint between the ghost's miss rate and the aggregate miss rate of the remaining pages; this provides a strengthened lower bound. 
    \item Recall that in the one-hole special case, Guarded Eviction responds to a miss by evicting a page $p$ whose next request is at least $\theta/\mu_p$ time units away. Under optional admission, if the only qualifying page is the requested page, we simply leave it out of the cache. Under mandatory admission, we instead use a pair of \emph{buffer pages} as default eviction choices. We show that a careful choice of these pages bounds the additional eviction cost.
    \item Finally, the partition into single-hole instances used for optional admission may contain singleton components, whose pages are permanently left out of the cache. Under mandatory admission, we combine all these cheap singleton pages into a single ghost page and place it together with other pages in a special component with one hole. We then charge the eviction cost incurred in this component to the optimum through the balance constraint introduced in our strengthened lower bound above.
\end{enumerate}

\subsubsection*{Erroneous predictions}


Finally, we relax the assumption of perfect predictions. We first separate robustness from the task of exploiting accurate predictions. An $O(\log k)$ guarantee is already possible without predictions \citep{bansal2012}. A standard black-box combination \citep{blum97} of two caching policies achieves the better of their guarantees up to a constant factor. Combining a prediction-based policy with this prediction-free alternative therefore provides robustness automatically. It remains to design a policy whose competitive ratio grows as $O(1+\mathcal E)$ with the prediction error.

Crucially, the lower bounds and partitions developed above depend only on the stochasticity of the model, so they remain valid under erroneous predictions. 
The only step that uses predictions is the implementation of the component policies. We can therefore retain the decomposition and focus on how prediction errors affect the guarded eviction that implements the one-hole policy over each component. There are two ways an incorrect prediction can hurt the policy:
\begin{enumerate}
    \item First, a page predicted to return late may arrive earlier than expected, shortening the miss-free interval obtained by evicting it. This loss is directly bounded by the prediction error: the promised waiting time is at most the actual waiting time plus the amount of overprediction.


    \item Second, a page predicted to return early may actually remain unrequested for a long time, causing the policy to overlook a useful eviction opportunity. The key observation is that 
    the stochastic model typically provides many pages with long actual gaps. For a search to overlook all of them, many predictions must therefore be substantially wrong at the same time. This allows us to distribute the cost of a failed search among many errors. Although the same predictions may cause repeated failures, the stochastic model also controls how frequently these failures occur, allowing us to bound their total cost by the total prediction error.
\end{enumerate}

\subsection{Roadmap} 
We formally define $\pname$, our asymptotic notion of competitive ratio, and the prediction model in \Cref{sec:preliminaries}. In \Cref{sec:optional}, we develop an online algorithm for the (easier) optional admission model with perfect predictions, proving a simpler version of \Cref{thm:perfect}. This setting allows us to convey the key technical components of our main result without the complexities of mandatory admission. In \Cref{sec:mandatory} we consider the standard (mandatory admission) caching model with perfect per-request predictions and give the proof of \Cref{thm:perfect}. Finally, we discuss erroneous predictions and prove \Cref{thm:erroneous} in \Cref{sec:predictions}. 


\subsection{Related Work}\label{sec:related}

The competitive analysis of caching was pioneered by \cite{sleator1985}, who established the optimal deterministic competitive ratio of $k$ for unweighted paging, achieved by policies such as LRU. For randomized algorithms against an oblivious adversary, \cite{fiat1991} gave a $2H_k$-competitive marking algorithm and an $H_k$ lower bound, subsequently matched by \cite{mcgeoch1991}. For weighted paging, \cite{chrobak1991} obtained an optimal $k$-competitive deterministic algorithm, and \cite{young1994} developed the primal-dual GreedyDual algorithm with the same guarantee. The optimal randomized dependence on the cache size was settled by \cite{bansal2012}, who obtained an $O(\log k)$-competitive algorithm. Beyond its many practical applications, weighted caching is a special case of the $k$-server problem~\citep{manasse1990}, through its representation on a weighted star, and of metrical task systems~\citep{borodin1992}, with cache configurations as states. 

\paragraph{Stochastic Caching.}
Earlier approaches to capturing structure in caching inputs include the access-graph model of \cite{borodin1995}, modeling locality of reference, and the framework of \cite{koutsoupias2000}, restricting the input distributions available to the adversary. Probabilistic models of page requests have been studied since the work of \cite{aho1971}. 

Beyond independent requests, \cite{karlin2000} study paging when requests follow a known Markov chain. Against the optimal online policy with the same distributional knowledge, \cite{lund1999} give a $4$-competitive dominating-distribution algorithm; \cite{pabbaraju2025} sharpen its analysis to obtain a factor of $2$ for Markov paging.

\cite{fairstein2024} study weighted paging with advance knowledge of independent, potentially non-identical request distributions. They obtain constant-factor guarantees against the optimal online policy, both with full distributional information and under a limited-horizon model. Their work is closely related in its use of probabilistic information, but differs from ours in both the prediction model and the benchmark. 

Against the clairvoyant offline optimum, however, even independent uniform requests yield an $\Omega(\log k)$ lower bound~\citep{fiat1991}. Recently, \cite{kumar2026} introduce \emph{subset entropy}, a distribution-specific parameter that yields finer competitive guarantees for unweighted caching. Our results pursue a complementary direction: we allow arbitrary request frequencies and fetching costs, and use predictions of realized return times to obtain a constant competitive ratio against the offline optimum, including for distributions that are hard without predictions. 

\paragraph{Adversarial Caching with Per-Request Predictions.}
\cite{lykouris2018} introduced learning-augmented unweighted caching with predictions of each requested page's next occurrence, obtaining guarantees that improve with prediction accuracy while retaining $O(\log k)$ robustness. \cite{rohatgi2020} improved the dependence on prediction error, and \cite{wei2020} showed that following predicted Belady evictions and combining this policy with a robust online algorithm gives improved guarantees through a simpler algorithm. \cite{skachkov2025} further sharpen the analysis of this prediction-following policy. Recent work further improves the consistency--robustness trade-off \citep{chen2025guard,chen2026paging}. These results concern unweighted caching, where perfect per-request predictions suffice to implement Belady's optimal eviction rule~\citep{belady1966}.

For weighted caching, even perfect per-request predictions leave competitive-ratio lower bounds of $\Omega(k)$ for deterministic algorithms and $\Omega(\log k)$ for randomized algorithms~\citep{jiang2020}. To overcome this barrier, \cite{jiang2020} introduce \emph{strong per-request predictions}, which reveal the entire request sequence up to the next occurrence of the current page, and obtain a $2$-competitive algorithm. Using ordinary return-time predictions, \cite{bansal2022} instead obtain tight guarantees in terms of the number $\ell$ of weight classes: an $\ell$-competitive deterministic algorithm and an $O(\log\ell)$-competitive randomized algorithm under perfect predictions, together with guarantees that degrade gracefully with prediction error. Thus, these predictions can remove dependence on the cache size when the weights have limited diversity. Our stochastic assumption permits a constant guarantee independent of both the cache size and the number of weight classes.


\paragraph{Future Information in Stochastic Optimization.}
Another work close in spirit to ours is \cite{christou2026}, who study online batching with finite lookahead. They show that stochastic arrivals alone and perfect finite lookahead alone each leave a competitive-ratio lower bound bounded away from $1$. Combining the two, however, yields a competitive ratio that approaches $1$ exponentially fast as the lookahead grows. Our work establishes a similar benefit from combining stochasticity and predictions in weighted caching: although either assumption alone leaves a logarithmic barrier, together they yield a constant competitive ratio using per-request predictions.

\section{Preliminaries}\label{sec:preliminaries}

An instance of Stochastic Weighted Caching ($\pname$) consists of a set of $n\ge 1$ pages and a cache of size $k\le n$. Each page $p\in [n]$ has a fetching cost $w_p\ge 0$ and a (Poisson) arrival rate $\mu_p>0$. Starting from time $t=0$, pages arrive independently based on Poisson processes with their respective rates; that is, the next arrival time of page $p\in [n]$ after time $t$ is $t'=t+X_p$ where $X_p\sim\mathrm{Exp}(\mu_p)$ independently of other pages and the history up to time $t$ (memorylessness). We also use $r:=n-k\ge 0$ to denote the number of pages that do not fit in the cache. Finally, we will refer to the product $q_p:=\mu_p w_p$ as the intrinsic cost of page $p$. All logarithms are natural.

\subsubsection*{Caching Policies} 
A caching policy maintains at every time $t$ a set of cached pages $C(t)\subseteq [n]$\footnote{We allow an arbitrary initial cache configuration at $t=0$. Changing this configuration to another one incurs only a finite initialization cost and therefore does not affect the asymptotic cost rate defined below.} such that $|C(t)|\le k$. When a page $p$ arrives at time $t$, we say that it is a hit if $p\in C(t)$ and a miss if $p\notin C(t)$; upon a miss, the policy must pay the fetching cost $w_p$ of the page. Under \textit{mandatory admission}, a requested page must belong to the cache immediately after its request is served. Thus, if the cache is full and the request is a miss, the policy must evict another page. Under \textit{optional admission}, the policy may instead serve the request at cost $w_p$ without placing the requested page in the cache. As is standard in the caching literature, we will assume without loss that all policies are \textit{lazy}, meaning that they only evict a page $p$ upon a miss.

The following equivalent (standard) covering interpretation of caching will be useful throughout our work. We call a page outside the cache a \textit{hole}, and use $H(t):=[n]\setminus C(t)$ to denote the set of holes maintained by some policy at time $t$. At every point in time, the policy must maintain $r$ holes (i.e. $|H(t)|\ge r$) to ensure feasibility. 

\subsubsection*{Asymptotic Cost Rate}
Given a caching policy $\pol$, we use $\cost_\pol(T)$ to denote the total cost paid by it up to time $T\ge 0$; note that this is a random variable that depends both on the random sequence of page arrivals up to time $T$ as well as any internal randomness that $\pol$ might use. We evaluate the performance of a policy in terms of its average asymptotic cost rate, formally defined as
\[\rate(\pol):=\limsup_{T\to\infty}\frac{\expect [\cost_\pol(T)]}{T} = \limsup_{T\to\infty}\frac{\sum_{p\in [n]}w_p\cdot \expect [\text{\# of misses on $p$ before $T$}]}{T} .\]
The supremum is taken to ensure that this quantity is well-defined for all policies $\pol$. Since page arrival rates and fetching costs are finite, $\cost_\pol(T)$ grows linearly with $T$ for any lazy policy and therefore the rate is well-defined.

We use $\rate(\opt)$ to denote the rate of the hindsight optimal policy that, for every fixed $T$, observes realizations over the entire horizon before determining how to serve them.
We say that a caching policy $\pol$ is $c$-competitive for some $c\ge 1$ if $\rate(\pol)\le c\cdot \rate(\opt)$.

\subsubsection*{Per-Request Predictions and Error Model}
A (purely) online caching policy initially only knows the fetching costs and arrival rates of the pages, observes the requests as they arrive and must serve them immediately, making irrevocable caching decisions. In the per-request prediction model (PRP), whenever page $p$ gets requested for the $i$-th time at time $t_p(i)$, the policy is also given a \textit{prediction} of its next arrival time $t'_p(i+1)$.\footnote{We can assume without loss that at $t=0$ a prediction of each page's first arrival is also given to the policy. Otherwise, we can wait for a (finite expectation) initialization period until such predictions have been obtained, implementing an arbitrary caching policy in the meantime, without affecting the asymptotic cost rate.} We say that the predictions are \textit{perfect} if $t'_p(i+1)=t_p(i+1)$ always, i.e. the prediction always matches the true realization of the random variable $t_p(i+1)=t_p(i) + \mathrm{Exp}(\mu_p)$. Otherwise, we say that the predictions are \textit{erroneous}.

Under erroneous per-request predictions, the values $t'_p(i+1)$ are completely arbitrary. We adopt the \textit{almighty adversarial model} where an adversary knows in advance the algorithm's description and internal randomness, as well as the realized request sequence, and can use this information to modify the actual arrival times $t_p(i)$ in an arbitrary manner to obtain the predictions $t'_p(i)$. For any such mapping, we define the average asymptotic prediction error of page $p$ as
\[
    \epsilon_p
    := \limsup_{T\to \infty}\frac{\expect \big[
    \sum_{i: \; t_p(i)\le T}
    \left|t_p(i)-t'_p(i)\right|\big]}{T},
\]
and the aggregate prediction error as $\error :=\sum_{p\in [n]}\epsilon_p$. Importantly, the policies we design do not require advance knowledge of the error $\error$, nor the process that generates it, but their performance automatically scales with it.

\section{The Optional Admission Setting with Perfect Predictions}\label{sec:optional}

We begin with the optional admission variant of $\pname$ under perfect per-request predictions. Recall that unlike mandatory admission, the optional admission model allows a missed request to be served from outside the cache. Our aim is to develop the key intuition behind our main result (\Cref{thm:perfect}) without the additional technical complications of mandatory admission. To that end, we emphasize a modular analysis and clarity of exposition over optimizing the competitive ratio we achieve. Our main result is the following:

\begin{theorem}\label{thm:const_opt}
    Under optional admission, any instance of $\pnameit$ admits an efficient deterministic online caching policy (\Cref{alg:optional}) that is $42$-competitive when supplied with perfect per-request predictions.
\end{theorem}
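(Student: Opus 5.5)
The proof has three steps: a lower bound on $\rate(\opt)$ via the ex-ante relaxation, an upper bound on Guarded Eviction for a single hole, and a greedy partition into $r$ one-hole components that preserves the lower bound up to a constant.

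\emph{Step 1 (lower bound).} Fix any offline policy. Let $f_p$ be the long-run fraction of page $p$'s requests that it misses; its cost rate is then $\sum_p q_p f_p$. Under optional admission, page $p$ can be a hole during an inter-arrival gap only if the request ending that gap is a miss. So the expected fraction of time that $p$ is a hole is at most the expected total length of the longest $f_p$-fraction of its gaps. For $\mathrm{Exp}(\mu_p)$ gaps, these are the gaps longer than $\log(1/f_p)/\mu_p$. They occur at rate $\mu_p f_p$ and have mean length $(1+\log(1/f_p))/\mu_p$, so they cover a $\Phi(f_p)$ fraction of the timeline. I would make this rigorous through a renewal-reward or ergodic average over the horizon $T$: within any realization, selecting the longest gaps dominates any other selection of the same number of gaps, and concavity of $\Phi$ lets me pass from realized fractions to expectations via Jensen. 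Since $|H(t)|\ge r$ at all times, we get the relaxation
\[
\mathrm{LP}(r):=\min\Big\{\textstyle\sum_p q_pf_p:\ \sum_p\Phi(f_p)\ge r,\ f_p\in[0,1]\Big\},
\]
which satisfies $\rate(\opt)\ge \mathrm{LP}(r)$.

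\emph{Step 2 (one hole).} For a set $S$ of pages with one hole, sort $S$ by intrinsic cost and set $\alpha_S=\min_m q_m/\log(m+1)$. I would show that Guarded Eviction on the best prefix $P_m$ with $\theta_m=\tfrac12\log m$ has rate $O(\alpha_S)$. Whenever some page qualifies, the interval until the next miss has length at least $\theta_m/\mu_p$ and ends with a payment of $w_p$, giving amortized rate at most $q_p/\theta_m\le q_m/\theta_m$. When no page qualifies, the policy leaves the requested page out. By memorylessness and independence, given the history, the probability that no page qualifies is at most $(1-e^{-\theta_m})^m\le e^{-\sqrt m}$, and each such failure costs a miss of rate at most $q_m$ per expected gap. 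A renewal argument over miss epochs then bounds the rate by $O(q_m/\log m)$. I would also show the matching lower bound $\mathrm{LP}^S(1)\ge c\,\alpha_S$ by Lagrangian duality: with multiplier $\lambda$, the optimal $f_p$ is of the form $\min\{1,e^{-q_p/\lambda}\}$, so feasibility forces $\sum_{p} e^{-q_p/\lambda}(1+q_p/\lambda)\gtrsim 1$. Taking $\lambda\approx\alpha_S/2$ makes this sum at most $\sum_p (p+1)^{-2}(1+2\log(p+1))$, which falls short of $1$ only after a constant adjustment of $\lambda$. Hence $\lambda=\Omega(\alpha_S)$ and $\mathrm{LP}\ge\lambda\cdot(\text{coverage})=\Omega(\alpha_S)$.

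\emph{Step 3 (partition).} Take an optimal (or near-optimal) solution $f^*$ to $\mathrm{LP}(r)$. Scan the pages in increasing $q_p$ and greedily cut the sequence into consecutive groups $S_1,\dots,S_r$, each collecting coverage $\sum_{p\in S_j}\Phi(f^*_p)\ge 1$. Since each single page contributes at most $1$, this overshoots by at most a factor of $2$, and I would halve the coverage targets to ensure at least $r$ groups. Pages left over after the $r$ groups are kept permanently in cache. Each group then has $\mathrm{LP}^{S_j}(1)\le\sum_{S_j}q_pf^*_p$ up to constants, so summing gives $\sum_j\alpha_{S_j}=O(\mathrm{LP}(r))$. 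Finally, run Step 2 independently on each group.

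\emph{Main obstacle.} The hardest part is making Step 1 rigorous, because the policy's miss fractions are correlated with realized gap lengths and with $T$; I would handle this with a per-realization "longest gaps" domination followed by concavity. The constant $42$ should come out of tracking the constants in Steps 2 and 3.
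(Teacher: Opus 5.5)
\textbf{Where Step 2 breaks.} Your three-step plan is the paper's: the relaxation lower bound, Threshold Belady on the best intrinsic-cost prefix, and greedy grouping of the KKT coverages $u_p=\Phi(e^{-q_p/\lambda})$ into groups of coverage at least $1/2$ followed by rescaling. However, two arguments in your Step 2 fail as written.

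First, the failure probability. Under perfect predictions the history at any time already contains every page's next arrival time, so ``given the history'' the event that no page qualifies is deterministic. If you drop predictions from the conditioning, miss epochs are no longer stopping times, because the current hole was chosen using future arrival times. For the same reason miss epochs are not regeneration points: the residuals of the non-hole pages are biased by the earlier, prediction-informed and arbitrary choice within the qualifying set. So your renewal argument over miss epochs is unavailable. The paper avoids all conditioning. It charges each unqualified miss on $p$ to the preceding miss, which is an arrival of $p$ at which nothing qualified, and then counts \emph{all} arrivals of $p$ with that property regardless of the algorithm's state. By independence of the page processes, these occur at rate exactly $\mu_p(1-e^{-\theta})^m$.

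Second, your closing inequality $\lopt(S,1)\ge\lambda\cdot(\text{coverage})$ goes the wrong way. At the KKT point, with $x_p=q_p/\lambda^\ast$, one has $\sum_pq_pf^\star_p=\lambda^\ast\sum_px_pe^{-x_p}=\lambda^\ast\bigl(1-\sum_pe^{-x_p}\bigr)<\lambda^\ast$. So a lower bound on the optimal multiplier gives nothing. What you need is weak duality at a chosen multiplier. Applying $\Phi(f)\le sf+e^{-s}$ with $s=q_p/\lambda$ gives $\sum_pq_pf_p\ge\lambda\bigl(1-\sum_pe^{-q_p/\lambda}\bigr)$ for every feasible $f$. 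At $\lambda=\alpha_S/2$ this is at least $\alpha_S/6$, since $\sum_{p\ge1}(p+1)^{-2}<2/3$.

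\textbf{The constant and two smaller points.} Even after these repairs, your bookkeeping does not give $42$. Measuring against $\alpha_S$ (with $\log(m+1)$) loses up to $3\log3/\log2\approx4.75$ relative to the prefix rate $3q_m/\log m$. The duality bound loses up to another $6$, and the partition a further $6$, which is far above $42$. The paper gets $42=7\cdot6$ by comparing the policy's own certificate $\min_jR_j$, with $R_1=q_1$ and $R_j=3q_j/\log j$, directly with an arbitrary feasible $f$ of cost $L=\sum_jq_jf_j$. If $\min_jR_j>7L$, then $q_j>7La_j$ with $a_j=\max\{1,\tfrac13\log j\}$, hence $\sum_ja_jf_j<1/7$. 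Then $\Phi(f_j)\le5.25a_jf_j+e^{-5.25a_j}$ gives $\sum_j\Phi(f_j)<\tfrac34+\tfrac14=1$, a contradiction. The factor $6$ comes from $\Phi(6x)\ge2\Phi(x)$ on $(0,1/6]$.

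Your claim that the greedy packing ``overshoots by at most a factor $2$'' is false with target $1/2$ in an arbitrary order: a group can reach nearly $3/2$. It holds only when you pack in decreasing coverage, so that every group is either a singleton or made of pages with coverage below $1/2$. Increasing $q_p$ gives that order for the exact KKT solution, but not for an arbitrary near-optimal one.

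In Step 1, Jensen on $\Phi$ needs a uniform law of large numbers for the realized top-$M$ gap sums, because that random concave function is correlated with $M_p$. The paper sidesteps this with the supporting-line bound $H_p(T)\le\tau_pM_p(T)+\sum_\ell(L_{p,\ell}-\tau_p)^{+}+B_p(T)$ for a deterministic $\tau_p$. Its expectation is $\tau_p\mathbb{E}[M_p(T)]+Te^{-\mu_p\tau_p}+o(T)$, and one then optimizes $\tau_p$.
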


Following our discussion from the introduction, we prove this result in three steps. First, we derive a lower bound on the cost rate of the offline optimum. Then, we design and analyze a constant-competitive policy for one-hole instances with $n=k+1$. Finally, we reduce the general case of arbitrary $n$ to a collection of one-hole instances. Together, these ingredients lead to the design of \Cref{alg:optional} and the proof of \Cref{thm:const_opt}.

\subsection{A Lower Bound on the Offline Optimal Rate}

We derive a lower bound on the asymptotic cost rate of any feasible policy, and hence of the hindsight optimum. Fix any set of pages $S\subseteq[n]$ that must be served using $k$ cache slots. At every point in time, at least $|S|-k$ pages must be holes. We relax this requirement by requiring that the number of holes is at least $|S|-k$ {\em on average} over time. Furthermore, for a page $p$ evicted between its $i$-th and $(i+1)$-th arrival, we will assume that the page is a hole for the entire duration of the interval $(t_p(i), t_p(i+1)]$. This improves the algorithm's coverage without changing its cost. 


With these changes, suppose that on average, a policy evicts a page $p\in S$ during an $f_p$-fraction of its interarrival intervals. Each eviction incurs cost $w_p$. So, no matter how these intervals are selected, the policy incurs cost at rate $(\mu_p f_p)\cdot w_p = f_p q_p$. On the other hand, the total hole coverage provided by $p$'s eviction is maximized by evicting $p$ during its longest interarrival intervals. Since these intervals are distributed as $\mathrm{Exp}(\mu_p)$, this corresponds to selecting those of length at least $\tau_p$, such that
\[ \mathrm{Pr}[\mathrm{Exp}(\mu_p)\ge \tau_p] = f_p \longrightarrow
    \tau_p=\frac{\ln(1/f_p)}{\mu_p}.
\]
Conditioned on exceeding the threshold $\tau_p$, memorylessness implies that the expected length of an interval is $\tau_p+1/\mu_p$. Since these intervals occur at rate $\mu_p f_p$ per unit time, page $p$ can contribute at most
\[
    \mu_p f_p \cdot \left(\tau_p + \frac{1}{\mu_p}\right) = f_p\left(1+\ln\frac{1}{f_p}\right)
\]
units of hole time per unit time whenever it is evicted on an $f_p$-fraction of its interarrival intervals.
Motivated by this observation, we define the (non-decreasing) \emph{coverage contribution} function
\[
\Phi(f):=
\begin{cases}
    f\left(1+\ln\frac{1}{f}\right), & f\in(0,1],\\
    0, & f=0,
\end{cases}
\]
mapping eviction probabilities $f\in[0,1]$ to hole coverage $\Phi(f)\in [0,1]$. 

We can now express the problem of maintaining $r$ holes on average from among pages in $S$ as that of finding fractions $f_p\in [0,1]$ that satisfy $\sum_{p\in S}\Phi(f_p)\ge r$, and minimize the induced cost rate $\sum_{p\in S}f_pq_p$. 
We obtain the following lower bound, formally proven in Appendix~\ref{appendix:optional_lb}.

\begin{lemma}[Optional Admission Lower Bound] \label{lemma:lb_opt}
The asymptotic cost rate of any optional admission policy for $\pnameit$ that serves pages in $S\subseteq[n]$ using a cache of size $k\ge 0$ is at least $\lopt(S,|S|-k)$, where
\[
\lopt(S,x):=
\min\left\{
    \sum_{p\in S} f_p q_p :
    f_p\in[0,1]\ \forall p\in S,\;\;
    \sum_{p\in S}\Phi(f_p)\ge x
\right\}.
\]
\end{lemma}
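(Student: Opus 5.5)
Fix any optional admission policy and a horizon $T$. The plan is to turn the policy's realized behavior on $[0,T]$ into a feasible point of the program defining $\lopt(S,|S|-k)$, whose objective is at most the policy's cost rate up to vanishing terms. Call an interarrival interval $(t_p(i),t_p(i+1)]$ of page $p\in S$ \emph{charged} if the request at $t_p(i+1)$ is a miss. If $p$ is a hole at some time $t$ in that interval, then, since it stays outside the cache until it is fetched or requested, the next request $t_p(i+1)$ is a miss. Under optional admission, $p$ can also be a hole right after its own miss at $t_p(i)$. In either case $p$ is a hole at $t$ only if $t$ lies in a charged interval, up to the final incomplete interval straddling $T$. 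Since $|H(t)\cap S|\ge |S|-k$ at all times,
\[
(|S|-k)\,T \le \sum_{p\in S}\expect\Big[\sum_{i\ \text{charged}} |\text{interval}_i|\Big] + O(1).
\]
The $O(1)$ term comes from the boundary intervals, whose expected length is $1/\mu_p$ each.

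Define $f_p(T):=\expect[\#\text{misses on }p\text{ in }[0,T]]/(\mu_p T)$. The cost up to $T$ is then $\sum_p w_p\mu_p T f_p(T)=T\sum_p q_p f_p(T)$. We may assume $f_p(T)\le 1+o(1)$, since misses are at most arrivals, whose expected number is $\mu_p T$; truncating to $[0,1]$ costs only $o(1)$.

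The key step, and the main obstacle, is the coverage bound: for any (possibly clairvoyant, adaptively chosen) set of charged intervals whose expected number is $m=f\mu_pT$, the expected total length is at most $T\,\Phi(f)+o(T)$. The obstacle is that the selection may depend on the whole realization, so we cannot simply average. I would handle it with a pathwise exchange argument. Among all selections of a given number $m$ of gaps, the longest $m$ gaps maximize total length. Then I would pass to a fractional, expectation-level bound via a threshold. For any $\tau\ge 0$, every gap $g$ satisfies
\[
g\le \tau+(g-\tau)^+ ,
\]
so the selected length is at most $\tau\cdot(\#\text{selected})+\sum_{\text{all gaps}}(g-\tau)^+$. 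Taking expectations and using Wald's identity with about $\mu_pT$ gaps, together with $\expect[(\mathrm{Exp}(\mu_p)-\tau)^+]=e^{-\mu_p\tau}/\mu_p$, gives
\[
\expect[\text{length}]\le \tau\mu_pTf + Te^{-\mu_p\tau}+O(1).
\]
Choosing $\tau=\ln(1/f)/\mu_p$ yields $T f(1+\ln(1/f))+O(1)=T\Phi(f)+O(1)$. This linear-relaxation trick avoids any dependence issue, since it bounds each selected gap without caring how gaps were chosen.

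Combining the two bounds and dividing by $T$ gives
\[
\sum_{p\in S}\Phi(f_p(T))\ge |S|-k-o(1)
\]
with cost rate $\sum_p q_pf_p(T)$. To finish, take a subsequence $T_j\to\infty$ attaining the $\limsup$ of the cost rate along which $f_p(T_j)\to f_p^*$. Compactness gives this, because each $f_p(T)$ is bounded. By continuity of $\Phi$ on $[0,1]$, the limit point $f^*$ is feasible for $\lopt(S,|S|-k)$. The liminf of the cost along $T_j$ equals $\sum_p q_pf_p^*\ge\lopt(S,|S|-k)$, so the rate is at least this value.
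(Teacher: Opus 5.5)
Your proposal is correct and is essentially the paper's proof. Both arguments use the same ingredients: the pathwise bound $g\le\tau+(g-\tau)^+$ summed over \emph{all} interarrival gaps, which decouples the clairvoyant selection from the gap lengths; the per-gap expectation $e^{-\mu_p\tau}/\mu_p$ via renewal/Wald; the choice $\tau=\ln(1/f_p)/\mu_p$; and a compactness/subsequence argument to pass to the limit. The differences are cosmetic. You fix $\tau$ at each finite $T$, which is valid because the Wald bound $(\mu_pT+1)e^{-\mu_p\tau}/\mu_p$ is uniform in $\tau$, whereas the paper first extracts a convergent subsequence of miss rates and then optimizes $\tau_p$. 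The exchange argument you mention is not needed.
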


\subsection{Designing One-Hole Policies via Guarded Eviction}

Guided by the lower bound we just developed, we turn our attention to the special case of one-hole instances and prove a constant competitive ratio under perfect predictions.

\begin{lemma}[Optional Admission One-Hole Policy]\label{lemma:one-hole-opt}
    Under optional admission, given any instance of $\pnameit$ with pages $S$ and a cache of size $|S|-1$, and perfect per-request predictions, \Cref{alg:prefixguardedbelady} (Guarded Eviction) achieves a cost rate of at most $7\cdot \lopt(S,1)$.
\end{lemma}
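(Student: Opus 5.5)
The proof splits into two inequalities linked through the prefix benchmark $\alpha:=\min_{1\le m\le n}q_m/\ln(m+1)$, where pages of $S$ are ordered so that $q_1\le\cdots\le q_n$ and $n=|S|$. First, I will show that Guarded Eviction, run on the best prefix $P_m$ with threshold $\theta_m=\tfrac12\ln m$, has cost rate at most $c_1\alpha$. Second, I will show that $\alpha\le c_2\,\lopt(S,1)$. The target is $c_1c_2\le 7$.

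\textbf{Upper bound on the algorithm.} Fix the minimizing prefix $m$. If $m=1$, page $1$ stays permanently outside and the rate is $q_1=\alpha\ln 2$. For $m\ge2$, I analyse the process as a renewal process whose renewals are the miss times. At a miss, the algorithm knows every page's next arrival time exactly. If some page $p\in P_m$ has residual time at least $\theta/\mu_p$, it becomes the hole. The interval until the next miss then has length at least $\theta/\mu_p$, and the cost $w_p$ paid at its end gives an amortized rate of at most $q_p/\theta\le q_m/\theta$.

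Otherwise the search fails. The requested page stays out and the algorithm re-searches at that page's next arrival. At a miss time the residuals of the other pages are fresh exponentials by memorylessness, because their arrival processes are independent of the stopping time. So a search fails with probability at most $\prod_{p}(1-e^{-\theta})\le(1-1/\sqrt m)^{m-1}\approx e^{-\sqrt m}$. A failure costs $w_p$ with expected waiting time $1/\mu_p$, which is a rate of $q_p\le q_m$ during the failure.

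Combining these with the renewal-reward theorem should give a rate of $O(q_m/\theta_m)=O(q_m/\ln m)$, and hence $O(\alpha)$ after comparing $\ln m$ with $\ln(m+1)$. The delicate point is that the requested page's own residual is known, so failure events condition on the realized future. I expect to handle this by bounding the expected cost per cycle and the expected cycle length separately, rather than working with the conditional rate.

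\textbf{Lower bound on the relaxation.} This is the step I expect to be the main obstacle. Let $f$ be optimal for $\lopt(S,1)$ with value $L$, so $\sum_p f_pq_p=L$ and $\sum_p\Phi(f_p)\ge1$. For each $p$, the definition of $\alpha$ gives $q_p\ge\alpha\ln(p+1)$, hence $f_p\le L/(\alpha\ln(p+1))$. Suppose $\alpha> cL$. Then $f_p\le 1/(c\ln(p+1))$. Since $\Phi$ is concave and increasing, the coverage is at most
\[
\sum_p \Phi(f_p)\le\sum_p f_p\bigl(1+\ln(1/f_p)\bigr).
\]
I will split the pages into those with $f_p\ge (p+1)^{-2}$ and the rest. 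For the first group, $\ln(1/f_p)\le 2\ln(p+1)$, so $\Phi(f_p)\le f_p(1+2\ln(p+1))\le 3f_p\ln(p+1)/\ln 2\le 3f_pq_p/(\alpha\ln2)$. Summing, this group contributes at most $O(L/\alpha)=O(1/c)$. For the second group, $\Phi((p+1)^{-2})$ is summable, but its sum is not below $1$ without care. I will therefore sharpen the cutoff to $(p+1)^{-\beta}$, with $\beta$ tuned so that the summable tail is strictly below $1$, and pay a factor $\beta$ in the first group. This contradicts $\sum_p\Phi(f_p)\ge1$ when $c$ is a suitable constant, which yields $\alpha\le c_2L$.

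Finally, I would optimize $\theta$ and the cutoff exponent together to land at the constant $7$.
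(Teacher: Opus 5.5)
There are two genuine gaps: the upper-bound argument relies on a false independence claim, and the route through $\alpha$ cannot reach the constant $7$.

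\textbf{Upper bound.} Your claim that at a miss time the other pages' residuals are fresh exponentials, "because their arrival processes are independent of the stopping time", is false. A miss time is neither a stopping time of the arrival filtration nor independent of the other pages' processes. Which page becomes the hole, and hence when the next miss occurs, is decided from the predictions, i.e.\ from the future arrival times of \emph{all} pages. For example, suppose the search fails at time $s$. Then every page $p'$ that has not arrived before the next miss $t$ has residual less than $s+\theta/\mu_{p'}-t<\theta/\mu_{p'}$ at $t$, so it certainly fails to qualify again. Failures are therefore positively correlated, and the conditional per-search bound $(1-e^{-\theta})^{m-1}$ does not hold. Miss times are also not regeneration points, so a renewal--reward argument has nothing to stand on, whether or not you bound cycle cost and cycle length separately.

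The paper avoids conditioning on the algorithm altogether. Qualified misses are paid pathwise: each is preceded by its own disjoint hole interval of length $\ell\ge\theta/\mu_p$, and $w_p\le(q_p/\theta)\ell$. Each unqualified miss on $p$ is charged injectively to the preceding arrival of $p$ at which nothing qualified. At an \emph{arbitrary} arrival of $p$, all residuals are independent fresh exponentials, so such arrivals occur at rate $\mu_p(1-e^{-\theta})^{m}$. This is exactly Lemma~\ref{lemma:guarded-eviction}, which you can cite to get $\rate(\namepr(S))\le R:=\min_jR_j$, with $R_1=q_1$ and $R_j=3q_j/\log j$.

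\textbf{The constant.} Factoring through $\alpha=\min_m q_m/\log(m+1)$ cannot give $7$. Also, $\theta$ is not yours to tune: Algorithm~\ref{alg:prefixguardedbelady} fixes $\theta=\tfrac12\log j^\star$ and the certificates $R_j$. The two factors are extremal on different instances:
\begin{itemize}
\item With $q_1=\cdots=q_{20}=\alpha\log 21$, the algorithm picks $j^\star=1$ and has rate $q_1\approx3.04\,\alpha$, so any valid $c_1$ exceeds $3$.
\item With $q_m=\alpha\log(m+1)$ and $n\to\infty$, the optimal fractions are $f_m=(m+1)^{-\beta}$ with $\beta\approx2.76$, and $\lopt(S,1)\approx0.27\,\alpha$, so any valid $c_2$ exceeds $3.7$.
\end{itemize}
Hence $c_1c_2>11$ however you optimize the analysis. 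Your split bound on $\Phi$ is also looser than the tangent inequality $\Phi(f)\le sf+e^{-s}$; even with the best cutoff exponent it yields only $c_2\approx 8$ by itself.

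The missing idea is to compare the algorithm's own certificate with $L$ directly, using weights matched to it. If $R>7L$, then $q_j>7L\,a_j$ for all $j$, where $a_j=\max\{1,\tfrac13\log j\}$. Hence $\sum_ja_jf_j<1/7$. The tangent inequality with $s_j=5.25\,a_j$ then gives $\sum_j\Phi(f_j)<\tfrac34+20e^{-5.25}+\sum_{j\ge21}j^{-7/4}<1$, contradicting feasibility.
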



We first develop and analyze a simple baseline policy that we call \textbf{Threshold Belady} (\Cref{alg:guardedbelady}), which uses predictions to guide its eviction decisions.
We then combine it with a threshold on intrinsic costs to obtain our final policy, \textbf{Guarded Eviction} (\Cref{alg:prefixguardedbelady}), which achieves the guarantee in \Cref{lemma:one-hole-opt}.

\subsubsection{The Threshold Belady Rule}
Fix any set $S$ of $m=|S|\ge 1$ pages and $k=m-1$ cache slots. 

We first provide some intuition. Suppose, as a thought experiment, that pages $p$ in $S$ have equal intrinsic costs $q_p:=\mu_p w_p=Q$. It is easy to see that the lower bound $\lopt(S,1)$ is then minimized by setting uniform fractions $f_p=f$ for all $p\in S$ such that $f\cdot (1+\log1/f)=1/m$. A straightforward computation yields $f=\Theta(1/(m\log m))$ and therefore we obtain that $\lopt(S,1)= \Theta(Q/\log m)$. Our goal, then, is to design a prediction-based policy that achieves cost rate $O(Q/\log m)$ over $S$. This target rate imposes a lower bound on the length of an interval over which it is worthwhile evicting some page $p$. In particular, amortizing the cost $w_p$ of evicting $p$ over an interval of length at least $w_p\log m/Q = \log m/\mu_p$ provides the desired rate. When such a page exists, it is safe to evict it. When no such page exists, we may need to take a default expensive action. The key observation is that because page arrivals are independent, the probability of this simultaneous failure decays exponentially in the number of pages, and therefore the extra cost imposed is small.

We synthesize these ideas into a simple policy, named $\name$ (\Cref{alg:guardedbelady}). The policy is parametrized by a (real) threshold $\theta>0$; pages qualify for eviction if their next arrival is at least $\theta/\mu_p$ units of time into the future. The following lemma establishes a bound on the performance of the policy. For instances with unequal intrinsic costs, we express the bound in terms of the maximum intrinsic cost of a page in $S$.

\begin{algorithm}[h]
\caption{$\name(S,\theta)$}
\label{alg:guardedbelady}
\begin{algorithmic}[1]
\Statex \textbf{Input:} A non-empty set of pages $S$ and a threshold \(\theta>0\).
\Whenever{a cache miss occurs at time $t$ on the unique current hole $p\in S$}
    \State Pay \(w_p\) and compute the qualifying set
    \[T:=\{p'\in S:\text{(next predicted $p'$ arrival)} \ge t+ \theta/\mu_{p'}\}.\]
    \State If $T=\emptyset$, keep $p$ as the hole; otherwise, choose any page in $T$ as the hole.
\EndWhenever
\end{algorithmic}
\end{algorithm}

\begin{lemma}[Performance of $\name$]\label{lemma:guarded-eviction}
    For any non-empty set of $n\ge 1$ pages $S$ and any threshold $\theta>0$, under perfect per-request predictions we have
    \[\rate(\name(S,\theta))\le \frac{1}{\theta}\max_{p\in S}q_p
    +
    (1-e^{-\theta})^n
    \sum_{p\in S}q_p.\]
    Choosing $\theta = 0.5\cdot\log n$ yields a rate of at most $3\cdot \max_{p\in S} q_p/\log n$.
\end{lemma}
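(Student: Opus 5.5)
We will analyze $\name(S,\theta)$ via a renewal argument over the sequence of miss times. Let $t_0<t_1<t_2<\dots$ be the successive misses, i.e., the arrival times of the current hole. At each miss time $t_j$ the policy pays $w_p$ for the hole $p$ that just arrived. It then picks a new hole $h_j$ using the predictions, which are perfect, so the qualifying set $T$ is exactly the set of pages whose true residual time to next arrival is at least $\theta/\mu_{p'}$. We split each miss by what happened at the \emph{previous} miss. Either the hole was chosen from a nonempty $T$ (a ``good'' interval), or $T$ was empty and the hole stayed as the just-requested page (a ``failure'' interval).

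First we bound the good intervals. If $h_j\in T$, the next miss occurs at $t_{j+1}\ge t_j+\theta/\mu_{h_j}$ and costs $w_{h_j}$. The cost per unit time of this interval is therefore at most $w_{h_j}\mu_{h_j}/\theta=q_{h_j}/\theta\le \max_{p\in S}q_p/\theta$. Summing over good intervals up to time $T$, their total cost is at most $(\max_p q_p/\theta)\cdot T$ plus one boundary term, which vanishes after dividing by $T$.

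Next we bound the failure intervals. At a miss time $t_j$, the residual arrival times of all pages are independent $\mathrm{Exp}(\mu_{p'})$ variables by memorylessness and independence of the Poisson processes. For the just-requested page this holds as well, because its next interarrival is fresh. These residuals are also independent of the history up to $t_j$, since $t_j$ is a stopping time of the joint arrival process. Hence $\Pr[T=\emptyset\mid\mathcal F_{t_j}]=\prod_{p'}(1-e^{-\theta})=(1-e^{-\theta})^n$.

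On failure the hole remains $p$, whose next arrival time is independent of this event given that $p$'s residual is below $\theta/\mu_p$. To avoid conditioning subtleties, we charge the failure cost directly. The cost generated by a failure at $t_j$ is the single payment $w_p$ at $t_{j+1}$. We bound the expected number of failures by a rate argument: the expected failure cost rate is $\sum_p w_p\cdot(\text{rate of failures at which hole is }p)$. Failures with hole $p$ can only occur at arrivals of $p$, since the hole remains the requested page. These arrivals happen at rate at most $\mu_p$. Each such arrival that is a miss has failure probability $(1-e^{-\theta})^n$ independent of the past, by the strong Markov property. The expected cost rate of failures is therefore at most $(1-e^{-\theta})^n\sum_p \mu_p w_p$, by Wald's identity or the optional stopping theorem for the martingale $\#\text{failures}-(1-e^{-\theta})^n\#\text{misses on }p$. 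This martingale formalization is the main technical step. The subtle point is that the qualifying event must be evaluated using residual times that are fresh exponentials at a stopping time, and that the number of arrivals of $p$ up to $T$ has expectation $\mu_pT$.

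Adding the two parts gives the stated bound. For $\theta=\tfrac12\log n$ we get $(1-e^{-\theta})^n=(1-n^{-1/2})^n\le e^{-\sqrt n}$. Using $\sum_p q_p\le n\max_p q_p$, it remains to check the numeric inequality
\[
\frac{2}{\log n}+n e^{-\sqrt n}\le \frac{3}{\log n},
\]
that is, $n\log n\,e^{-\sqrt n}\le 1$ for all $n\ge 2$. This is an elementary calculus check, with the maximum around small $n$, and we will verify it numerically for small $n$ and asymptotically otherwise. The case $n=1$, where $\log n=0$, makes the bound trivially infinite.
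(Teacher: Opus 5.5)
Your bound on the good (qualified) intervals matches the paper's. The numerical step is also fine: $n\log n\,e^{-\sqrt n}\le 1$ does hold for all $n\ge 2$, though barely (its maximum over integers is about $0.985$, at $n=9$, so the check must really be carried out), and it implies the paper's inequality $(1-1/\sqrt n)^n<1/(n\log n)$.

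The gap is in the failure bound. You claim two things: that the miss times $t_j$ are stopping times of the joint arrival process, and that an arrival of $p$ \emph{that is a miss} sees failure probability $(1-e^{-\theta})^n$ independently of the past. Neither holds for this policy. It picks holes using perfect predictions, i.e. exact next-arrival times. So whether $p$ is the hole when it arrives at $t$ depends on where other pages' next arrivals after $t$ lie. Relative to the arrival filtration, the miss event is not measurable. Relative to the algorithm's filtration it is measurable, but then the residuals at $t$ are already known, not fresh.

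Here is a concrete case. Suppose the choice within $T$ prefers a page $p'$ with $\mu_{p'}<\mu_p$ whenever $p'\in T$. If $p$ was chosen at the previous miss $t_j$, then $p'$'s next arrival precedes $t_j+\theta/\mu_{p'}$. Whenever that arrival still lies after $t\ge t_j+\theta/\mu_p$, page $p'$ cannot qualify at $t$. Hence $\Pr[T=\emptyset\mid\text{miss on }p\text{ at }t]$ can exceed $(1-e^{-\theta})^n$. Your process $\#\text{failures}-(1-e^{-\theta})^n\,\#\text{misses on }p$ then need not be a martingale, and optional stopping gives nothing.

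The fix, which is precisely the paper's argument, is to stop conditioning on the algorithm's state. Each failure with hole $p$ occurs at an arrival of $p$ at which the qualifying set is empty; that set is a function of the arrival process alone. Distinct failures use distinct arrivals. Hence the number of failures on $p$ in $[0,T]$ is at most the number of \emph{all} arrivals of $p$ in $[0,T]$ at which no page qualifies.

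The arrival times of $p$ are genuine stopping times of the arrival process, and at each of them all $n$ residuals are fresh independent exponentials. So this count has expectation exactly $\mu_pT(1-e^{-\theta})^n$. Multiplying by $w_p$ and summing over $p$ gives the second term. You were one step away: your observation that failures with hole $p$ can only occur at arrivals of $p$ already suffices, once you drop the restriction to arrivals that are misses.
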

\begin{proof}
We ignore the first miss, whose cost is at most $\max_{p\in S}w_p$ and therefore does not affect the asymptotic cost rate. We call a miss on page $p$ \emph{qualified} if $p$ belonged to the qualifying set $T$ when it was designated as the hole following the previous miss. Otherwise, we call the miss \emph{unqualified}: the qualifying set was empty and $p$ remained the hole. We bound the cost rates of these two types of misses separately.

\medskip
\textit{Qualified Misses.}
By definition, every qualified miss on page $p$ is preceded by a hole interval of length at least $\theta/\mu_p$. If this interval has length $\ell$, then
$$
    w_p
    \le
    \frac{w_p\mu_p}{\theta}\ell
     = \frac{q_p}{\theta}\ell \le
    \frac{\max_{p'\in S}q_{p'}}{\theta}\ell.
$$
The hole intervals corresponding to distinct qualified misses are disjoint, so their total length over a horizon of length $T$ is at most $T$. Dividing by $T$ and passing to the asymptotic rate, the total qualified cost rate is therefore at most
\begin{equation}\label{eq:ge1}
    \frac{\max_{p\in S}q_p}{\theta}.
\end{equation}

\textit{Unqualified Misses.}
Each unqualified miss on page $p$ can be charged to the preceding miss, which was an arrival of $p$ at which no page qualified. These charges are distinct, so the number of unqualified misses on $p$ is at most the number of such arrivals. We count all arrivals satisfying this condition, without conditioning on the algorithm's previous decisions, which may reveal information about future requests.

Consider an arrival time $t$ of page $p$. Immediately after this arrival, the next interarrival time of $p$ is a fresh $\mathrm{Exp}(\mu_p)$ random variable. Independently, for every other page $p'\neq p$, memorylessness implies that its residual time until the next arrival is distributed as $\mathrm{Exp}(\mu_{p'})$. Thus, each page fails to qualify with probability
$$
    \Pr\left[\mathrm{Exp}(\mu_{p'})<\frac{\theta}{\mu_{p'}}\right]
    =
    1-e^{-\theta}.
$$
These events are independent across pages, so the probability that no page qualifies is $(1-e^{-\theta})^n$. Since page $p$ generates arrivals at rate $\mu_p$, its arrivals at which no page qualifies occur at rate $\mu_p(1-e^{-\theta})^n$. Each unqualified miss on $p$ costs $w_p$, so the total unqualified cost rate is at most
\begin{equation}\label{eq:ge2}
    (1-e^{-\theta})^n
    \sum_{p\in S}w_p\mu_p=(1-e^{-\theta})^n
    \sum_{p\in S}q_p.
\end{equation}

Adding the bounds in \eqref{eq:ge1} and \eqref{eq:ge2} proves the first claim. Finally, setting $\theta=0.5\log n$ yields the second guarantee using the inequality
$(1-1/\sqrt{n})^n < 1/(n\log n)$, which holds for all $n\ge2$.
\end{proof}

\subsubsection{Guarding Against Expensive Evictions}
Because the cost rate of $\name$ depends on the intrinsic cost of the worst page in the instance $S$, the policy can be far from optimal over instances where different pages have very different intrinsic costs. A simple change fixes this issue: we keep the highest intrinsic cost pages permanently in cache and run the policy over the remaining pages. This change suffices to provide constant competitiveness against the lower bound established in \Cref{lemma:lb_opt}. \Cref{alg:prefixguardedbelady} formalizes the new policy $\namepr$. 



\begin{algorithm}[h]
\caption{$\namepr(S)$}
\label{alg:prefixguardedbelady}
\begin{algorithmic}[1]
\Statex \textbf{Input:} A non-empty set of pages $S$.
\State Let $n:=|S|$ and label the pages in $S$ by intrinsic costs $q_1\le q_2 \le \cdots \le q_n$.
\State Set $R_1:=q_1$ and compute $R_j:= 3\cdot (q_j/\log j)$ for each $j\ge 2$ and $j^\star \in \mathrm{argmin}_{j\in [n]}R_j$.
\State If $j^\star=1$ keep page $1$ permanently out of cache and pin pages $\{2,\ldots , n\}$.
\State Otherwise, run $\name(S',0.5\log j^\star)$ on the page prefix $S'=\{1,\ldots , j^\star\}$ and keep every page $p\in S\setminus S'$ permanently pinned in the cache.
\end{algorithmic}
\end{algorithm}

We will now prove \Cref{lemma:one-hole-opt}, showing that $\namepr$ is constant competitive with optional admission for the special case of $n=k+1$ and perfect per-request predictions.

\paragraph{Proof of \Cref{lemma:one-hole-opt}.}
Fix any set of $n\ge 1$ pages $S$ ordered as $q_1\le \cdots \le q_n$. By construction, $\namepr(S)$ returns a valid policy to serve the requests on $S$ that uses $|S|-1$ cache slots. Furthermore, by the choice of the prefix on which $\name$ is eventually run, coupled with the guarantee of \Cref{lemma:guarded-eviction} for our specific choice of threshold, the asymptotic cost rate of this policy is upper bounded by $R:=\min_{j\in [n]} R_j$. 

Next, fix any vector $f\in[0,1]^n$ satisfying $\sum_{p\in S}\Phi(f_p)\ge 1$ and let $L:=\sum_{p\in S}f_pq_p$. We will prove that $R\le 7L$ for any such vector; then, by taking the vector $f$ minimizing $\lopt(S,1)$, the proof immediately follows. Suppose, towards a contradiction, that this is not the case. We then have $q_1>7L$ and $ q_j>(7L/3)\cdot \log j$ for all $j\ge 2$. Moreover, since the pages are ordered by increasing intrinsic cost, $q_j\ge q_1>7L$ for every $j$. Thus, defining
$$
    a_j:=\max\left\{1,\frac{\log j}{3}\right\},
$$
we have $q_j>7La_j$ for every $j\in[n]$. Using the definition of $L$, this implies
\begin{equation}\label{eq:for_contr}
    \sum_{j=1}^na_jf_j<\frac17.
\end{equation}
The remaining argument is a simple calculation showing that any vector $f\in [0,1]^n$ satisfying \eqref{eq:for_contr} must necessarily satisfy $\sum_{j=1}^n\Phi(f_j) < 1$, giving us our contradiction and completing the proof. We present this calculation in Appendix \ref{appendix:computation} for completeness.

\subsection{A Constant-Loss Reduction to One-Hole Instances}
Finally, we explain how the constant-competitive guarantees of \Cref{lemma:one-hole-opt} for one-hole instances can be lifted into constant-competitive guarantees for the general case. Let $r:=n-k$ denote the number of holes we need to generate while serving the pages in $[n]$ and consider any partition of $[n]$ into non-empty, disjoint groups $\{S_j\}_{j=1}^r$. Then, one valid strategy is to separately implement a one-hole policy over each group; since we have $r$ groups the policy will generate the required number of holes and therefore be feasible; and by additivity of the cost objective, the total rate will be equal to the sum of the group rates. Therefore, if we use \Cref{alg:prefixguardedbelady} for the group-policies, the guarantee of \Cref{lemma:one-hole-opt} gives us a feasible policy for the global instance with rate at most 
\[7\cdot \sum_{j=1}^r\lopt(S_j,1).\]

The remaining algorithmic task is to (efficiently) derive a partition $\{S_j\}_{j=1}^r$ of the pages such that the total $\lopt(S_j,1)$ contribution is close to the global lower bound $\lopt([n],r)$. In fact, the lower bound directly guides us towards such a partition: from \Cref{lemma:lb_opt}, there exists some vector $f^\star\in[0,1]^n$ such that
$$
    \sum_{p\in[n]}\Phi(f_p^\star)\ge r
    \qquad\text{and}\qquad
    \lopt([n],r) \ge \sum_{p\in[n]} f_p^\star q_p.
$$
We can efficiently compute $f^\star$ through standard tools from convex-analysis. Furthermore, since the coverage contributions satisfy $\Phi(f^\star_p)\in [0,1]$ for all $p\in [n]$, the constraint $\sum_{p\in[n]}\Phi(f_p^\star)\ge r$ enables a standard packing argument that partitions the pages into disjoint sets $\{S_j\}_{j=1}^r$ with \[\sum_{p\in S_j}\Phi(f_p^\star)\ge 1/2\]  for all $j\in [r]$, by greedily packing them in decreasing $\Phi$-mass order until the requirement is satisfied.

Finally, by noting the identity $\Phi(6x)\ge 2\Phi(x)$ for all $x\in [0,1/6]$, we obtain that for each group $S_j$ the restriction $f'_p:=\min\{6f^\star_p,1\}$ is feasible for $\lopt(S_j,1)$, namely that $\sum_{p\in S_j}\Phi(f'_p)\ge 1$. Consequently, we have
\[\sum_{j=1}^r \lopt(S_j,1) \le \sum_{j=1}^r \sum_{p\in S_j}f'_pq_p = \sum_{p=1}^n f'_pq_p \le 6\sum_{p=1}^n f^\star_p q_p \le 6\rate(\opt).\]
For the sake of completeness, we formally state and analyze this process in Appendix \ref{app:opt-partition}.

\begin{lemma}[Optional Admission Partition] \label{lemma:partition_opt}
There exists an efficient routine, which we call $\greedypart$ (\Cref{alg:greedypartition}), that partitions any set of pages $[n]$ into $r$ non-empty subsets $\{S_j\}_{j=1}^r$ such that 
$\sum_{j=1}^r \lopt(S_j, 1) \le 6\cdot \lopt([n], r)$.
\end{lemma}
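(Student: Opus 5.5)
The plan is to first compute an optimal (or near-optimal) solution $f^\star$ of the convex program defining $\lopt([n],r)$, then greedily pack pages into $r$ groups by $\Phi$-mass, and finally scale $f^\star$ within each group to certify that $\lopt(S_j,1)$ is small.

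\textbf{Computing $f^\star$.} Efficiency follows because $\Phi$ is concave on $[0,1]$: $\Phi'(f)=\ln(1/f)$ is decreasing. The feasible region $\{f\in[0,1]^n:\sum_p\Phi(f_p)\ge r\}$ is therefore convex, and the objective is linear. So $f^\star$ can be found to arbitrary precision by standard convex optimization. Alternatively, a Lagrangian/KKT characterization gives $f_p=\min\{1,e^{-\lambda q_p}\}$, with a binary search on the multiplier $\lambda$. The value satisfies $\sum_p f^\star_pq_p=\lopt([n],r)$.

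\textbf{Greedy packing (\greedypart).} If $r=0$ there is nothing to do, so assume $r\ge1$. Sort the pages by $\Phi(f^\star_p)$ in decreasing order. Open group $S_1$ and add pages until its $\Phi$-mass reaches at least $1/2$, then open the next group, and so on until $r$ groups are closed. Put any leftover pages into the last group.

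The key claim is that we never run out of mass before closing $r$ groups. A group is closed as soon as its mass reaches $1/2$. Since each item has mass at most $1$, a closed group has mass below $1/2+1$. That bound is too crude, so I would use the decreasing order instead:
\begin{itemize}
\item If a group's first item has mass at least $1/2$, the group consists of that single item, with mass at most $1$.
\item Otherwise every item in the group is below $1/2$, so the group's mass is less than $1$.
\end{itemize}
Either way each closed group has mass at most $1$. If only $j<r$ groups could close, the total mass would be less than $j+1/2<r$, which contradicts $\sum_p\Phi(f^\star_p)\ge r$. Pages with $f^\star_p=0$ can be placed anywhere. Since $r\le n$, each group can be made non-empty; groups automatically receive at least one page because their mass is positive.

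\textbf{Charging.} On each group, set $f'_p=\min\{6f^\star_p,1\}$. I need $\sum_{p\in S_j}\Phi(f'_p)\ge1$, and there are two cases:
\begin{itemize}
\item If some page has $6f^\star_p\ge1$, then $f'_p=1$ and $\Phi(1)=1$, so the bound holds.
\item Otherwise every $f^\star_p\le 1/6$, and I use $\Phi(6x)\ge2\Phi(x)$ on $[0,1/6]$. This follows from $6x(1+\ln\frac1{6x})-2x(1+\ln\frac1x)=x(4-6\ln6+4\ln\frac1x)$, which is nonnegative for $x\le1/6$ since $4\ln6\ge 6\ln 6-4$. Summing gives $\sum\Phi(f'_p)\ge 2\cdot\frac12=1$.
\end{itemize}
Hence $f'|_{S_j}$ is feasible for $\lopt(S_j,1)$. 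Summing over groups, and using that the groups are disjoint, gives
\[\sum_j\lopt(S_j,1)\le\sum_p f'_pq_p\le6\sum_pf^\star_pq_p=6\lopt([n],r).\]

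The main obstacle I expect is the packing bound: showing that decreasing order keeps every closed group's mass at most $1$, so that $r$ groups of mass at least $1/2$ always exist. The two-case argument above handles it. Approximate optimality of $f^\star$ is a minor technicality, absorbed by exact KKT computation or an arbitrarily small slack.
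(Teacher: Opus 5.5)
Your proposal is correct and follows the paper's proof essentially step for step: an optimal solution $f^\star$ of the relaxation (the paper uses the KKT form $f^\star_p=e^{-q_p/\lambda}$), greedy packing in decreasing-coverage order with the same two-case argument that every closed group has coverage at most $1$, and the scaling $f'_p=\min\{6f^\star_p,1\}$ combined with $\Phi(6x)\ge 2\Phi(x)$ on $[0,1/6]$. The differences are cosmetic. You sort directly by $\Phi(f^\star_p)$, which is equivalent to the paper's sort by $q_p$ under the KKT form, and this lets you absorb the paper's separate early-return cases ($r=|S|$, or at least $r$ zero-cost pages) into the general argument.
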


\subsection{Proof of \Cref{thm:const_opt}}

\begin{algorithm}[h!]
\caption{\textsc{OptionalAdmission}($S,r$)}
\label{alg:optional}
\begin{algorithmic}[1]
\Statex \textbf{Input:} A non-empty set of pages $S$ and a hole requirement $r\in\{0,1,\ldots, |S|\}$.
\State Let $\{S_j\}_{j=1}^r$ be the output of $\greedypart(S,r)$.
\State Run the one-hole policy $\namepr(S_j)$ for the pages of each component $S_j$.
\end{algorithmic}
\end{algorithm}

Guided by the above, we finally design a policy (\Cref{alg:optional}) for serving a set of pages $S$ using $k=|S|-r$ cache slots. The proof of \Cref{thm:const_opt} trivially follows by combining the lower bound of \Cref{lemma:lb_opt} with the one-hole policy of \Cref{lemma:one-hole-opt} and the partition of \Cref{lemma:partition_opt}.

\section{The Mandatory Admission Setting with Perfect Predictions}
\label{sec:mandatory}
Equipped with the intuition provided by the optional-admission setting, we now consider the mandatory-admission variant of $\pname$ under perfect per-request predictions. We prove \Cref{thm:perfect} from the introduction, which we re-state for the reader's convenience.

\perfectpredictiontheorem*

The main difficulty in extending the previous section is that a page can no longer remain permanently outside the cache. Whenever a missing page is requested, it must be admitted, forcing some other page to become the new hole. This affects each part of our optional-admission argument. We explain these complications and address them in three steps, following the same overall approach as in \Cref{sec:optional}: a lower bound on the offline optimum, a prediction-based algorithm for the special case of $n=k+1$, and a partition process that reduces the general setting to one-hole instances.

\subsection{A Strengthened Lower Bound on the Offline Optimal Rate}
\label{subsec:mandatory-lower-bounds}
The optional admission lower bound of~\Cref{lemma:lb_opt} only controls how much hole time each page can supply, and therefore remains valid under mandatory admission. However, it ignores the constraints on moving holes between pages, and under mandatory admission, such constraints matter. For example, with exactly $r$ zero-weight pages, we trivially have $\lopt(S,r)=0$. Yet, if all $r$ zero-weight pages are outside the cache, a request to any one of them forces the eviction of a positive-weight page. Thus, holes cannot remain confined to the free pages. 

To capture such interactions, we first consider a contracted instance that replaces a set of $r$ pages by a single zero-weight \emph{ghost page}, requested whenever any page in the set is requested. Keeping the same cache size produces a one-hole instance, allowing us to isolate the interaction between the ghost and the remaining pages.

\subsubsection*{Instance Contraction}
Fix a set of pages $S$, a cache of size $k<|S|$ and a strict subset $G\subset S$ of cardinality $|G|=|S|-k$. First, we set the weights of all pages in $G$ to zero, which can only decrease the offline optimal rate. We may then assume without loss that at most one page of $G$ is cached: a request to another page in $G$ can replace the cached representative at no cost, while keeping additional pages of $G$ only occupies space that could hold other pages. Thus, the pages in $G$ can be represented by a single zero-weight \emph{ghost page} $g$, requested whenever any page in $G$ is requested. By superposition, its requests form a Poisson process of rate $\mu_g:=\sum_{p\in G}\mu_p$. Thus, the resulting \textbf{contracted instance}, which we denote as $\con(S,G)$, has $k+1$ pages and the same $k$ cache slots, so it is a one-hole instance whose offline optimal rate is at most that of the original instance.

Conversely, a one-hole policy $\pol$ for $\con(S,G)$ can be simulated on the original pages: keep the same normal pages and, whenever the ghost is cached, represent it by the most recently requested page in $G$. Normal-page costs are unchanged, and restoring the original weights adds at most the cost of fetching every requested page in $G$. The resulting policy $\widehat{\pol}$ is therefore feasible for serving the pages in $S$ using $|S|-|G|$ cache slots and its rate satisfies
\[
    \rate(\widehat{\pol})
    \le \rate(\pol)+\sum_{p\in G}q_p.
\]
This allows us to lift guarantees from $n=k+1$ to general $n$, assuming that the set $G$ only contains pages of small intrinsic cost. We make use of this observation throughout this section.

\subsubsection*{Improving the Lower Bound}
Contraction allows us to strengthen the lower bound through a balance constraint on miss rates. Let $r:=|S|-k$ and fix any $G\subset S$ with $|G|=r$. In the corresponding contracted one-hole instance $\con(S,G)$, we may assume without loss that an optimal policy evicts the ghost whenever a normal page misses: the ghost's zero fetching cost allows us to postpone evicting a normal page until the next ghost request. Each ghost miss then evicts a normal page, whose next request brings it back and evicts the ghost again.

Ghost misses and normal-page misses therefore alternate, so their counts differ by at most one. Writing $f_p$ for the asymptotic miss fraction on page $p$, this translates to the balance constraint
\[
    \mu_g f_g=\sum_{p\in S\setminus G}\mu_p f_p.
\]
Thus, the relaxation can no longer assign the entire hole to the free ghost without also accounting for normal-page misses. Adding this constraint to the one-hole relaxation
from~\Cref{lemma:lb_opt}, applied to the contracted instance $\con(S,G)$, gives 
\[
\lghost(S,G):=\min\left\{
\sum_{p\in S\setminus G}q_p f_p:
f\in[0,1]^{(S\setminus G)\cup\{g\}} \;,\;
\Phi(f_g)+\sum_{p\in S\setminus G}\Phi(f_p)\ge1 \;,\;
\mu_g f_g=\sum_{p\in S\setminus G}\mu_p f_p
\right\}.
\]
where $\mu_g:=\sum_{p\in G}\mu_p$.

Every choice of $G$ with $|G|=r$ yields a valid lower bound $\lghost(S,G)$ on the offline optimal rate for $\con(S,G)$, and therefore the original instance. Combining these bounds with $\lopt(S,r)$ gives the following mandatory-admission counterpart of \Cref{lemma:lb_opt}, proved formally in Appendix~\ref{appendix:mandatory_lb}.

\begin{lemma}[Mandatory Admission Lower Bound]
\label{lemma:lb_man}
The asymptotic cost rate of any mandatory admission policy
for $\pnameit$ that serves pages in $S\subseteq[n]$ using a cache of size $k\ge 1$ is at least  $\lman(S,|S|-k)$, where
\[
    \lman(S,x):=
    \max\big\{
        \lopt(S,x),
        \max_{G\subset S:|G|=x}\lghost(S,G)
    \big\}.
\]
\end{lemma}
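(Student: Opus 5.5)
Since $\lman(S,x)$ is a maximum of lower bounds, the plan is to prove each term separately. We show that every mandatory admission policy $\pol$ serving $S$ with $k$ slots satisfies $\rate(\pol)\ge\lopt(S,|S|-k)$, and also $\rate(\pol)\ge\lghost(S,G)$ for every $G\subset S$ with $|G|=|S|-k$. It suffices to prove these for the hindsight optimum, one horizon $T$ at a time, and then pass to the limit.

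The first bound is immediate. A mandatory admission policy is in particular an optional admission policy, since it simply never uses the option of leaving a page out. Hence \Cref{lemma:lb_opt} applies verbatim. The one point to check is that \Cref{lemma:lb_opt} holds for policies with hindsight, and it does, because the coverage argument already allows each page's evictions to be placed on its longest gaps.

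For the ghost bound, fix $G$ and proceed in three steps.
\begin{enumerate}
\item \emph{Contraction.} Following the text, zeroing the weights of $G$ and keeping at most one cached representative of $G$ maps any policy for $S$ to a policy for the one-hole instance $\con(S,G)$ of no larger cost on every realization. So $\rate(\opt_S)\ge\rate(\opt_{\con})$.
\item \emph{Normalization of the contracted optimum.} Transform the contracted optimum, on each realization up to $T$, into a policy of no larger cost with two properties: (a) each normal-page miss evicts the ghost if the ghost is cached; (b) each ghost miss evicts some normal page. Property (a) comes from an exchange argument: if a normal page $p'$ is evicted while the ghost is cached, keep $p'$ and evict the ghost instead, then perform the eviction of $p'$ lazily at the next ghost request. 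This costs nothing extra, and $p'$ can only hit more often in between. Property (b) holds because the ghost is then the only candidate hole after a normal miss. Consequently ghost misses and normal misses alternate, and their counts up to $T$ differ by at most one.
\item \emph{Defining the relaxation variables.} Set $f_p$ equal to the expected number of misses on $p$ up to $T$ divided by $\mu_pT$, including $p=g$ with rate $\mu_g$. The alternation from step 2 gives $|\mu_gf_g-\sum_{p\notin G}\mu_pf_p|\le 1/T$. The coverage argument of \Cref{lemma:lb_opt}, applied to the one-hole instance, gives $\Phi(f_g)+\sum_p\Phi(f_p)\ge 1-o(1)$: the one hole must be covered at all times, and page $p$ supplies at most $\Phi(f_p)$ hole time by the longest-gap and memorylessness argument. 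The cost equals $\sum_{p\notin G}q_pf_p$.
\end{enumerate}

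The main obstacle is that these constraints hold only approximately at finite $T$, and $\Phi$ has a $\log$ singularity at $0$. To finish, take a subsequence $T\to\infty$ along which the cost ratio attains the $\limsup$ and the vectors $f^{(T)}\in[0,1]^{S\setminus G\cup\{g\}}$ converge, using compactness. Since $\Phi$ is continuous on $[0,1]$, the limit vector satisfies both constraints exactly, and its cost equals $\rate$. Therefore the rate is at least $\lghost(S,G)$.

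Two further details need care. First, the coverage bound at finite $T$ must use the expectation of the top-$f$ gaps, including the boundary gap; this contributes only $O(1/T)$. Second, $\Phi$ is concave, so averaging across realizations is handled by Jensen's inequality: apply the per-realization bound and then use concavity to move the expectation inside $\Phi$.
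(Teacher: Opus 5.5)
Your plan matches the paper's in outline. The $\lopt$ term is inherited from optional admission. The ghost term comes from a contracted one-hole policy whose ghost and normal misses alternate, combined with the hole-time argument of Appendix~\ref{appendix:optional_lb} and a subsequence along which the normalized miss counts converge.

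The contraction is organized differently. You first map $\opt$ to a contracted policy and then normalize the contracted optimum by an exchange argument. That normalization implicitly needs the standard full-cache assumption, because the image of $\opt$ has several holes whenever $\opt$ caches several pages of $G$. The paper does both steps in one simulation of $\opt$. It maintains $C\cap N\subseteq C'\cap N$, evicts $g$ at every normal miss, and at every ghost miss evicts some normal page absent from $\opt$'s current cache. That policy has exactly one hole by construction, so alternation and $\cost_{\pol}(T)\le\cost_{\opt}(T)+O(1)$ are immediate.

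This invariant is also the precise form of your exchange step. As phrased ("perform the eviction of $p'$ lazily at the next ghost request"), the step is not quite right. If $p'$ is requested in between, the original policy reloads it and evicts some other normal page $x$. The deferred eviction must then be of whichever normal page the original currently holds out, not of $p'$.

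The one remark that would not go through as written is the Jensen step. There is no pathwise inequality $H_p(T)\le T\,\Phi\bigl(M_p(T)/(\mu_pT)\bigr)$, even up to lower-order terms. The realized gaps fluctuate, and a hindsight policy can make $M_p$ large exactly on realizations where $p$ happens to have long gaps. Jensen applies to a fixed concave function, but the top-gap function here is random and correlated with $M_p$, so concavity alone does not move the expectation inside $\Phi$.

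The paper's argument sidesteps this. For a fixed deterministic $\tau$, the pathwise bound $H_p(T)\le\tau M_p(T)+\sum_\ell(L_{p,\ell}-\tau)^+ +B_p(T)$ is affine in $M_p$, so expectations pass through by linearity. Minimizing over $\tau$ only afterwards yields $\Phi(f_p)=\min_{s\ge0}\{sf_p+e^{-s}\}$. Concavity enters through this representation of $\Phi$ as a minimum of affine functions, not through Jensen. The $O(1/T)$ error is uniform in $\tau$, so your finite-$T$ version also works.

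Your step 3 already invokes this coverage argument, so replacing the Jensen remark with it completes your proof. The $1/T$ balance error, the continuity of $\Phi$ at $0$, and the compactness limit are fine.
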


\subsection{A One-Hole Policy for Mandatory Admission}
Next, we design a one-hole policy for our setting. We begin with a mandatory admission variant of $\name$ (\Cref{alg:guardedbelady}), which serves as a subroutine in our final policy. When no qualifying page is available, $\name$ leaves the requested page outside the cache, which is not permitted under mandatory admission. Our variant instead designates the two pages with highest arrival rates as \emph{buffers} and alternates the hole between them until a non-buffer qualifies. The remaining pages, which we call \emph{anchors}, behave as in the optional admission variant and are only evicted when a sufficiently large inter-arrival window is identified. 

\begin{algorithm}[!htbp]
\caption{$\manpol(S,\theta)$}
\label{alg:mandguardedbelady}
\begin{algorithmic}[1]
\Statex \textbf{Input:} A set of at least two pages $S$ and a threshold $\theta>0$.
\State Let $B$ contain the two pages in $S$ with highest arrival rates $\mu_p$, and let $A:=S\setminus B$.
\Whenever{a cache miss occurs at time $t$ on the unique current hole $p\in S$}
    \State Pay $w_p$ and admit page $p$ into the cache.
    \If{$p\in B$}
        \State Compute the qualifying set $T:=\{a\in A:\text{(next predicted $a$ arrival)} \ge t+ \theta/\mu_{a}\}$.
        \State If $T\neq\emptyset$, evict an arbitrary page from $T$; otherwise, evict the other buffer in $B\setminus\{p\}$.
    \Else
        \State Evict the least recently requested buffer in $B$.
    \EndIf
\EndWhenever
\end{algorithmic}
\end{algorithm}

This extension is formally described in \Cref{alg:mandguardedbelady}. In essence, the unique hole is moved back and forth between the two buffers as they arrive, until a qualifying anchor is found, at which point it gets evicted. Choosing the two pages with the highest arrival rates as buffers gives the algorithm the most frequent opportunities to find a qualifying anchor. Once the evicted anchor is requested again, the least recently requested buffer is evicted, concluding the current cycle. Clearly, \Cref{alg:mandguardedbelady} uses $|S|-1$ cache slots and can be implemented efficiently under per-request predictions. Moreover, any requested page is fetched in the cache and it is therefore a feasible policy for the mandatory admission model. 

As before, by setting $\theta=c\log|S|$ for a suitable constant $c<1$, we can control the probability that no anchor is found. Furthermore, the higher the arrival rate $\mu_p$ of a page is, the smaller is its weight $w_p$ relative to its intrinsic cost $q_p=\mu_pw_p$ which allows us to control the buffer loss in our analysis. This becomes clear in the following counterpart of \Cref{lemma:guarded-eviction}, which is proven in Appendix \ref{appendix:mandatory-guarded-eviction} using similar arguments.

\begin{lemma}[Performance of $\manpol$]\label{lemma:guarded-eviction-man}
    For any set of at least two pages $S$ and any threshold $\theta>0$, under perfect per-request predictions we have
    \[\rate(\manpol(S,\theta))\le \left(\frac2\theta+(1-e^{-\theta})^{|S|-2}\right)\cdot \max_{p\in S}q_p.\]
\end{lemma}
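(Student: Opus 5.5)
The proof follows the structure of \Cref{lemma:guarded-eviction}. We ignore the first miss and classify every miss of $\manpol(S,\theta)$ into three types, then bound the cost rate of each type separately. Write $Q:=\max_{p\in S}q_p$. Each cycle of the policy has the same shape: one or more buffer misses, each of which either finds a qualifying anchor or bounces the hole to the other buffer; then, if an anchor $a$ was evicted, a single miss on $a$, followed by eviction of a buffer. The three miss types are:
\begin{itemize}[leftmargin=*]
\item anchor misses, which are always qualified;
\item buffer misses whose preceding hole was an anchor, which we call \emph{returning} buffer misses;
\item buffer misses whose preceding hole was the other buffer, which we call \emph{bouncing} misses.
\end{itemize}

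\textbf{Anchor misses.} Suppose an anchor $a$ is evicted at time $t$. With perfect predictions, $a$ remains the hole for an interval of length at least $\theta/\mu_a$, so its miss cost satisfies $w_a\le (q_a/\theta)\ell\le (Q/\theta)\ell$, where $\ell$ is the length of that interval. These hole intervals are disjoint, so summing over a horizon of length $T$ gives a cost rate of at most $Q/\theta$.

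\textbf{Returning buffer misses.} Each anchor return is followed by exactly one buffer eviction, so the number of returning buffer misses is at most the number of anchor misses. The obstacle is that a buffer can have a larger weight than the anchor it is charged to. The choice of buffers resolves this. Each buffer $b$ has $\mu_b\ge\mu_a$ for every anchor $a$, so $w_b=q_b/\mu_b\le Q/\mu_a$. The anchor's hole interval has length at least $\theta/\mu_a$, which gives $w_b\le (Q/\theta)\ell$ for the same interval $\ell$ used above. The anchor-hole intervals are therefore charged a second time, and the rate of returning buffer misses is also at most $Q/\theta$. This accounts for the factor $2/\theta$ in the statement. A miss on the evicted buffer may not occur before the next cycle starts, but that only lowers the count.

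\textbf{Bouncing misses.} A bouncing miss on buffer $b$ occurs only if the preceding miss was a request of the other buffer $b'$ at which $T=\emptyset$. We charge the bouncing miss on $b$ to that arrival of $b'$. As in \Cref{lemma:guarded-eviction}, we count all arrivals of $b'$ at which no anchor qualifies, without conditioning on the algorithm's history. At an arrival time of $b'$, memorylessness makes the residual times of the $|S|-2$ anchors independent $\mathrm{Exp}(\mu_a)$ variables, and each anchor fails to qualify with probability $1-e^{-\theta}$. Such arrivals therefore occur at rate $\mu_{b'}(1-e^{-\theta})^{|S|-2}$. Each charged miss costs $w_b$, so this miss type contributes $(1-e^{-\theta})^{|S|-2}(\mu_{b'}w_b+\mu_b w_{b'})$ summed over the ordered pair of buffers.

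The step I expect to be the main obstacle is bounding this bouncing-miss term. The charged arrival belongs to $b'$ but the cost paid is $w_b$, so the naive bound is $\mu_{b'}w_b$, which need not be at most $Q$ when the two buffers have very different rates. To fix this, I would count the bouncing misses on $b$ directly as arrivals of $b$ itself. A bouncing miss on $b$ at time $s$ requires that $b$ was the hole at $s$ and that no anchor qualified at the previous bounce. If the lemma's constant relies on this, I would check whether the intended bound is instead $(1-e^{-\theta})^{|S|-2}(q_{b_1}+q_{b_2})$ with a factor of $2$ absorbed. If direct counting fails, I would charge each bouncing miss to the arrival of $b$ itself and use the fact that, at that arrival, the anchors' residual times are independent of the past, so no anchor qualifies with probability $(1-e^{-\theta})^{|S|-2}$. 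This argument needs care, because the previous decision reveals information about the anchors.

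Adding the three contributions gives the claimed bound of
\[
\left(\frac{2}{\theta}+(1-e^{-\theta})^{|S|-2}\right)\cdot \max_{p\in S}q_p
\]
up to resolving the buffer-charging detail above.
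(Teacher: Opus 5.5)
\textbf{Gap: the bouncing-miss term is not bounded.}

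Your bounds for anchor misses and returning buffer misses are correct. Together they mirror the paper's pairing: the paper attaches each qualified buffer miss to the anchor miss that follows it, while you attach each anchor miss to the buffer miss that follows it. Both give $2Q/\theta$ via $w_b\le Q/\mu_b\le Q/\mu_a$.

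The genuine gap is the bouncing-miss term, which you leave unresolved. What you actually prove is
\[
(1-e^{-\theta})^{|S|-2}\bigl(\mu_{b_2}w_{b_1}+\mu_{b_1}w_{b_2}\bigr)=(1-e^{-\theta})^{|S|-2}\bigl(q_{b_1}\mu_{b_2}/\mu_{b_1}+q_{b_2}\mu_{b_1}/\mu_{b_2}\bigr).
\]
This is unbounded relative to $Q$ when the two buffer rates differ a lot. Neither fallback closes the gap. Settling for $q_{b_1}+q_{b_2}$ gives $2Q(1-e^{-\theta})^{|S|-2}$, which is weaker than the statement. Charging a bouncing miss on $b$ to the event that no anchor qualifies at $b$'s own arrival uses the wrong event. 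The bounce was caused by the failed search at the earlier miss on $b'$, and anchors may well qualify at the moment $b$ misses.

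\textbf{Missing idea: the least-recently-requested buffer rule.}

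Your proposal never uses this rule, and the paper extracts the needed invariant from it: whenever a buffer $b$ is the hole, the most recently requested buffer is the other buffer $b'$. This holds right after a bounce, since $b'$ was just requested. It holds right after an anchor miss, since the algorithm evicts the least recently requested buffer. Hits preserve it.

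Consequently, every failed search you charge occurs at a miss on $b'$. That miss is a request of $b'$ whose preceding buffer request was to $b$, i.e.\ a miss of a virtual one-slot cache on $\{b_1,b_2\}$ that always keeps the most recently requested buffer. These events are determined by the buffer processes alone, so the anchors' residual times at them are independent exponentials. This avoids the conditioning issue you worried about. Virtual misses alternate between the two buffers, so each buffer incurs them at rate $\mu_{b_1}\mu_{b_2}/(\mu_{b_1}+\mu_{b_2})$.

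Restrict your charged arrivals of $b'$ to these events and keep your charge of $w_b$ each. You then get
\[
(1-e^{-\theta})^{|S|-2}\sum_{b\in B}w_b\,\frac{\mu_{b_1}\mu_{b_2}}{\mu_{b_1}+\mu_{b_2}}
=(1-e^{-\theta})^{|S|-2}\,\frac{\mu_{b_2}q_{b_1}+\mu_{b_1}q_{b_2}}{\mu_{b_1}+\mu_{b_2}}
\le(1-e^{-\theta})^{|S|-2}\,Q,
\]
since the middle expression is a convex combination of $q_{b_1}$ and $q_{b_2}$. The paper uses the same virtual-cache domination, but charges each unqualified buffer miss its own weight instead of the weight of the subsequent bounce.
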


Finally, as in the optional admission setting, rather than instantiating \Cref{alg:mandguardedbelady} over the entire set of pages, we only run it on a subset of them and keep the remaining permanently in the cache. We consider two types of alternatives for selecting this subset:
\begin{itemize}
    \item The first considers all prefixes $U$ of the pages (ordered by intrinsic cost), served via \Cref{alg:mandguardedbelady} with threshold $\theta=c\log|U|$ for some constant parameter\footnote{Throughout this section, we use $c=3/4$ but maintain the parametrization since \Cref{sec:predictions} uses $c=1/4$.} $c>0$. As before, this ensures that very expensive pages are never considered for eviction and controls the \Cref{lemma:guarded-eviction-man} dependency on the maximum intrinsic cost.

    \item  The second considers simple swap policies that alternate the hole between only two pages\footnote{Alternating between pages $p,p'$ results to each cycle incurring one miss on each page, at total cost $w_p+w_{p'}$ at expected duration $1/\mu_p+1/\mu_{p'}$. Its cost rate is therefore precisely $
    \frac{w_p+w_{p'}}{1/\mu_p+1/\mu_{p'}}
    =\frac{\mu_{p'}q_p+\mu_pq_{p'}}{\mu_p+\mu_{p'}}.
    $}; these policies exploit cheap evictions even when the evicted page does not meet the lateness threshold.  In particular, when applied to a contracted instance, swapping a zero-weight ghost with another page incurs at most one positive-cost miss per ghost request, allowing a low cost rate when ghost requests are rare even if the other page has large intrinsic cost.
\end{itemize}

\Cref{alg:mandatory-one-hole} (named $\ohp$) compares these candidates and runs the one with the smallest rate bound, which we denote by \(\cert(S,c)\). We state its performance in \Cref{lemma:one-hole-man}.

\begin{algorithm}[!htbp]
\caption{$\ohp(S,c)$}
\label{alg:mandatory-one-hole}
\begin{algorithmic}[1]
\Statex \textbf{Input:} A set of at least two pages $S$ and a threshold parameter $c>0$.

\Statex \textbf{Output:} A rate certificate $\cert$.

\State Set $m:=|S|$ and order pages by intrinsic cost
$q_1\le q_2\le\cdots\le q_m$.

\State Define the prefix-candidate sets $\mathcal{U}_{pref}:=
\bigl\{\{1,\ldots,j\}:3\le j\le m\bigr\}$ and assign certificate 
\[R(U):=\left(\frac{2}{c\log j}+(1-e^{-c\log j})^{j-2}\right)\cdot q_j\]
to each $U\in\mathcal{U}_{pref}$.
\State Define the swap-candidate sets $\mathcal{U}_{swap}:=
\bigl\{\{i,j\}:1\le i < j\le m\bigr\}$ and assign certificate 
\[R(U):=\frac{\mu_j q_i + \mu_i q_j}{\mu_i + \mu_j}\]
to each $U\in\mathcal{U}_{swap}$.

\State Choose 
\[
U^\star \in \mathrm{argmin}\{R(U):U\in\mathcal{U}_{pref}\cup\;\mathcal{U}_{swap}\}.
\]

\State Pin the pages in $S\setminus U^\star$ in the cache and serve $U^\star$ via 
$\manpol(U^\star,c\cdot\log|U^\star|)$.

\State \phantomsection\label{line:certificate-return}\textbf{Return} the certificate $\cert:=R(U^\star)$.
\end{algorithmic}
\end{algorithm}

\begin{corollary}[Performance of $\ohp$]\label{lemma:one-hole-man}
    For any set of at least two pages $S$ and any $c>0$, under perfect per-request predictions we have 
    \[\rate(\ohp(S,c))\le \cert(S,c)\]
    where $\cert(S,c)$ is the certificate returned in line \hyperref[line:certificate-return]{\ref*{line:certificate-return}} of \Cref{alg:mandatory-one-hole}.
\end{corollary}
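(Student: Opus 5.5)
The corollary follows by a case analysis on which candidate $U^\star$ is selected in \Cref{alg:mandatory-one-hole}. In both cases the pages of $S\setminus U^\star$ are pinned. A lazy policy never evicts a pinned page, so these pages are always hits; at most a finite initialization cost is paid to bring them into the cache, and this does not affect the asymptotic rate. Hence $\rate(\ohp(S,c))=\rate(\manpol(U^\star,c\log|U^\star|))$, and it suffices to show that the latter is at most $R(U^\star)$ for each type of candidate.

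\emph{Prefix candidates.} Suppose $U^\star=\{1,\ldots,j\}$ with $3\le j\le m$. Then $|U^\star|=j\ge 2$ and $\theta=c\log j>0$, so \Cref{lemma:guarded-eviction-man} applies directly and gives
\[
\rate\le\Bigl(\tfrac{2}{c\log j}+(1-e^{-c\log j})^{j-2}\Bigr)\max_{p\in U^\star}q_p .
\]
The pages are ordered by non-decreasing intrinsic cost, so $\max_{p\in U^\star}q_p=q_j$, and the right-hand side is exactly $R(U^\star)$.

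\emph{Swap candidates.} Suppose $U^\star=\{i,j\}$. Here \Cref{lemma:guarded-eviction-man} is too weak, since it only yields $(2/\theta+1)\max\{q_i,q_j\}$, so we analyze the policy directly.
\begin{enumerate}[label=(\roman*)]
\item With $|U^\star|=2$, both pages are buffers ($B=U^\star$) and $A=\emptyset$, so the qualifying set $T$ is always empty. Every miss on one page therefore evicts the other page, and the policy simply alternates the hole between $i$ and $j$.
\item Consider the moment just after a miss on $i$, when $j$ becomes the hole. Requests to $i$ are now hits, and the next miss occurs at $j$'s next arrival. By memorylessness (and independence from the history) this happens after an $\mathrm{Exp}(\mu_j)$ time. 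Symmetrically, after the miss on $j$ the hole waits an independent $\mathrm{Exp}(\mu_i)$ time for the next miss on $i$.
\item Consequently, the successive misses on $i$ are renewal epochs. Each cycle has cost exactly $w_i+w_j$ and i.i.d.\ length with mean $1/\mu_i+1/\mu_j$.
\item By the renewal-reward theorem (elementary-renewal form for expected reward),
\[
\lim_{T\to\infty}\frac{\expect[\cost(T)]}{T}=\frac{w_i+w_j}{1/\mu_i+1/\mu_j}=\frac{\mu_jq_i+\mu_iq_j}{\mu_i+\mu_j}=R(U^\star).
\]
The cost of the partial initial cycle is finite and does not affect this limit.
\end{enumerate}

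The main point needing care is the renewal argument in the swap case. First, the cycle lengths must be genuinely i.i.d.\ and independent of past decisions. This holds because the policy's decisions depend only on the identity of the requested page and never on predictions, so each waiting time is a fresh exponential by memorylessness. Second, the $\limsup$ of expected cost over $T$ must equal the renewal-reward ratio. This follows since rewards per cycle are bounded, so the expected-reward version of the renewal-reward theorem applies.

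Combining the two cases gives $\rate(\ohp(S,c))\le R(U^\star)=\cert(S,c)$.
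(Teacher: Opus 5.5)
Your proof is correct and follows the paper's own (implicit) argument. For prefix candidates, the certificate is exactly the bound of \Cref{lemma:guarded-eviction-man} with $\theta=c\log j$ and $\max_{p\in U^\star}q_p=q_j$. For swap candidates, the paper uses the alternating-cycle calculation in its footnote, which you simply make rigorous via memorylessness and the renewal-reward theorem.
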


\subsection{Partitioning the Pages Into Structured Components}

Under optional admission, we partitioned the pages into $r$ one-hole components and served each separately. Under mandatory admission, however, each such component must contain at least two pages: a singleton leaves no cache slot in which to admit its requests. Consequently, this decomposition is impossible when $2r>n$. We overcome this obstacle by allowing a residual component to accommodate several holes through the contraction introduced earlier in this section.

Our partition subroutine (\Cref{alg:mandatory-partition}) pools pages with small intrinsic costs into a set $G$, represented by a zero-weight ghost. Choosing cheap pages keeps the additional cost of simulating this contraction small. We allocate the remaining $h:=r-|G|$ holes to ordinary components, grouping pages according to their coverage in the optional-admission relaxation. Each group receives enough lower-bound mass to support an inexpensive one-hole policy. The leftover pages join $G$ in the residual component, which we serve through its contracted one-hole instance; if $G=\emptyset$, we simply pin these leftover pages.




\begin{algorithm}[!htbp]
\caption{$\manpar(S,r)$}
\label{alg:mandatory-partition}
\begin{algorithmic}[1]
\Statex \textbf{Input:} A set of pages $S$ and a hole requirement $r\in\{0,1,\ldots,|S|-1\}$.

\Statex \textbf{Output:} A collection of ordinary components
$\mathcal{T}:=\{S_j\}_{j=1}^h$ and a residual component $(S_0,G)$.

\If{at least $r$ pages in $S$ have $q_p=0$}
    \phantomsection\label{line:manpar-zero-test}
    \State \phantomsection\label{line:manpar-zero-returning}
    \textbf{Return} $\mathcal{T}=\emptyset$, $S_0=S$ and
    $G=\{\text{any $r$ pages in $S$ with $q_p=0$}\}$.
\EndIf

\State \phantomsection\label{line:manpar-coverage}
Set $u_p:=(1+q_p/\lambda)e^{-q_p/\lambda}$ for the unique
$\lambda>0$ satisfying
$\sum_{p\in S}(1+q_p/\lambda)e^{-q_p/\lambda}=r$.

\State \phantomsection\label{line:manpar-cheap}
Set $C:=\{p\in S:q_p\le6\lambda\}$.

\If{$|C|\ge r$}
    \phantomsection\label{line:manpar-cheap-test}
    \State \phantomsection\label{line:manpar-cheap-return}
    \textbf{Return} $\mathcal{T}=\emptyset$, $S_0=S$ and
    $G=\{\text{the $r$ pages in $S$ of smallest $q_p$}\}$.
\EndIf

\State \phantomsection\label{line:manpar-initialize}
Set $h:=r-|C|$ and $E:=S\setminus C$.

\For{$j=1,\ldots,h$}
    \phantomsection\label{line:manpar-components}
    \State \phantomsection\label{line:manpar-empty-component}
    Set $S_j:=\emptyset$.
    \While{$\sum_{p\in S_j}u_p<1/2$}
        \phantomsection\label{line:manpar-fill}
        \State \phantomsection\label{line:manpar-add-page}
        Choose a page $p\in E$ with smallest $q_p$ and set
        $S_j\gets S_j\cup\{p\}$ and $E\gets E\setminus\{p\}$.
    \EndWhile
\EndFor

\State \phantomsection\label{line:manpar-final-return}
\textbf{Return} $\mathcal{T}=\{S_j\}_{j=1}^h$,
$S_0=C\cup E$ and $G=C$.
\end{algorithmic}
\end{algorithm}

We state the following guarantee for \Cref{alg:mandatory-partition} below. We present a proof sketch here, demonstrating the key ideas required in the analysis. The formal proof is deferred to Appendix~\ref{app:man-partition}.

\begin{lemma}[$\manpar$ Guarantee]
\label{lem:mandatory-partition-feasibility}
For any set $S$ of at least two pages, and any hole requirement $r\in\{0,1,\ldots, |S|-1\}$, \Cref{alg:mandatory-partition} returns a partition $\{S_0,S_1,\ldots,S_h\}$ of $S$ into nonempty sets, together with a strict subset $G\subset S_0$, such that (i) $r=|G|+h$ and  (ii) $|S_j|\ge 2$ for all $j\in \{1,\ldots, h\}$. Furthermore, the partition components satisfy
\[
\sum_{j=1}^h \cert(S_j,\frac34) + \mathbbm{1}(G\neq \emptyset)\cdot\bigg(\sum_{p\in G}q_p +  \cert(\con(S_0,G),\frac34)\bigg) \le 15.75\cdot L_{\mathrm{man}}(S,r),
\]
where $\cert(S,c)$ is the certificate returned in line \hyperref[line:certificate-return]{\ref*{line:certificate-return}} of \Cref{alg:mandatory-one-hole}.
\end{lemma}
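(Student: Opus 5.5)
The proof splits into two parts: the structural claims (i)--(ii), and the cost inequality, handled by following the three return branches of \Cref{alg:mandatory-partition}.

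\textbf{Structural claims.} In the first two returns we have $h=0$ and $|G|=r$. Since $r\le |S|-1$, $G$ is a strict subset of $S_0=S$. In the final branch, $|C|<r$ gives $h=r-|C|\ge 1$, so $|G|+h=r$.

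Nonemptiness and $|S_j|\ge 2$ both come from the choice of $\lambda$. Every $p\in E$ has $q_p>6\lambda$, hence $u_p<7e^{-6}<1/2$. So a component cannot reach mass $1/2$ with a single page, which forces $|S_j|\ge 2$.

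Enough pages exist because $u_p\le 1$ and $\sum_p u_p=r$. This gives $\sum_{p\in E}u_p\ge r-|C|=h$. Each filled component overshoots $1/2$ by less than $1/2$, so it uses mass below $1$, and the greedy loop never runs out before the $h$-th component closes.

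For $S_0\setminus G=E_{\text{left}}$ to be nonempty when $G\neq\emptyset$, I would argue that the leftover mass is positive. If it ever became empty, the degenerate case is handled by pinning, as the algorithm text states.

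\textbf{Cost inequality, two ingredients.} The first is the relation $\lambda\le c_0\,\lopt(S,r)\le c_0\lman(S,r)$. Here $u_p$ is exactly $\Phi(f_p)$ at the Lagrangian optimum $f_p=e^{-q_p/\lambda}$ of $\lopt(S,r)$. Indeed $\Phi'(f)=\log(1/f)$, so the KKT conditions give $q_p=\lambda\log(1/f_p)$, and then $\Phi(f_p)=(1+q_p/\lambda)e^{-q_p/\lambda}=u_p$. So $f^\star_p=e^{-q_p/\lambda}$ is the optimizer and $\lopt=\sum_p q_pe^{-q_p/\lambda}$.

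The second ingredient is to bound each certificate by a multiple of $\lambda$ or of $\lopt$-mass, as follows.

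\emph{Ordinary components.} Each $S_j$ has $\sum u_p\ge 1/2$ and consists of pages greedily taken in increasing $q$. I would reuse the optional-admission argument: rescale $f'=\min\{6f^\star,1\}$ to get a feasible point of $\lopt(S_j,1)$. Then compare the prefix certificates $R(U)$ with $\lopt(S_j,1)$ via the contradiction calculation of \Cref{lemma:one-hole-opt}, now with the constants of \Cref{lemma:guarded-eviction-man} at $c=3/4$. This charges $\cert(S_j,3/4)$ to $\sum_{p\in S_j}q_pf^\star_p$ up to a constant.

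\emph{Residual component.} When $G=C$ (or the $r$ cheapest pages), we have $\sum_{p\in G}q_p\le |G|\cdot 6\lambda$. That alone is not comparable with $\lopt$, so I would instead use $q_p\le$ const$\cdot\lambda$ together with $f^\star_p\ge e^{-6}$ for $p\in C$, which gives $q_p\le e^{6}q_pf^\star_p$. That constant is too large, so the sharper route is to bound $\sum_G q_p$ directly through $\lghost$-type reasoning.

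The certificate of $\con(S_0,G)$ is where $\lghost(S,G)$ from \Cref{lemma:lb_man} enters. Take an optimal $f$ for $\lghost$ and use the swap certificate $\{g,p\}$ or the prefix certificates on $S_0\setminus G\cup\{g\}$. The balance constraint $\mu_gf_g=\sum\mu_pf_p$ bounds the swap rate $\mu_gq_p/(\mu_g+\mu_p)$ for the cheapest normal page, while the covering constraint feeds the prefix certificate as before.

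\textbf{Main obstacle.} The hardest step is this residual-component bound, together with balancing the constants so that the total is exactly $15.75$. One must split according to whether the ghost's coverage $\Phi(f_g)$ exceeds $1/2$. If it does, the balance constraint forces normal misses at rate $\mu_gf_g$, and the swap certificate handles it. If it does not, the normal pages carry coverage at least $1/2$, and the prefix certificate, via the optional-admission calculation, handles it.

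I would then sum the contributions: ordinary components charged to disjoint parts of $\lopt$, and the residual charged to $\lghost+\lopt$. Each piece is bounded by $\lman$ via the maximum, and I would tune the factor $6$ in $C$ and $c=3/4$ to reach $15.75$.
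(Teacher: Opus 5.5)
Your structural argument is essentially fine. Your own bound already makes the leftover coverage strictly positive, so the pinning fallback is unnecessary; it would in any case violate $G\subsetneq S_0$. The cost inequality, however, has genuine gaps.

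\textbf{Paying for the ghost simulation.} Charging each $p\in G$ to its own term $q_pf^\star_p$ loses a factor $e^{6}$, as you note. Reasoning via $\lghost$ cannot repair this, because $\lghost(S,G)$ gives the pages of $G$ weight zero and can be arbitrarily smaller than $Q_G:=\sum_{p\in G}q_p$. Take $|S|=r+1$, $q_p=\mu_p=1$ on $G$, and one normal page with $q=2$ and $\mu=M\ge r$; the algorithm takes its second return here. Then $f_g=1$, $f=r/M$ is feasible, so $\lghost(S,G)\le 2r/M$, while $Q_G=r$.

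The missing idea is a joint budget identity read off from the KKT solution. Substituting $r=\sum_p(1+q_p/\lambda)f^\star_p$ and $\lopt(S,r)=\sum_pq_pf^\star_p$ shows that, for any $t$ with $q_p\le t$ on $G$ and $q_p\ge t$ off $G$,
\[
\Bigl(1+\frac{t}{\lambda}\Bigr)\lopt(S,r)-Q_G-t\,(r-|G|)=\sum_{p\in G}(t-q_p)(1-f^\star_p)+\sum_{p\in S\setminus G}(q_p-t)f^\star_p\ge 0 .
\]
With $t=6\lambda$ in the last branch this gives $Q_G+6h\lambda\le 7\lopt(S,r)$. With $t=\max_{p\in G}q_p\le 6\lambda$ in the second return it gives $Q_G\le 7\lopt(S,r)$.

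This identity is also the only justification for your ``first ingredient'' $\lambda\le c_0\lopt(S,r)$, and that ingredient is false in general. For example, take $r=1$, one page of intrinsic cost $\epsilon\to 0$ and a few of cost $1$: then $\lopt(S,1)\approx\epsilon$ while $\lambda\approx 1/(2\log(1/\epsilon))$. It holds, in the form $6h\lambda\le 7\lopt(S,r)$, only when $h\ge 1$.

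\textbf{The residual certificate when $h\ge 1$.} Here $G=C$ with $|C|<r$, so $\lghost(S,C)$ is not a term of $\lman(S,r)$. Its balance constraint relies on the contracted instance having a single hole, whereas $\con(S,C)$ has $r-|C|+1\ge 2$ holes. A relaxation on the sub-instance $S_0$ would not lower-bound the global optimum either. So your split on $\Phi(f_g)\ge\frac12$ is available only when $h=0$.

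For $h\ge 1$ you must bound the residual certificate by $O(\lambda)$ from the coverage left in $S_0\setminus C$, and positivity is not enough. You need the sharper overshoot bound $\delta:=7e^{-6}$: each completed component has coverage in $[\frac12,\frac12+\delta)$, not merely below $1$. This leaves leftover coverage above $h(\frac12-\delta)$.

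The paper then shows $\cert(U,\frac34)\le q_i/a_i$ with $a_i:=\max\{1,\frac{5}{16}\log i\}$, using swaps for $i\le 24$ and prefixes beyond. A numerical series bound then shows that a certificate above $\frac{27}{4}\lambda$ would force coverage below the available amount. The same argument gives $\cert(S_j,\frac34)\le\frac{27}{4}\lambda$ for every ordinary component.

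Your route through $\lopt(S_j,1)$ does not transfer verbatim. The mandatory certificate has no analogue of the optional candidate $R_1=q_1$, so the argument must exploit $q_p>6\lambda$ throughout $S_j$. The rescaling $\min\{6f^\star_p,1\}$ also multiplies the resulting constant by $6$.

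With these bounds and the identity,
\[
\sum_j\cert(S_j,\tfrac34)+Q_G+\cert_0\le Q_G+\tfrac{27}{4}(h+1)\lambda\le\tfrac94(Q_G+6h\lambda)\le\tfrac{63}{4}\lopt(S,r).
\]
In the $h=0$ branch you further need $\cert_0\le\frac{35}{4}\lghost(S,G)$. The paper obtains this from a single joint estimate rather than your case split. It applies $\Phi(f)\le xf+e^{-x}$ with $x=2$ for the ghost and $x=2+5a_{i+1}$ for normal page $i$, and uses both the swap and the prefix bounds.

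Finally, the constants $6$ and $\frac34$ are fixed by the algorithm, so $15.75$ has to come out of the analysis, not from tuning.
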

\begin{proofsketch}
The analysis depends on whether $h=0$ (no ordinary one-hole components are returned) or $h\ge1$. The latter case is very similar to optional admission, while the strengthened lower bound of \Cref{lemma:lb_man} is only required in the former case.

\paragraph{The case $h\ge 1$.}
Line~\hyperref[line:manpar-coverage]{\ref*{line:manpar-coverage}} computes the coverage values $u_p\in[0,1]$ from an optimal solution to $\lopt(S,r)$, with $\sum_{p\in S}u_p=r$. Each cheap page contributes at most one unit of coverage, so removing $C$ leaves at least $h=r-|C|$ units. Simple algebra shows that every expensive page has $u_p<1/2$, so each completed ordinary group contains at least two pages and consumes less than one unit of coverage. Consequently, all $h$ groups can be formed while leaving at least one expensive page in the residual component, establishing feasibility.

The coverage guarantees also control the one-hole certificates. Each ordinary component contains a constant amount of coverage, and the small overshoot when closing each group leaves a constant amount among the normal pages of the residual component as well. Since coverage decreases exponentially with $q_p/\lambda$, this forces each component to contain an inexpensive prefix: if every prefix certificate were much larger than $\lambda$, the ordered intrinsic costs would be too large to supply the required coverage. Thus, each one-hole certificate is $O(\lambda)$, and their total is $O(h\lambda)$, including the residual certificate.

Finally, the global relaxation pays for both these certificates and the ghost simulation. On cheap pages,
the optimal fractions $f_p^\star=e^{-q_p/\lambda}$ are bounded below by a positive constant, so the
relaxation pays a constant fraction of $\sum_{p\in G}q_p$. On expensive pages, it pays $\Theta(\lambda)$ per unit of coverage; since these pages supply at least $h$ units, it also pays $\Omega(h\lambda)$. Together, these observations bound the total certificate by $O(\lopt(S,r))$. A sharper joint accounting gives the stated constant.

\paragraph{The case $h=0$.}
If $h=0$, then the algorithm terminated at either Line~\hyperref[line:manpar-zero-returning]{\ref*{line:manpar-zero-returning}} or Line~\hyperref[line:manpar-cheap-return]{\ref*{line:manpar-cheap-return}}. In either case, $S_0=S$ and $|G|=r<|S|$, so feasibility is immediate. Both returns can be analyzed together. The simulation cost $\sum_{p\in G}q_p$ is controlled by $\lopt(S,r)$: it is zero in the first case, while in the second, the pages in $G$ are cheap and the preceding argument applies. Therefore, it only remains to control the contracted one-hole policy using the ghost lower bound.

Consider an optimal solution to the ghost lower bound $\lghost(S,G)$. Its miss fractions must satisfy
\[
    \Phi(f_g)+\sum_{p\in S\setminus G}\Phi(f_p)\ge1,
\]
reflecting the single hole in the contracted instance. Either the ghost term $\Phi(f_g)$ is at least $1/2$, or the normal-page terms sum to at least $1/2$. These two possibilities correspond to the two types of candidates considered by our one-hole policy (\Cref{alg:mandatory-one-hole}).

If $\Phi(f_g)\ge1/2$, then $f_g$ is bounded below by a positive constant. Let $D$ be the smallest certificate among swaps between the ghost and a normal page. By definition, $q_p\ge D(1+\mu_p/\mu_g)$ for every normal page $p$. The balance constraint therefore gives
\[
    L_{\mathrm{gst}}(S,G)
    =\sum_{p\in S\setminus G}q_pf_p
    \ge D\left(\sum_{p\in S\setminus G}f_p+f_g\right)
    \ge Df_g.
\]
Thus, one of the swap candidates has certificate $O(L_{\mathrm{gst}}(S,G))$. Intuitively, using the ghost to supply substantial hole time forces enough normal-page misses to pay for this swap.

Otherwise, the normal pages satisfy $\sum_{p\in S\setminus G}\Phi(f_p)\ge1/2$. Scaling their miss fractions by a constant makes them feasible for the optional-admission one-hole relaxation. Consequently,
\[
    \lopt(S\setminus G,1)
    =O\bigl(L_{\mathrm{gst}}(S,G)\bigr).
\]
The optional-admission prefix analysis therefore identifies an inexpensive prefix of normal pages. Adding the ghost to this prefix gives a feasible mandatory-admission candidate whose certificate increases by at most a constant factor. 

In either case, our policy includes a candidate with certificate $O(L_{\mathrm{gst}}(S,G))$. Combining this with the simulation bound completes the constant-factor comparison; the tighter quantitative analysis in the appendix yields the stated factor.
\end{proofsketch}

\subsection{Proof of \Cref{thm:perfect}}

\begin{algorithm}[!htbp]
\caption{$\manfinal(S,r,c)$}
\label{alg:mandatory-final}
\begin{algorithmic}[1]
\Statex \textbf{Input:} A set of at least two pages $S$, a hole requirement $r\in\{0,1,\ldots, |S|-1\}$ and a threshold parameter $c>0$.

\State Let $(\mathcal{T}, S_0, G)$ be the output of $\manpar(S,r)$.

\State For each ordinary component $S_j\in \mathcal{T}$, use $\ohp(S_j, c)$ to serve its pages using $|S_j|-1$ cache slots.

\State For the residual component $S_0$, if $G=\emptyset$ pin all pages in $S_0$ in the cache. Otherwise, use $\ohp(\con(S_0,G),c)$ on the contracted instance and translate it back into a policy for the original pages, using $|S_0|-|G|$ cache slots.

\end{algorithmic}
\end{algorithm}

Our final policy (\Cref{alg:mandatory-final}) first partitions $S$ and then applies the one-hole policy to each ordinary component and, when $G\neq\emptyset$, to the contracted residual instance $\con(S_0,G)$. The partition guarantees ensure that the resulting policy uses $|S|-r$ cache slots. With $c=3/4$, \Cref{lemma:one-hole-man} and the contraction simulation guarantee bound its total cost rate by the sum of certificates and the ghost simulation cost appearing on the left-hand side of \Cref{lem:mandatory-partition-feasibility}. This is at most $15.75\lman(S,r)$, which by \Cref{lemma:lb_man} is at most $15.75\rate(\opt)$, establishing the theorem.

\FloatBarrier

\section{Stochastic Weighted Caching with Prediction Errors}\label{sec:predictions}

Finally, we consider the mandatory-admission version of $\pname$ with erroneous per-request predictions. Recall that upon the time $t_p(i)$ when page $p$ arrives for the $i$-th time, the algorithm receives an arbitrary prediction $t'_p(i+1)$ for its next arrival that can depend on the entire realized request sequence as well as the algorithm and its internal randomness. The total prediction error is then defined as $\error:=\sum_{p\in [n]}\epsilon_p$, where
\[
    \epsilon_p
    := \limsup_{T\to \infty}\frac{\expect \big[
    \sum_{i: \; t_p(i)\le T}
    \left|t_p(i)-t'_p(i)\right|\big]}{T}.
\]
In this section, we give the formal proof of \Cref{thm:erroneous} from the introduction, which we re-state below for the reader's convenience:

\erroneouspredictiontheorem*

We first note that robustness follows from standard black-box techniques for combining online policies and achieving the best of their guarantees. \cite{blum97} give a randomized combiner with a multiplicative loss of $(1+\delta)$ for any fixed $\delta>0$ and a prediction-free fallback can be provided by the randomized $O(\log k)$-competitive weighted-caching algorithm of~\cite{bansal2012}, whose guarantees hold even under adversarial requests.

For a deterministic guarantee, we instead use the deterministic combiner of~\cite{antoniadis2020metric}, which incurs a loss of $9$ in the competitive ratio, together with a simple prediction-free fallback: contract the $r$ pages with the smallest intrinsic costs into a ghost, swap the ghost with the best uncontracted page, and pin all other pages. We present and analyze this policy, which we call $\robustname$ (\Cref{alg:robust}) in Appendix~\ref{app:robust}, showing that it is $O(\log k)$-competitive.

Therefore, it remains to establish consistency by providing an $O(1+\error)$-competitive policy. The algorithm we developed in the previous section achieves precisely that guarantee, under a slightly different parametrization: whenever we instantiate our Threshold Belady sub-routine over a page set $S$, we use a slightly smaller threshold $\theta=(1/4)\log|S|$ that will be crucial when addressing prediction errors, as we later show. Our final policy is presented below:

\begin{algorithm}[h]
\caption{$\errorname(S,r)$}
\label{alg:error-final}
\begin{algorithmic}[1]
\Statex \textbf{Input:} A set of at least two pages $S$ and a hole requirement $r\in\{0,1,\ldots, |S|-1\}$

\State Use the deterministic combiner of \cite{antoniadis2020metric}, instantiated over

\begin{enumerate}
    \item [(a)] The prediction-free policy $\robustname(S,|S|-r)$.
    \item [(b)] The prediction-based policy $\manfinal(S,r,1/4)$.
\end{enumerate}
\end{algorithmic}
\end{algorithm}

Note that this policy is not aware of the error, as desired. From our previous discussion, the proof of \Cref{thm:erroneous} follows immediately once we establish the following guarantee:

\begin{theorem}[Consistency]\label{thm:cons}
    For any set of at least two pages $S$ and any cache size $k\ge 1$, under erroneous per-request predictions of total error $\error$, we have 
    \[\rate(\manfinal(S,|S|-k,\frac{1}{4}))\le O(1+\error)\cdot \rate(\opt).\]
\end{theorem}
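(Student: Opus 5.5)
The plan is to reuse everything in \Cref{sec:mandatory} that does not depend on predictions, and to redo only the analysis of the prediction-based subroutine $\manpol$. The partition $\manpar(S,r)$, the lower bound $\lman(S,r)$ of \Cref{lemma:lb_man}, the swap candidates, the pinned pages, and the ghost simulation cost $\sum_{p\in G}q_p$ are all determined by $(\mu_p,w_p)_p$ alone. So it suffices to prove a robust analogue of \Cref{lemma:guarded-eviction-man} and combine it with \Cref{lem:mandatory-partition-feasibility}. I would carry this out in four steps.

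\emph{Step 1: changing the threshold to $c=1/4$.} First I would check that $\cert(U,\tfrac14)=O(\cert(U,\tfrac34))$ for every candidate $U$. The swap certificates do not depend on $c$. For a prefix of size $j$, the first term $2/(c\log j)$ grows only by a factor of $3$. The failure term satisfies $(1-j^{-1/4})^{j-2}\le e^{-(j-2)j^{-1/4}}$, which is $O(1/\log j)$. Hence \Cref{lem:mandatory-partition-feasibility} still gives $\sum_{\text{components}}\cert(\cdot,\tfrac14)+\sum_{p\in G}q_p=O(\lman(S,r))$.

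\emph{Step 2: ghost predictions and additivity.} For the contracted residual instance, I would feed the ghost the minimum of the current predictions over the pages of $G$. Its error is then at most the sum of the errors of the pages in $G$. Consequently the component errors sum to at most $\error$. It then suffices to show that each prefix instance $U$ of size $j$, with $q_j=\max_{p\in U}q_p$, has rate at most $O(\cert(U,\tfrac14))\cdot(1+\sum_{p\in U}\epsilon_p)$. Summing over components then gives $O(1+\error)\cdot\rate(\opt)$.

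\emph{Step 3: qualified anchor misses.} Write $\theta=\tfrac14\log j$. An anchor $a$ was evicted because its predicted gap was at least $\theta/\mu_a$. If the realized gap is at least $\theta/(2\mu_a)$, the old amortization applies with $\theta/2$ in place of $\theta$. Otherwise the prediction overestimated by at least $\theta/(2\mu_a)$. In that case I charge $w_a\le (2q_a/\theta)\cdot|t_a(i)-t'_a(i)|$ to this single prediction. Each prediction is charged at most once, so this contributes $O(q_j/\log j)\cdot\epsilon_a$. The subsequent buffer eviction is paid for exactly as in the perfect-prediction proof.

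\emph{Step 4: failed searches (the main obstacle).} A search at a buffer arrival fails when no anchor is predicted to qualify. I would compare against anchors whose true residual is at least $2\theta/\mu_a$. By memorylessness and independence, at a buffer arrival the number of such anchors is $\mathrm{Bin}(j-2,\,j^{-1/2})$. With probability $e^{-\Omega(\sqrt j)}$ it is below $\tfrac12\sqrt{j-2}$; these events are charged to the stochastic term, which is $O(q_j/\log j)$. Otherwise at least $\tfrac12\sqrt{j-2}$ anchors simultaneously underpredict by at least their own lateness. I split the buffer cost $w_b=q_b/\mu_b$ evenly among them.

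The difficulty is that a single underprediction may be reused at many failures. I would bound the repetitions through the stochastic model. During a true gap of length $L_a$, anchor $a$ can be charged only at buffer arrivals, whose number has expectation at most $2\mu_B L_a$, with $\mu_B\ge\mu_a$. Since $L_a\ge 2\theta/\mu_a$ and the prediction is below $\theta/\mu_a$, the error of that prediction is at least $L_a/2$. So the expected charge to $a$ over that gap is $O(q_j L_a/\sqrt j)=O(q_j/\log j)\cdot\mathrm{error}$. I expect this step to need the most care. The charging must be defined in terms of true arrival times, which are independent of the adversary's choices. Only the final comparison with the error should use the predictions, so that conditioning on the almighty adversary's choices does not break the Poisson expectations.
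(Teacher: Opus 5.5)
Your architecture matches the paper's. You keep the prediction-oblivious partition and $\lman$, use $\cert(U,\tfrac14)\le 3\cert(U,\tfrac34)$, and contract $G$ into a ghost with min-based predictions. You then split the misses of $\manpol$ into false positives, charged to overprediction, and failed searches, charged to many simultaneously underpredicting anchors whose number is controlled by a binomial tail. Step 4, however, has a genuine gap in how it controls repeated charges to the same prediction.

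First, a failed search at $s$ only gives $t_a-t'_a>(t_a-s)-\theta/\mu_a\ge (t_a-s)/2$. That is half the residual at that search, not half the whole gap. For example, take the previous arrival of $a$ at time $0$, $t_a=10\theta/\mu_a$ and $t'_a=7\theta/\mu_a$. Then $a$ is charged at every failed search in $(6\theta/\mu_a,8\theta/\mu_a]$, yet its error is only $3\theta/\mu_a<L_a/2$. What is true is that all searches charging a prediction lie in $(t'_a-\theta/\mu_a,\,t_a-2\theta/\mu_a]$, a window shorter than the error.

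Second, this window, and the event that a search fails at all, are set by an adversary who knows the realized buffer arrivals. So you cannot bound the number of buffer arrivals in the window by $\mu_B$ times its length in expectation. Equivalently, $\mathbb{E}[\mathbf 1\{\text{charged}\}N]$ is not controlled by $\mu_B\,\mathbb{E}[\mathbf 1\{\text{charged}\}L_a]$, because the adversary can place its errors where buffer arrivals cluster. Defining charges through true arrival times does not fix this, since whether a charge exists depends on the predictions.

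The missing idea is a pathwise bound on the density of failed searches. The paper adds a second typicality condition: a buffer arrival at $s$ is typical only if $[s-\theta/\mu_b,\,s+\theta/\mu_b]$ contains at most $8\theta$ arrivals of $b$. By a Poisson tail bound, each arrival violates this with probability at most $e^{-2\theta}$. Such exceptional arrivals are charged to the stochastic term, whatever the adversary does. At a typical failed search, each per-anchor charge is spread uniformly over $[s,s+\theta/\mu_b]$. Any time $x$ then receives charge from at most $8\theta$ searches, so the charge density is deterministically $O(q_b/\theta)$. This density is supported on $[t'_a-\theta/\mu_a,\,t_a-\theta/\mu_a]$, whose length equals the underprediction. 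Your split among $\tfrac12\sqrt{j-2}$ anchors fits this scheme unchanged.

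Two smaller fixes are also needed.
\begin{itemize}
\item In Step 2, the plain minimum can be a stale value below the current time $s$: a page of $G$ predicted early but not yet arrived. The same underprediction is then charged on overlapping intervals across many ghost interarrivals, and $\epsilon_g\le\sum_{p\in G}\epsilon_p$ fails. Clipping, $t'_g=\max\{s,\min_{p\in G}t'_p\}$, confines each charge to its own ghost interarrival interval.
\item In Step 3's short-gap case, the triggering buffer miss, costing $w_b\le Q/\mu_a$, must also be charged to the overprediction, since the short hole interval cannot pay for it.
\end{itemize}
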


In fact, most of our analysis from the perfect-prediction case continues to carry over: both the partition sub-routine (\Cref{alg:mandatory-partition}) and the prefix-optimizing one-hole implementation (\Cref{alg:mandatory-one-hole}) are completely oblivious to the prediction model and only rely on the (still valid) lower bound of the offline optimum and optimizing over fixed rate certificates. Therefore, the only component that requires a different analysis and further insight is $\manpol$ (\Cref{alg:mandguardedbelady}) whose performance thus far has only been analyzed under perfect predictions.

Our main technical contribution in this section is the following analysis of \Cref{alg:mandguardedbelady} in the erroneous predictions model.

\begin{lemma}
\label{lemma:guarded-eviction-errors}
For any set of at least two pages $S$, under erroneous per-request predictions of total error $\error_S:=\sum_{p\in S}\epsilon_p$, we have
    \[\rate\big(\manpol(S,\frac{1}{4}\log|S|)\big)\le (16+40\error_S)\cdot \frac{\max_{p\in S}q_p}{\log|S|}.\]
\end{lemma}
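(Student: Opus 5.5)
The plan follows the structure of the perfect-prediction analysis of $\manpol$ (Lemma~\ref{lemma:guarded-eviction-man}). We decompose the cost into three parts: cycles in which a qualifying anchor was evicted, buffer swaps caused by failed searches, and the closing buffer eviction of each cycle. Write $m:=|S|$, $\theta:=\frac14\log m$ and $Q:=\max_{p\in S}q_p$.

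\textbf{Step 1: buffer costs.} Every miss on a buffer $b$ is tied to a request of the other buffer or of an anchor. Since the buffers are the two pages with largest $\mu$, we have $\mu_b\ge\mu_a$ for all anchors $a$. Hence $w_b=q_b/\mu_b\le Q/\mu_b$, and each buffer-to-buffer alternation costs at most $Q/\mu_b$. This is unchanged from the perfect-prediction proof.

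\textbf{Step 2: qualified anchor misses.} When an anchor $a$ is evicted at time $t$, its predicted return is at least $t+\theta/\mu_a$. If $a$ actually returns after $t+\theta/(2\mu_a)$, the miss cost $w_a$ is amortized over a disjoint hole interval of length at least $\theta/(2\mu_a)$, which contributes rate at most $2Q/\theta=8Q/\log m$.

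Otherwise $a$ returns early. Then the prediction error on that arrival is at least $\theta/(2\mu_a)$, so we charge $w_a\le (2q_a/\theta)\cdot|t_a(i)-t'_a(i)|$. The cycle's closing buffer eviction is charged the same way. Distinct early returns correspond to distinct arrivals of $a$. Summing over arrivals gives rate at most $(2Q/\theta)\,\epsilon_a$ per anchor, hence $O(\error_S Q/\log m)$ in total.

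\textbf{Step 3: failed searches.} This is the main obstacle. At a buffer arrival the search may fail even though some anchors have true residual time at least $\theta/\mu_a$, because their predictions are too early. We count buffer arrivals at which the search fails, without conditioning on the algorithm's history, exactly as in Lemma~\ref{lemma:guarded-eviction}.

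By memorylessness and independence, the number $N$ of anchors with true residual time at least $2\theta/\mu_a$ is a sum of $m-2$ independent Bernoulli$(e^{-2\theta})=$ Bernoulli$(m^{-1/2})$ variables, with mean about $\sqrt m$. A Chernoff bound gives $\Pr[N<\sqrt m/2]\le e^{-\Omega(\sqrt m)}$, which is $O(1/\log m)$ times the buffer rate. That case is paid as in the perfect analysis; this is exactly why $\theta$ is reduced from $\frac34\log m$ to $\frac14\log m$.

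If $N\ge\sqrt m/2$ and the search fails, then each of these $N$ anchors currently has a prediction that is off by at least $\theta/\mu_a$. We distribute the failure cost $Q/\mu_b$ evenly among them, so each receives $O(Q/(\mu_b\sqrt m))$.

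It remains to bound how often a single wrong prediction for $a$ is charged. A prediction is fixed over one interarrival gap of $a$. While that gap stays long it can be charged at most once per buffer arrival, and buffer arrivals occur at rate $\mu_b\ge\mu_a$. So the expected number of charges over a gap of residual length $\ell$ is about $\mu_b\ell$, and the charge contributed during that gap is $O(Q\ell/\sqrt m)$.

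We compare this with the error: the error of this prediction is at least the portion of the gap lying beyond the predicted time, which is roughly $\ell-\theta/\mu_a\ge\ell/2$ when $\ell\ge2\theta/\mu_a$. The careful point is that the long residual is itself the source of the error, so we charge by error length rather than by count. The charge is then $O(Q\,\epsilon_a/\sqrt m)$ per anchor, giving $O(\error_S Q/\sqrt m)\le O(\error_S Q/\log m)$ overall.

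\textbf{Conclusion.} Summing Steps~1--3 and tracking the constants, taking $e^{-\Omega(\sqrt m)}$ together with the $2/\theta$ terms, should produce the stated $(16+40\error_S)Q/\log m$.
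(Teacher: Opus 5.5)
Your Steps 1--2 and the concentration estimate in Step 3 are fine in outline. The gap is at the end of Step 3, where you bound how often one early prediction is charged. The charging times for a prediction $t'_a$ are the failed buffer arrivals $s$ with $t'_a-\theta/\mu_a<s\le t_a-2\theta/\mu_a$. You assert that their expected number is about $\mu_b$ times the window length, because buffers arrive at rate $\mu_b$. That holds for a window chosen independently of the buffer process. Here, however, the left end of the window, and the choice of which gaps carry an error at all, belong to the almighty adversary, who has seen the realized buffer arrivals.

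The adversary can therefore place its errors only at the rare moments when a buffer produces a burst of $K\gg\theta$ arrivals within time $\theta/\mu_b$. Every search in the burst fails. Each of the roughly $\sqrt m$ long-gap anchors pays an error of only about $(\theta+1)/\mu_a$, yet receives a charge of about $Kw_b/\sqrt m$. When $q_b\approx Q$ and the rates are comparable, your per-prediction inequality (charge $O(Q\,\epsilon_a/\sqrt m)$) is off by a factor of order $K/\theta$, which is unbounded. So no bound of the form $C\,\mathcal{E}_S\,Q/\log m$ on these charges can hold, and an error-independent term is unavoidable. ``Charging by error length rather than by count'' only works once the number of charging arrivals per unit time is bounded pathwise.

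The missing ingredient is exactly such a pathwise density bound, which the paper builds into the definition of a typical buffer arrival. Besides requiring at least $2\theta$ anchors with true residual at least $2\theta/\mu_a$ (your $\sqrt m/2$ threshold is a harmless variant), it requires that $[s-\theta/\mu_b,s+\theta/\mu_b]$ contain at most $8\theta$ arrivals of $b$. Each typical failure assigns $w_b/(2\theta)$ to each wrong anchor prediction, spread uniformly over $[s,s+\theta/\mu_b]$. Because $\mu_b\ge\mu_a$, this interval lies inside $[t'_a-\theta/\mu_a,\,t_a-\theta/\mu_a]$. At any instant at most $8\theta$ searches contribute, so the charge density is deterministically at most $4q_b/\theta$, on an interval whose length is exactly the underprediction $t_a-t'_a$. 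Arrivals violating either condition are exceptional and are paid directly by Chernoff and Poisson tail bounds. These depend only on the request process, so the adversary cannot exploit them.

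Separately, the lemma asserts explicit constants, and your Step 2 cannot deliver them. Splitting at $\theta/(2\mu_a)$ charges each qualified buffer--anchor pair (cost at most $2Q/\mu_a$) at rate $4Q/\theta=16Q/\log m$ already, before any exceptional term is added. The paper avoids this factor-two loss as follows. With $s$ the eviction time, $t$ the actual and $t'$ the predicted return, it writes $\theta/\mu_a\le t'-s=(t-s)+(t'-t)$. It then charges the realized hole length and the overprediction $(t'-t)^+$ simultaneously at rate $2Q/\theta$.
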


Observe that up to a multiplicative factor of $O(1+\error_S)$, this guarantee is identical to the one we established in \Cref{lemma:guarded-eviction-man} under perfect predictions. Once this lemma is established, the proof of \Cref{thm:cons} follows immediately by propagating this extra error term throughout our analysis; we defer these details to Appendix \ref{app:consistency} and conclude this section with the proof of the lemma.

\subsection*{Proof of \Cref{lemma:guarded-eviction-errors}}\label{sec:error-analysis}

We use $B:=\{b_1,b_2\}$ to denote the buffer pages and $A:=S\setminus B$ for the anchors: by construction, $\mu_b\ge\mu_p$ for every $b\in B$ and $p\in A$. For notational convenience, we write $m:=|S|$, $Q:=\max_{p\in S}q_p$ and $\theta=(1/4)\log m$. We can assume without loss\footnote{Otherwise, even missing every request on every page gives rate $\sum_{p\in S}q_p \le 7Q \le (16Q)/\log m$ and the lemma trivially follows.} that $m\ge 8$ and therefore $\theta>1/2$. 

Recall that $\manpol(S,\theta)$ circulates its unique hole between the two buffer pages, which have the highest arrival rates in $S$, until an anchor qualifies: its predicted next arrival is at least $\theta/\mu_p$ time units away. The algorithm then evicts that anchor and waits until its next request before returning the hole to a buffer. Prediction errors can disrupt this process in two ways. A \emph{false positive} can cause an evicted anchor to return too soon, while a \emph{false negative} can hide an anchor with a sufficiently distant next arrival. We charge the resulting costs to overprediction and underprediction errors, respectively.

\paragraph{False Positives.}
Suppose a miss on buffer $b$ at time $s$ leads to the eviction of an anchor $p$, whose predicted and actual next arrival times are $t'$ and $t$, respectively. Since $p$ qualifies, we have $t'-s\ge\theta/\mu_p$. The buffer miss and the subsequent anchor miss therefore cost at most
\[
w_b+w_p = \frac{q_b}{\mu_b}+\frac{q_p}{\mu_p} \le Q\cdot\big(\frac{1}{\mu_b}+\frac{1}{\mu_p}\big) 
\le \frac{2Q}{\mu_p}
\le \frac{2Q}{\theta}\cdot\big((t-s)+(t'-t)\big).
\]
If $t'\le t$, the actual hole interval $[s,t)$ alone pays for this cost at rate $2Q/\theta$. Otherwise, we additionally charge the overprediction error $t'-t$ at the same rate. Note that the hole intervals of different anchor evictions are disjoint, and each prediction can cause at most one eviction of its page before the corresponding arrival replaces it. For a finite horizon $[0,T]$, we sum only over pairs with $t\le T$, so every charged error is included in the error budget through $T$. Any initial unpaired anchor miss and the buffer miss of the at most one unfinished pair have total cost at most $2\max_{p\in S}w_p$, which does not affect the asymptotic rate. Therefore, after summing over all the corresponding terms, and taking expectations and the limit superior as $T\to\infty$, we obtain that the total cost rate due to all anchor misses and all buffer misses at which some anchor qualified is at most
\begin{equation}\label{eq:false_positive}
    \frac{2Q}{\theta}\cdot (1+\error_S) \le \big(8 + 8\error_S\big)\cdot \frac{Q}{\log m}.
\end{equation}

\paragraph{False Negatives.}
The only misses not yet accounted for are buffer misses at which no anchor qualifies according to the predictions. Their analysis requires some care: conditioning on such a miss can reveal information about future requests, since both the current hole and the supplied predictions depend on the request sequence. We therefore consider \emph{all arrivals} of each buffer, without conditioning on whether they cause a miss or find a qualifying anchor. Therefore, for the purposes of our remaining analysis, each buffer $b\in B$ arrives at rate $\mu_b$ and causes a miss at cost $w_b$ if no anchor qualifies according to the predictions.

Fix an arrival of buffer $b$ at time $s$. We call it \textbf{typical} if at least $2\theta$ anchors have true next arrival times $t_p\ge s+2\theta/\mu_p$, and the interval $[s-\theta/\mu_b,s+\theta/\mu_b]$ contains at most $8\theta$ arrivals of $b$, including $s$. Otherwise (i.e. if either of the two conditions doesn't hold) we call it \textbf{exceptional}.

At a typical buffer miss with no qualifying anchor, the first condition provides many anchors whose actual gaps are at least twice the qualification threshold. Their predictions must therefore be early by at least $\theta/\mu_p$. This margin lets us distribute the cost of the search among these anchors and over a short interval. The second condition limits how many buffer arrivals can contribute charges over that interval, thereby controlling the total charge density. 

Meanwhile, lowering the threshold to $\theta=(1/4)\log m$ compared to the perfect-prediction case ensures that anchors with actual gaps exceeding $2\theta/\mu_p$ are sufficiently abundant. A concentration bound controls the probability of having too few such anchors, while a separate Poisson tail bound controls excessive buffer arrivals. Together, these bounds control the frequency of exceptional buffer arrivals and hence their contribution to the cost rate.

We formalize this intuition below; we bound the cost rate generated by typical and exceptional buffer misses at which no anchor qualified in \eqref{eq:typical} and \eqref{eq:exceptional} respectively, and then add these contributions together with \eqref{eq:false_positive} in order to conclude the proof.

\paragraph{Typical Buffer Misses.}
For accounting purposes, we charge each typical buffer miss with no qualifying anchor to the erroneous anchor predictions that caused the search to fail. Fix a buffer $b$ and write $h_b:=\theta/\mu_b$. At such a miss at time $s$, at least $2\theta$ anchors have true next arrival times $t_p\ge s+2\theta/\mu_p$. However, their current predictions satisfy $t'_p<s+\theta/\mu_p$, since none qualified. Each of these anchor predictions therefore underestimates its next arrival time by more than $\theta/\mu_p$. Assign a charge of $w_b/(2\theta)$ to each such prediction; together, these charges cover the miss cost $w_b$.

To control repeated charges to the same anchor prediction, we further spread each assigned charge evenly over $[s,s+h_b]$, at a density of $w_b/(2\theta h_b)$ per unit of time. We bound the total charge to a fixed prediction by controlling where these charging intervals lie and how many can overlap.

In particular, fix an anchor prediction $t'_p$ whose actual target arrival is at time $t_p$. Every search at time $s$ that charges this prediction satisfies
\[
    t'_p-\frac{\theta}{\mu_p}<s
    \qquad\text{and}\qquad
    s+h_b\le t_p-\frac{\theta}{\mu_p},
\]
where the second inequality follows from $t_p\ge s+2\theta/\mu_p$ and $h_b\le\theta/\mu_p$. Consequently, every charging interval $[s,s+h_b]$ is contained in
\[
    \left[t'_p-\frac{\theta}{\mu_p},
          t_p-\frac{\theta}{\mu_p}\right].
\]
All charge to this prediction therefore lies within an interval of length $t_p-t'_p$, equal to its underprediction error.

To bound overlaps, fix any time $x$ receiving charge. Every contributing search must occur in $[x-h_b,x]$. Choose one such search at time $s$. Its typicality window $[s-h_b,s+h_b]$ contains $[x-h_b,x]$, so the second typicality condition bounds the number of contributing searches by $8\theta$. Since each contributes density $w_b/(2\theta h_b)$, the combined charge density is at most
\[
    \frac{8\theta w_b}{2\theta h_b}
    =\frac{4q_b}{\theta}.
\]
Integrating over the interval supporting all charges, the total charge to this prediction from buffer $b$ is at most
\[
    \frac{4q_b}{\theta}(t_p-t'_p).
\]

At a finite horizon $T$, each anchor $p$ has at most one prediction issued by $T$ whose actual target arrival is after $T$. Let $a_p(T)$ be the time since $p$'s last request, or $T$ if none has occurred. Charges from searches up to $T$ to this prediction are supported on $[T-a_p(T),T+h_b]$, so their total from buffer $b$ is at most $(4q_b/\theta)(a_p(T)+h_b)$. Since $\mathbb{E}[a_p(T)]\le1/\mu_p$, these outstanding charges have total expectation $O(1)$ after summing over anchors and buffers and do not affect the asymptotic rate.

Finally, summing over anchor predictions and both buffers, and using $q_{b_1}+q_{b_2}\le2Q$, the total cost rate of these misses is at most
\begin{equation}\label{eq:typical}
    \frac{8Q}{\theta}\error_S
    =32\error_S\cdot\frac{Q}{\log m}.
\end{equation}

\paragraph{Exceptional Buffer Misses.}
At an arrival of buffer $b$, the anchors' residual arrival times are independent exponentials. Each anchor therefore has its next arrival at least $2\theta/\mu_p$ time units away with probability $e^{-2\theta}$. The number of such anchors is binomial with mean
\[
    (m-2)e^{-2\theta}
    =\frac{m-2}{\sqrt m}
    \ge \log m=4\theta,
\]
where the inequality holds for $m\ge8$. Since the required number $2\theta$ is at most half this mean, a multiplicative Chernoff bound gives
\[
    \Pr\{\text{first condition fails}\}
    \le \exp\left(-\frac{(m-2)e^{-2\theta}}8\right)
    \le m^{-1/8}
    =e^{-\theta/2}.
\]

The number of other arrivals of $b$ within distance $\theta/\mu_b$ has a Poisson distribution with mean at most $2\theta$, allowing for truncation at time zero. Since $\theta\ge1/2$, failure of the second condition requires more than $8\theta-1\ge6\theta$ other arrivals. A Poisson tail bound therefore gives
\[
    \Pr\{\text{second condition fails}\}
    \le e^{-[6\log3-4]\theta}
    \le e^{-2\theta}
    =m^{-1/2}.
\]
By a union bound, a buffer arrival is exceptional with probability at most $e^{-\theta/2}+e^{-2\theta}$. Counting every exceptional arrival as a miss and applying the Poisson arrival-counting formula, the total cost rate of exceptional buffer misses is at most
\begin{equation}\label{eq:exceptional}
    (q_{b_1}+q_{b_2})
    \bigl(e^{-\theta/2}+e^{-2\theta}\bigr)
    \le \frac{2Q}{\theta}
    =8\cdot\frac{Q}{\log m},
\end{equation}
where we used
$\theta(e^{-\theta/2}+e^{-2\theta})
\le 2/e+1/(2e)<1$.

\section{Declaration of AI Usage} 
Initial iterations of the algorithms and analysis presented in this work were developed through interactions with ChatGPT 5.6 Sol. The authors directed the exploration, evaluated proposed constructions, and made key decisions about which approaches to pursue and how to develop and present the arguments. ChatGPT 6 Astra was also used to optimize the final constant in Theorem 1.1 and help with the presentation of the paper. The authors have verified the correctness of the algorithms and proofs and assume full responsibility for the contents.

\newpage
\appendix
\section*{Appendix}
\begin{appendices}
\addtocontents{toc}{\protect\setcounter{tocdepth}{1}}

\section{Discrete vs Continuous Arrival Models}\label{app:discussion}
Our continuous-time Poisson model is equivalent, at the level of request order, to the standard discrete-time model in which requests are drawn independently and in successive time-steps from a fixed distribution. We briefly formalize this connection. Let
\[
    \mu := \sum_{p\in[n]} \mu_p
    \qquad\text{and}\qquad
    d_p := \frac{\mu_p}{\mu}.
\]
Let $\mathcal{D}$ be the distribution over pages that assigns probability $d_p$ to page $p$. 
Consider the superposition of the independent Poisson request processes of all pages. By standard properties of Poisson processes, the resulting request stream is itself a Poisson process of rate $\mu$. Moreover, independently at each arrival, the requested page is $p$ with probability $d_p$. 
Thus, if we ignore the arrival timestamps, the sequence of requested pages is precisely an i.i.d.\ sequence drawn from the distribution $\mathcal{D}$. Conversely, given any discrete-time i.i.d.\ request distribution $\mathcal{D}$, attaching the requests to the arrivals of an independent rate-$\mu$ Poisson process yields, by Poisson thinning, independent page processes of rates 
$\mu_p=\mu d_p$. 
Hence \textbf{every discrete-time i.i.d.\ instance corresponds to a continuous-time SWC instance, and vice versa.}

For coupled executions, the two cost normalizations also give the same competitive guarantees. In particular, let
\[
    \rate_{\mathrm{cont}}(\pol)
    :=
    \limsup_{T\to\infty}
    \frac{\expect[\cost_{\pol}(T)]}{T}
\]
be the (asymptotically) average cost of policy $\pol$ per unit of time, as defined in \Cref{sec:preliminaries}. Letting $T_m$ denote the time of the $m$-th arrival in the continuous model, the discrete analogue of the rate is simply
\[
    \rate_{\mathrm{disc}}(\pol)
    =
    \limsup_{m\to\infty}
    \frac{\expect[\cost_{\pol}(T_m)]}{m},
\]
and it captures the (asymptotically) average cost of policy $\pol$ per request. We claim that
\[
    \rate_{\mathrm{cont}}(\pol)
    =
    \mu\cdot \rate_{\mathrm{disc}}(\pol).
\]
Indeed, let $N(T)\sim\mathrm{Poisson}(\mu T)$ denote the total number of requests by time $T$, and set $m=\lfloor\mu T\rfloor$. Writing $w_{\max}:=\max_{p\in[n]}w_p$, the two horizons $T$ and $T_m$ contain exactly $|N(T)-m|$ different requests, and hence
\[
    \big|\cost_{\pol}(T)-\cost_{\pol}(T_m)\big|
    \le
    w_{\max}|N(T)-m|.
\]
Since $\expect[|N(T)-m|]=O(\sqrt{T})$, taking expectations gives
\[
    \expect[\cost_{\pol}(T)]
    =
    \expect[\cost_{\pol}(T_{\lfloor\mu T\rfloor})]+o(T).
\]
Dividing by $T$ and taking the limit superior, while observing that
$\lfloor\mu T\rfloor/T\to\mu$, proves the claimed identity. The same argument applies to the offline optimum, and therefore, when its rate is positive,
\[
    \frac{\rate_{\mathrm{cont}}(\pol)}
         {\rate_{\mathrm{cont}}(\opt)}
    =
    \frac{\rate_{\mathrm{disc}}(\pol)}
         {\rate_{\mathrm{disc}}(\opt)}.
\]
Thus, changing the cost normalization for a coupled execution does not change its competitive guarantee.

Finally,the corresponding per-request prediction models differ. In continuous time, a perfect prediction reveals the exact time of the next request to the current page, while in discrete time a perfect PRP typically reveals its next request index and therefore additionally reveals the number of intervening requests. Our perfect-prediction upper bounds transfer through a randomized simulation: given perfect predicted request indices, one may independently generate a rate-$\mu$ Poisson clock and use the resulting arrival times as synthetic continuous-time predictions. The induced request process has exactly the distribution considered above, and hence our continuous-time policies can be simulated without any loss in competitive ratio. The resulting discrete-time policies are randomized, even when the continuous-time policies are deterministic. The simulation also preserves the error-independent robustness guarantee under arbitrary predictions.

For erroneous predictions, our error-dependent guarantee transfers provided that the predicted indices are independent of the synthetic Poisson clock, conditional on the request sequence. The prediction adversary may still know the entire request sequence. Let $p_j$ denote the page requested at index $j$, and let $\hat j$ be the prediction supplied for this request at the preceding request to the same page. Define the accumulated index error and its asymptotic rate by
\[
    A_p(m):=\sum_{\substack{j\le m\\p_j=p}}|j-\hat j|,
    \qquad
    \mathcal E_{\mathrm{disc}}
    :=\sum_{p\in[n]}
    \limsup_{m\to\infty}\frac{\expect[A_p(m)]}{m}.
\]
Under this independence assumption, the Poisson spacing identities give
\[
    \expect\!\left[
        \sum_{\substack{j\le N(T)\\p_j=p}}
        |T_j-T_{\hat j}|
    \right]
    \le \frac{1}{\mu}\expect[A_p(N(T))].
\]
Indeed, for fixed indices $j,\hat j$, the expected timing error restricted to $T_j\le T$ equals
$|j-\hat j|\Pr(N(T)\ge j)/\mu$ when $\hat j\ge j$; when $\hat j<j$, exchangeability of the first $j$ Poisson spacings gives the same expression with $\Pr(N(T)\ge j+1)$ instead. Since the clock is independent of the indices and $\expect[N(T)]=\mu T$, dividing by $T$ and taking limit superiors shows that the induced continuous-time error rate is at most $\mathcal E_{\mathrm{disc}}$. Consequently, our error-dependent competitive guarantee transfers in terms of discrete index error under this prediction model.
\section{A Constant Lower Bound for SWC with Perfect PRP}\label{app:lb}

In this chapter of the appendix we prove for the continuous-time PRP model, which we restate for the reader's convenience.

\lowerbound*

Lower bounds in competitive analysis typically exploit an online algorithm's uncertainty about future requests through an adversarially chosen input sequence. In $\pname$, however, requests are generated by fixed stochastic processes, so we cannot choose their realization. Instead, we identify events that occur with positive frequency and on which hindsight offers an expected advantage, even with perfect per-request predictions. Specifically, we construct a family of disjoint time intervals within the request horizon satisfying:
\begin{enumerate}
    \item[(P1)] The intervals occur at an expected rate of at least $\lambda>0$.

    \item[(P2)] For every online policy $\pol$ with perfect per-request predictions, conditioned on an interval's occurrence, a clairvoyant replacement saves at least $\Delta>0$ in expected cost while matching $\pol$'s cache at both endpoints.
\end{enumerate}
Matching the boundary caches allows us to replace $\pol$'s behavior inside these intervals without changing its behavior outside them. The comparison policy continues to simulate $\pol$ virtually, including its internal state and random choices, and follows this virtual execution outside the replacement intervals. Since the intervals are disjoint, the savings add, yielding
\[
    \rate(\pol)-\rate(\opt)\ge \lambda\Delta.
\]
The key is therefore to identify a local decision whose preferred action depends on future requests that perfect per-request predictions have not yet revealed. As we show, this separation already holds for $n=3$ pages and a cache of size $k=2$. Specifically, consider pages $\{a,b,c\}$ requested by independent Poisson processes with uniform rates $\mu_a=\mu_b=\mu_c=1$. The two light pages are identical, with $w_a=w_b=1$, while the heavy page has weight $w_c=5/2$. 

\subsection*{Defining the Replacement Intervals}

For the occurrence construction and counting arguments, extend the independent request processes to stationary Poisson processes on $\mathbb R$ to avoid boundary effects. This extension is used only in the analysis: the policy still starts at time $0$, and replacements are made only on intervals contained in $[0,T]$.

Fix a request to a light page $p\in\{a,b\}$ at time $t$, and let $q=\{a,b\}\setminus \{p\}$ denote the other light page. We call time $t$ a \emph{raw occurrence} if:
\begin{enumerate}
    \item The request sequence strictly before $t$ ends with $q\mid c^+\mid q^+\mid p^*$.

    \item The request sequence strictly after $t$ begins with $q^+\mid p\mid \{p,q\}^*\mid c^+\mid p^+\mid q$.
\end{enumerate}
Here, vertical bars separate segments, $x^+$ denotes one or more consecutive requests to $x$, $x^*$ denotes zero or more such requests, and $\{p,q\}^*$ denotes an arbitrary, possibly empty, sequence of requests to the light pages. 

Let $s_t$ be the time of the initial displayed $q$-request in the prefix, namely the last request preceding the displayed $c$-run. Let $e_t$ be the time of the terminal displayed $q$-request in the suffix. Associate with the raw occurrence the interval $I_t:=(s_t,e_t]$. We retain the occurrence if $I_t$ does not overlap the interval of any earlier raw occurrence. This exclusion applies even if that earlier occurrence was itself discarded. The retained intervals are therefore pairwise disjoint.

We first establish (P1). Requests to the two light pages arrive at total rate $2$, and each is a candidate for the distinguished $p$-request of an occurrence. The conditions defining raw occurrences and the rule for retaining them reduce this rate by only a constant factor.

\begin{lemma}\label{lem:occurrence_rate}
Let $N(T)$ denote the number of retained occurrence intervals contained in $[0,T]$. Then
\[
    \lim_{T\to\infty}\frac{\expect[N(T)]}{T}
    =\frac{23}{2304}.
\]
\end{lemma}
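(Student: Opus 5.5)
The plan is to pass to the embedded discrete request sequence and exploit stationarity. Since all three rates equal $1$, the superposed process has rate $3$, and (as in Appendix~\ref{app:discussion}) the page labels form an i.i.d.\ sequence, uniform over $\{a,b,c\}$, independent of the arrival times. Light requests therefore arrive at rate $2$.

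Whether a given light request is a raw occurrence, and whether it is retained, depends only on the labels in a finite (a.s.) random window around it. Because the processes are extended stationarily to $\mathbb R$, the Palm/ergodic theorem for marked Poisson processes gives
\[
\lim_{T\to\infty}\frac{\mathbb E[N(T)]}{T} = 2\cdot \Pr[\text{a typical light request is a retained occurrence}].
\]
Boundary effects are negligible. Intervals straddling $0$ or $T$ have lengths with exponential tails, so they contribute $O(1)$ in expectation. Also, retention of an interval inside $[0,T]$ may depend on earlier raw occurrences before $0$, but this does not change the stationary per-request probability. So it suffices to compute the probability that a typical light request $p$ is a raw occurrence, and then the conditional probability that it is retained.

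\emph{Raw probability.} I would read the prefix backward from $t$ as a sequence of independent ``first different symbol'' events. Skipping the $p^*$ run, the first non-$p$ symbol must be $q$, which has probability $1/2$. After the $q^+$ run, the next symbol must be $c$ (probability $1/2$). After the $c^+$ run, the next symbol must be $q$ (probability $1/2$). This gives $1/8$.

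Forward, the first symbol must be $q$ (probability $1/3$). After the $q^+$ run, the next symbol must be $p$ (probability $1/2$). The block $\{p,q\}^*$ ends at the first $c$, which occurs almost surely. After the $c^+$ run the next symbol must be $p$ (probability $1/2$), and after the $p^+$ run it must be $q$ (probability $1/2$). This gives $1/24$.

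The prefix and suffix are independent, so a light request is a raw occurrence with probability $1/192$, and raw occurrences arrive at rate $1/96$. Since $23/2304=\tfrac{1}{96}\cdot\tfrac{23}{24}$, it remains to show that the retention probability is exactly $23/24$.

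\emph{Retention (main obstacle).} I must compute the probability that an earlier raw occurrence has an interval $(s_{t'},e_{t'}]$ overlapping $I_t=(s_t,e_t]$. Since $t'<t$, overlap is equivalent to $e_{t'}>s_t$. I would condition on the full raw-occurrence pattern at $t$ and enumerate the possible earlier distinguished requests $t'$ whose suffix pattern can end after $s_t$.

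The pattern is rigid, and this rigidity is what I expect to make the enumeration feasible. The suffix of $t'$ must contain a $c^+$ run followed by a run of its own $p'$ and then a single $q'$ at $e_{t'}$. The prefix of $t$ contains exactly one $c$-run before $t$, namely the one bracketed by $q$ at $s_t$ and the subsequent $q^+$.

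The overlap events then split into two kinds. In the first, $e_{t'}$ lies strictly inside the prefix window of $t$. In the second, $e_{t'}=s_t$ exactly, which is not an overlap since the intervals are half-open; I would check this boundary case carefully. I expect essentially one configuration to survive. In it, the $c$-run of $t'$ is the displayed $c$-run of $t$, with $p'=q$ and $q'=p$, so the terminal $q'$ of $t'$ falls at or after the first $p$ of the $p^*$ run. This configuration forces additional constraints on the symbols further back, namely a light request followed by a $\{p,q\}^*$ block ending before $s_t$. Its probability is computed by the same first-different-symbol calculus, and I aim for $1/24$.

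I would verify the count by writing the backward scan as a small Markov chain on pattern states and solving its absorption probability. Because the exclusion rule also applies to discarded earlier occurrences, this reduces to an event about raw occurrences only, with no recursion. That is the key point making the computation finite.
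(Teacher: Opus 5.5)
Your stationarity reduction and the raw-occurrence computation agree with the paper: $\tfrac18\cdot\tfrac1{24}=\tfrac1{192}$ per light request, hence rate $\tfrac1{96}$. You also correctly guess that the only overlapping predecessor is the oppositely oriented occurrence whose terminal $c$-run is the current displayed $c$-run. But the retention step is not carried out, and it is the entire content separating $\tfrac1{96}$ from $\tfrac{23}{2304}$. You ``expect'' a unique surviving configuration and ``aim for'' $\tfrac1{24}$, but prove neither. So $\tfrac{23}{2304}$ is asserted rather than derived.

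\textbf{Missing step 1: ruling out all other predecessors.} Since the block $\{p,q\}^*$ contains no $c$, every raw occurrence uses two \emph{consecutive} $c$-runs $C_j,C_{j+1}$. There is at most one raw occurrence per initial $c$-run: its orientation is fixed by the light runs adjacent to $C_j$, and its distinguished request is the last request of the second light run after $C_j$. An occurrence with initial run $C_j$ ends at the first request of the second light run after $C_{j+1}$. That point is at or before the last light request preceding $C_{j+2}$. By half-openness, it therefore cannot overlap any occurrence with initial run $C_{j+2}$ or later. Write $C_i$ for the current displayed $c$-run. Then only the occurrence with initial run $C_{i-1}$ can overlap. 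If it exists, it does overlap, since it ends at the first request of the $p$-run ending at $t$, which lies in $(s_t,t]$.

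\textbf{Missing step 2: the probability computation.} Your description of the constraints (``a light request followed by a $\{p,q\}^*$ block ending before $s_t$'') is too loose to yield $\tfrac1{24}$. It omits the predecessor's own $c$-run $C_{i-1}$ and its required initial alternation $p\mid q\mid p\mid q$. It also misses that the light segment must end \emph{at} $s_t$, in the $q$-run containing it.

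The correct condition has two parts. The light run preceding $C_{i-1}$ is $p$. The light segment between $C_{i-1}$ and $C_i$, which ends in $q$, consists of an even number $L\ge 4$ of runs.

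Scan backward from $s_t$. Each new distinct label is $c$ (ending the segment) or the other light page, with probability $\tfrac12$ each. So $\Pr(L=\ell)=2^{-\ell}$, and the label before $C_{i-1}$ is $p$ with probability $\tfrac12$. These events depend only on the labels strictly before $s_t$. Those are fresh i.i.d.\ uniform labels, independent of the current occurrence's pattern and of its future requirements. This gives an exclusion probability of $\tfrac12\sum_{j\ge2}2^{-2j}=\tfrac1{24}$, and hence the retained rate $\tfrac1{96}\cdot\tfrac{23}{24}=\tfrac{23}{2304}$.
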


We next establish (P2). At the distinguished request to $p$, perfect per-request predictions reveal the next request times
$t_q<t_p<t_c$, but not whether an additional request to $q$ will occur between $t_p$ and $t_c$ (i.e. whether $q\in \{p,q\}^*$). The two possibilities favor different choices about retaining the heavy page $c$. A clairvoyant policy can exploit this uncertainty while matching the online policy's cache at both endpoints of the interval.

\begin{lemma}\label{lem:occurrence_loss}
Fix any online policy $\pol$ with perfect per-request predictions. For an occurrence interval $I=(s,u]$ with $s\ge0$, let $C^*_{\pol}(I)$ be the minimum cost of a feasible clairvoyant policy on $I$ that matches $\pol$'s cache immediately after $s$ and immediately after $u$. Then
\[
    \expect\!\left[
        \cost_{\pol}(I)-C^*_{\pol}(I)
        \,\middle|\,
        I\text{ is retained},\ s\ge0
    \right]
    \ge \frac18.
\]
\end{lemma}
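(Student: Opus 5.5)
We need to prove Lemma \ref{lem:occurrence_loss}: conditioned on a retained occurrence, any online policy loses at least $1/8$ in expectation relative to a clairvoyant replacement that matches its boundary caches.

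\textbf{Reduction to a single decision.} We first pin down what $\pol$'s cache looks like at the boundaries, and then isolate the one decision point that matters.

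At time $s$ there is a $q$-request. Then $c$ is requested, then $q$, and possibly some $p$'s, then the distinguished $p$ at $t$.

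After the $c$-run and the subsequent $q^+$-run, both $c$ and $q$ are cached (mandatory admission, $k=2$, $n=3$), so $p$ is the hole. Hence the request at $t$ (or the first $p$ of the $p^*$-run) is a miss. It forces $\pol$ to evict either $q$ or $c$.

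At $t$, perfect predictions reveal the next arrival times $t_q<t_p<t_c$. Because of the leading $q^+\mid p$ in the suffix, $q$ is requested before the next $p$, and both occur before $t_c$. What is not revealed is whether further $q$-requests occur in $(t_p,t_c)$.

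By memorylessness, the requests in $(t_p,t_c)$, conditioned on everything $\pol$ knows at $t$ (and on $t_q,t_p,t_c$), form independent rate-1 Poisson processes for $p$ and $q$ on that window. The conditioning on the retention rule only involves the past and the fixed suffix shape, which we must check. So the event $E$ that some $q$ lies in $(t_p,t_c)$ has a conditional probability $\pi$ that is bounded away from both 0 and 1 on a set of positive measure. Rather than working with one fixed gap $t_c-t_p$, it is cleaner to average over the conditional distribution of $t_c-t_p$.

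\textbf{Cost comparison of the two branches.} From $t$ until $e$, where $e$ is the request after $c^+\mid p^+$, the only light/heavy interaction is whether $c$ is retained. The boundary condition at $u=e$ is $\pol$'s cache just after the final $q$-request. In that cache $q$ is present and exactly one of $p,c$ is present. Whichever it is, the clairvoyant may make one terminal adjustment, and we count this as part of $C^*$. Similarly, at the start the clairvoyant's cache coincides with $\pol$'s.

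Consider $\pol$'s choice at $t$.

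\emph{If $\pol$ evicts $c$}, it pays $5/2$ at the $c$-run. The clairvoyant alternative, when $E$ fails, keeps $c$ and ping-pongs $p,q$ at unit cost per switch, which is cheaper.

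\emph{If $\pol$ keeps $c$}, it must alternate the single light slot between $p$ and $q$ on every change of light page. Each extra $q$ in $(t_p,t_c)$ adds at least two unit misses, so when $E$ occurs, evicting $c$ once at cost $5/2$ wins.

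For the interval segments where $\pol$ might delay (for example evicting $c$ at a later light miss), I would do a short case analysis over light-page requests in $(t,t_c)$. At each decision point $\pol$ faces the same dichotomy, and the cheapest clairvoyant strategy for each realization of the $\{p,q\}^*$ segment is computable: it is the minimum of keeping $c$ throughout versus evicting $c$ at the first light miss. I then show that $\pol$'s cost minus this minimum is at least $1/2$ (or some constant) on whichever of $E$, $E^c$ $\pol$ guessed wrong. Since $\pol$'s choice at $t$ is measurable with respect to information independent of $E$, the expected loss is at least $\tfrac12\min(\pi,1-\pi)$ averaged over the conditional law of the gap. The closing bound on this average is that it is at least $1/8$.

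\textbf{Main obstacle.} The main difficulty is the independence claim. Conditioning on $I$ being \emph{retained}, and on the whole suffix pattern, including $c^+\mid p^+\mid q$ after $t_c$, could correlate $E$ with the information available at $t$.

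The plan is to note three facts. First, retention depends only on earlier raw occurrences, whose intervals end before $e$ and are measurable with respect to requests before $t$ plus the suffix shape. Second, the suffix pattern constraint on $(t_p,t_c)$ is just ``only light pages,'' which is implied by the definition of $t_c$. Third, the events after $t_c$ are independent of the window by the strong Markov property.

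With this in hand, the conditional law of the light requests in $(t_p,t_c)$ given $\mathcal F_t$ together with the predictions is Poisson of rate 2, split evenly between $p$ and $q$. It remains to compute the distribution of $t_c-t_p$ given $t_p<t_c$ and to evaluate the resulting explicit integral numerically, checking that it exceeds $1/8$.
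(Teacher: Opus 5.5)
Your skeleton is the paper's: fix the cache state right after the distinguished request, split on whether $q$ recurs in $(t_p,t_c)$, use memorylessness to get conditional probability $e^{-H}$ (with $H=t_c-t_p$) that it does not, and average over $H$. The gap is the step that carries all the weight: a loss of at least $1/2$ on the wrong branch, for \emph{every} continuation of $\pol$. You only assert this, and part of your supporting reasoning is false.

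There is no slack. Given $t_q<t_p<t_c$, $H\sim\mathrm{Exp}(1)$, so $e^{-H}$ is uniform and the bound is exactly $\tfrac12\int_0^1\min\{v,1-v\}\,dv=\tfrac18$. A per-branch margin of ``some constant'' below $1/2$ does not prove the lemma. Specifically:
(a) $q$ and $c$ need not both be cached before the $p$-run. If $c\notin K_{\pol}(s)$, $\pol$ may evict $q$ at the $c$-miss and $c$ at the next $q$-miss. So you must split on $K_{\pol}(t)$, not on a decision taken at $t$.
(b) Later decision points do \emph{not} face the same dichotomy. The prediction issued at the last $q$-request before $t_p$ reveals whether a $q$ falls in $(t_p,t_c)$, so delayed evictions cannot be analysed as further blind guesses.
(c) ``Each extra $q$ adds at least two misses'' holds only if $c$ is kept to the very end. $\pol$ can keep $c$ through $t_c$ and evict it at the next $p$-request, saving the final $q$-miss. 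This continuation lies outside your delay analysis on $(t,t_c)$, and it is exactly the one that makes the margin $1/2$ rather than more.

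What you need is a bound from each state at $t$ that holds whatever $\pol$ later learns. Let $\kappa:=\tfrac52\mathbf{1}\{c\notin K_{\pol}(s)\}$. The comparison policies retain $q$ at the initial $c$-run, then keep $c$ if no $q$ recurs, and otherwise evict $c$ at the first $p$-request. They cost $\kappa+4$ and $\kappa+\tfrac92$ respectively. Both meet the final $q$-request as a miss from cache $\{p,c\}$, so matching $K_{\pol}(u)$ is free; any paid ``terminal adjustment'' would eat the margin.

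Against this:
\begin{itemize}
\item If $c\notin K_{\pol}(t)$, then $\pol$ pays at least one light miss before $t$, $\tfrac52$ at $t_c$, and one miss in $p^+\mid q$, i.e.\ at least $\kappa+\tfrac92$.
\item If $c\in K_{\pol}(t)$ and a $q$ recurs, evicting $c$ before $t_c$ costs at least $\kappa+\tfrac{11}2$: misses at the first $p$ and at $t_q$, reloading $c$, and one suffix miss.
\item In the same case, retaining $c$ through $t_c$ costs at least $\kappa+5$: the alternation $p,q,p,q$ plus one more light miss, either in the window or at the $p$ after $t_c$.
\end{itemize}
Since $K_{\pol}(t)$ is determined by the information at $t$, averaging over $H$ then gives $1/8$ in closed form, with no numerical check needed.
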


For each realization, a policy attaining $C^*_{\pol}(I)$ provides
a feasible replacement for $\pol$ inside $I$. Since the retained
intervals are disjoint and each replacement preserves the boundary
caches, their savings add. For boundary accounting, use the savings truncated at $1/2$: the proof of \Cref{lem:occurrence_loss} gives the same conditional expectation bound of $1/8$ for these truncated savings, since it uses only the lower bound in \eqref{eq:local_loss}. Sum first over retained occurrences whose distinguished request lies in $[0,T]$ and whose interval starts at or after $0$. The proof of \Cref{lem:occurrence_rate} shows that their expected number is $(23/2304)T+O(1)$, and that only $O(1)$ in expectation end after $T$. Discarding the latter loses at most $1/2$ per interval. Consequently,
\[
    \expect[\cost_{\pol}(T)]-\expect[\cost_{\opt}(T)]
    \ge \frac{23}{2304}\cdot\frac18\,T-O(1).
\]
Dividing by $T$ and taking limit superiors yields
\[
    \rate(\pol)-\rate(\opt)
    \ge \frac{23}{2304}\cdot\frac18
    =\frac{23}{18432}>0.001.
\]
Finally, consider the policy that keeps $c$ permanently cached and uses the remaining slot for the most recently requested light page. Light-page requests arrive at total rate $2$, and each differs from the preceding light-page request with probability $1/2$. This policy therefore has cost rate $1$, implying $\rate(\opt)\le1$. Consequently, we have $\rate(\pol)\ge 1.001\rate(\opt)$ which proves \Cref{thm:lowerbound}. The rest of this section is dedicated towards proving the two lemmas.

\subsection{Proof of \Cref{lem:occurrence_rate}}
We first compute the rate of raw occurrences and then account for the exclusion of overlapping intervals. The superposition of the three request processes is a rate-$3$ Poisson process whose request labels are independent and uniform over $\{a,b,c\}$. Fix a request to a light page $p$ at time $t$, and let $q$ be the other light page. Looking backward from $t$, the first label different from $p$ is $q$ with probability
\[
    \sum_{r\ge0}\left(\frac13\right)^{r+1}=\frac12.
\]
This sum allows arbitrarily many $p$-requests before the distinguished request. Similarly, the first label preceding that $q$-run must be $c$, and the first label preceding that $c$-run must be $q$, each with conditional probability $1/2$. Thus the required prefix has probability
\[
    \Pr\!\left(q\mid c^+\mid q^+\mid p^*
        \text{ immediately before }t\right)=\frac18.
\]
Independently of the past, the future must begin with $q^+\mid p$, which has probability
\[
    \sum_{m\ge1}\left(\frac13\right)^{m+1}=\frac16.
\]
After this first future $p$-request, the segment $\{p,q\}^*$ places no restriction on the requests before the first subsequent
$c$-request, which occurs almost surely. After that specified
$c$-request, the remaining suffix must be $c^*\mid p^+\mid q$. Its probability is $1/4$: the first subsequent label different
from $c$ must be $p$, and the first label subsequently different from $p$ must be $q$, each with probability $1/2$. Therefore, a light-page request is a raw occurrence with probability
\[
    \frac18\cdot\frac16\cdot\frac14=\frac1{192}.
\]
Since light-page requests arrive at total rate $2$, raw occurrences have rate
\begin{equation}\label{eq:raw_occurrence_rate}
    \rho_{\mathrm{raw}}=2\cdot \frac1{192}=\frac1{96}.
\end{equation}

We next determine the fraction of raw occurrences that are retained. There is no multiple counting within a $p$-run: only its last request can satisfy the required future prefix $q^+\mid p$. Index the $c$-runs in chronological order. Every raw occurrence
uses two consecutive $c$-runs, and there is at most one raw occurrence with a given initial $c$-run. Indeed, its orientation is determined by the cheap-page runs surrounding that $c$-run, and its distinguished request must be the last request in the second cheap-page run following it.

Only raw occurrences with consecutive initial $c$-runs can
overlap. To see this, an occurrence using $c$-runs $C_i,C_{i+1}$
ends at the first $q$-request in its suffix
$C_{i+1}\mid p^+\mid q$, before the next $c$-run. Any occurrence
starting at $C_{i+2}$ or later begins immediately after the last
cheap-page request preceding its initial $c$-run, which cannot
precede this endpoint.

Now fix a current raw occurrence with prefix
$q\mid C_i\mid q^+\mid p^*$, followed by its distinguished
$p$-request. Let $L$ be the number of cheap-page runs between
$C_{i-1}$ and $C_i$. This segment ends in $q$. Looking backward,
each preceding distinct label is either $c$, which terminates
the segment, or the other light page, which extends it, with
equal probability. Consequently,
\[
    \Pr(L=\ell\mid\text{current prefix})=2^{-\ell},
    \qquad \ell\ge1.
\]

An overlapping predecessor exists exactly when $L$ is even and
at least $4$, and the cheap-page run immediately preceding
$C_{i-1}$ is $p$. In that case, the compressed request pattern is
\[
    p\mid C_{i-1}\mid
    \underbrace{p\mid q\mid p\mid q\mid\cdots\mid p\mid q}
        _{L\text{ cheap-page runs}}
    \mid C_i\mid q\mid p.
\]
The predecessor has the opposite orientation: its distinguished
request is the last $q$ in the second cheap-page run after
$C_{i-1}$, and its interval ends at the first $p$ in the current
$p$-run. Conversely, any overlapping predecessor must have this
form. The condition $L\ge4$ supplies its required initial
alternation, while even parity ensures that the segment starts
in $p$ and ends in $q$.

Conditional on $L$, the required label preceding $C_{i-1}$ is
$p$ with probability $1/2$. Hence the exclusion probability is
\[
    \Pr(\text{overlapping predecessor}\mid
        \text{current raw occurrence})
    =\frac12\sum_{j=2}^{\infty}2^{-2j}
    =\frac1{24}.
\]
This test depends only on the history through the current
distinguished request, so conditioning on the current occurrence's
future requirements does not change the calculation. Moreover,
the exclusion rule checks all earlier raw occurrences, regardless
of whether they were retained. Thus
\begin{equation}\label{eq:retained_occurrence_rate}
    \rho_{\mathrm{retained}}
    =\rho_{\mathrm{raw}}\left(1-\frac1{24}\right)
    =\frac1{96}\cdot\frac{23}{24}
    =\frac{23}{2304}.
\end{equation}

Finally, let $\widehat N(T)$ count retained occurrences whose
distinguished request lies in $[0,T]$. Stationarity and
\eqref{eq:retained_occurrence_rate} give
\[
    \expect[\widehat N(T)]=\frac{23}{2304}T.
\]
The associated intervals have finite expected length: the required
runs have geometric lengths, and the waiting times for the
specified pages have finite means. Therefore, the expected number
of such intervals crossing either endpoint of $[0,T]$ is $O(1)$. Therefore, it follows that
\[
    \expect[N(T)]=\frac{23}{2304}T+O(1),
\]
and dividing by $T$ proves the lemma.

\subsection{Proof of \Cref{lem:occurrence_loss}}

Fix a candidate request to a light page $p$ at time $t$, let $q$ be the other light page, and condition on this occurrence being
retained and its interval starting at or after time zero. Write $I=(s,u]$ for its associated interval, and let $t_q<t_p<t_c$ be the next request times of $q,p,c$ after time $t$, respectively.
Define
\[
    \kappa:=\frac52\,
    \mathbf{1}\{c\notin K_{\pol}(s)\},
\]
where $K_{\pol}(\tau)$ denotes $\pol$'s cache immediately after the request at time $\tau$. Thus $\kappa$ is the compulsory cost of the first $c$-request in the interval. By mandatory admission, both $K_{\pol}(s)$ and $K_{\pol}(u)$ contain page $q$. Next, we partition the possible continuations into the two events
\[
    S:=\{\text{no $q$-request occurs in }(t_p,t_c)\},
    \qquad
    L:=\{\text{at least one $q$-request occurs in }(t_p,t_c)\}.
\]
The event $L$ allows arbitrarily many additional requests to either
light page. We first show that
\begin{equation}\label{eq:local_loss}
    \cost_{\pol}(I)-C^*_{\pol}(I)
    \ge
    \begin{cases}
        \frac12\mathbf{1}_S,
            &\text{if }c\notin K_{\pol}(t),\\[2mm]
        \frac12\mathbf{1}_L,
            &\text{if }c\in K_{\pol}(t).
    \end{cases}
\end{equation}
As usual, we may assume that $\pol$ is lazy, maintaining a full
cache and changing it only when serving a miss.

We begin by constructing feasible comparison policies that match
both boundary caches. In either case, the comparison policy
serves the initial $c$-run while retaining $q$, paying $\kappa$
if $c$ is initially absent. It therefore holds $\{q,c\}$ before
the first $p$-request.

On $S$, the comparison policy retains $c$ until the final
$q$-request. Ignoring consecutive repetitions, its light-page
misses are on the sequence $p,q,p,q$, giving total cost
$\kappa+4$. Immediately before the final $q$-request, its cache
is $\{p,c\}$, so it can choose its final eviction to match either
possible cache $K_{\pol}(u)$. Hence
\begin{equation}\label{eq:short_comparison}
    C^*_{\pol}(I)\le \kappa+4
    \qquad\text{on }S.
\end{equation}

On $L$, the comparison policy evicts $c$ at the first $p$-request
and retains both light pages until $t_c$. All intervening
light-page requests are then hits. At $t_c$, it fetches $c$ and
evicts $q$; the suffix $p^+\mid q$ incurs one further miss.
Again, its final eviction can match either possible cache
$K_{\pol}(u)$. Thus
\begin{equation}\label{eq:long_comparison}
    C^*_{\pol}(I)
    \le \kappa+1+\frac52+1
    =\kappa+\frac92
    \qquad\text{on }L.
\end{equation}

Suppose first that $c\notin K_{\pol}(t)$. The policy must have
incurred at least one light-page miss between the initial
$c$-run and $t$. It must also pay $5/2$ to fetch $c$ at $t_c$,
and at least one light-page miss in the suffix $p^+\mid q$,
because admitting $c$ displaces one of the two light pages.
Consequently,
\[
    \cost_{\pol}(I)\ge\kappa+\frac92.
\]
Together with \eqref{eq:short_comparison}, this gives a loss of
at least $1/2$ on $S$.

Now suppose that $c\in K_{\pol}(t)$ and $L$ occurs. If $\pol$
evicts $c$ before $t_c$, it has already incurred at least two
light-page misses: the first $p$-request and the $q$-request at
$t_q$. It must subsequently pay $5/2$ to reload $c$ and at least
one light-page miss in the suffix. Its cost is therefore at least
$\kappa+11/2$. Alternatively, if $\pol$ retains $c$ through $t_c$, it incurs
at least four light-page misses on the alternation $p,q,p,q$.
If there are further light-page alternations before $t_c$, these
already force a fifth miss. Otherwise, its cache at $t_c$ is
$\{q,c\}$, and the following $p$-request forces a fifth miss.
Thus, in either case,
\[
    \cost_{\pol}(I)\ge\kappa+5.
\]
Combining this with \eqref{eq:long_comparison} gives a loss of
at least $1/2$ on $L$. Since
$\cost_{\pol}(I)\ge C^*_{\pol}(I)$ always, this proves
\eqref{eq:local_loss}.

It remains to quantify the uncertainty between $S$ and $L$.
Let $\mathcal F_t^{\mathrm{PRP}}$ denote the policy's complete
information immediately after serving the distinguished request,
including its predictions and internal random choices through
that decision. In particular, $t_q,t_p,t_c$ and $K_{\pol}(t)$
are measurable with respect to this information. Set
\[
    H:=t_c-t_p.
\]
Although the next $q$-request at $t_q$ is known, requests to $q$
after $t_q$ have not yet been revealed. Conditional on the
information available at $t$, these subsequent requests form
a fresh rate-one Poisson process. Furthermore, the overlap
exclusion depends only on the past, and the required suffix
after $t_c$ is independent of the requests in $(t_p,t_c)$.
Therefore, on a retained occurrence,
\[
    \Pr(S\mid\mathcal F_t^{\mathrm{PRP}},\,I\text{ is retained}
        ,\ s\ge0)
        =e^{-H},
    \qquad
    \Pr(L\mid\mathcal F_t^{\mathrm{PRP}},\,I\text{ is retained}
        ,\ s\ge0)
        =1-e^{-H}.
\]
The two continuations thus remain possible with the same exact
predictions at the distinguished request. Applying
\eqref{eq:local_loss}, regardless of the policy's decision, gives
\[
    \expect\!\left[
        \cost_{\pol}(I)-C^*_{\pol}(I)
        \,\middle|\,
        \mathcal F_t^{\mathrm{PRP}},\,I\text{ is retained}
        ,\ s\ge0
    \right]
    \ge \frac12\min\{e^{-H},1-e^{-H}\}.
\]

Finally, before conditioning on the policy's information, the
three next-request residuals are independent rate-one exponentials. Conditional on their ordering $t_q<t_p<t_c$,
memorylessness gives $H\sim\mathrm{Exp}(1)$. Neither the
past-based exclusion nor the independent suffix condition
changes this distribution. The additional condition $s\ge0$ also depends only on the past and does not change this distribution. Hence $e^{-H}$ is uniform on $[0,1]$ when averaging over retained occurrences with $s\ge0$. Taking expectations,
we obtain
\[
    \expect\!\left[
        \cost_{\pol}(I)-C^*_{\pol}(I)
        \,\middle|\,
        I\text{ is retained},\ s\ge0
    \right]
    \ge \frac12\int_0^1\min\{v,1-v\}\,dv
    =\frac18,
\]
as claimed.

\section{Supplemental Material for \Cref{sec:optional} (Optional Admission)}
In this chapter of the appendix, we present all omitted proofs and details from \Cref{sec:optional}.

\subsection{Formal Proof of the Lower Bound (\Cref{lemma:lb_opt})}\label{appendix:optional_lb}
We give a more formal proof of our lower bound using the key intuition that was sketched in the main body. Fix any set of $S$ of $n:=|S|$ pages and let $r:=n-k\in \{0,1,\ldots , n\}$ be the number of holes we need to maintain. For any $T>0$, let $M_p(T)$ denote the (random) number of misses incurred by the offline optimal policy ($\opt$) on page $p\in S$ during horizon $[0,T]$, and let
$$
H_p(T)
:=
\int_0^T
\mathbf{1}\{p\text{ is not in $\opt$'s cache at time }t\}\,dt
$$
denote the total amount of time during $[0,T]$ for which page $p$ is a hole. Since $\opt$ must maintain at least $r$ holes at any time $t$, feasibility implies that 
\begin{equation}\label{eq:opt-lb-hole-feasible}
   \sum_{p\in S} H_p(T) \ge rT \quad\text{with probability $1$.}
\end{equation}

We first relate the hole duration of a page to the number of misses on that page. Fix a page $p$ and a parameter $\tau_p\ge 0$ to be determined later. Consider any interarrival interval of page $p$, of some length $L$. Since page $p$ is requested at both endpoints of the interval, the page can become a hole at most once, since $\opt$ is assumed lazy without loss. If it does become a hole, the corresponding hole interval has length at most
$$
L \le \max\{\tau_p,L\} = \tau_p+(L-\tau_p)^+.
$$
Therefore, summing over the interarrival intervals of page $p$ contained in $[0,T]$ and adding a boundary term, we have
\begin{equation}\label{eq:opt-lb-duration-bound}
    H_p(T)
    \le
    \tau_p M_p(T)
    +
    \sum_{\ell}(L_{p,\ell}-\tau_p)^+
    +B_p(T)
\end{equation}
where the sum ranges over \textit{all} the interarrival intervals of page $p$ contained in $[0,T]$ to remove any dependencies on whether page $p$ was actually evicted during an interval and its length. Here $B_p(T)$ is the sum of the time before the first request and the time after the last request to $p$ within $[0,T]$, taking both terms to be $T$ if no request occurs. Each term has expectation at most $1/\mu_p$, so $\mathbb{E}[B_p(T)]\le 2/\mu_p$.

Since the interarrival times of page $p$ are exponential with rate $\mu_p$, namely $L_{p,\ell}\sim\operatorname{Exp}(\mu_p)$, we furthermore obtain
$$
\mathbb{E}\left[(L_{p,\ell}-\tau_p)^+\right]
=
\int_{\tau_p}^\infty \Pr(L_{p,\ell}>t)\,dt
=
\int_{\tau_p}^\infty e^{-\mu_p t}\,dt
=
\frac{e^{-\mu_p\tau_p}}{\mu_p}.
$$
By independence of pages $p$'s arrivals, a standard renewal-reward argument therefore implies that the expected sum over complete interarrival intervals satisfies
$$
\mathbb{E}\left[\sum_{\ell}(L_{p,\ell}-\tau_p)^+\right]
=
T e^{-\mu_p\tau_p}+o(T).
$$
Choose a sequence $T_j\to\infty$ along which $\mathbb{E}[M_p(T_j)]/T_j$ converges for every $p\in S$. Such a sequence exists because these quantities lie in $[0,\mu_p]$ and $S$ is finite.
Taking expectations in~\eqref{eq:opt-lb-duration-bound}, dividing by $T_j$, and letting $T_j\to\infty$ therefore gives
\begin{equation}\label{eq:opt-lb-hole-rate}
\limsup_{T_j\to\infty}\frac{\mathbb{E}[H_p(T_j)]}{T_j}
\le
\tau_p x_p+e^{-\mu_p\tau_p},
\end{equation}
where
$$
x_p
:=
\lim_{T_j\to\infty}\frac{\mathbb{E}[M_p(T_j)]}{T_j}
$$
is the limiting normalized expected miss count on page $p$ along this subsequence.

Now let $f_p:=x_p/\mu_p$ be the normalized version of this quantity; since the number of misses on a page is at most its number of requests, we have that $f_p\in[0,1]$ for each $p\in S$. Finally, selecting the (bound-minimizing) thresholds
\[
\tau_p := \frac{1}{\mu_p}\log\frac{1}{f_p},
\]
with the case $f_p=0$ handled by letting $\tau_p\to\infty$ in~\eqref{eq:opt-lb-hole-rate}, and combining with~\eqref{eq:opt-lb-hole-feasible} and~\eqref{eq:opt-lb-hole-rate}, we have that 
\[
\sum_{p\in S}\Phi(f_p)\ge r.
\]

It therefore remains to argue that $\rate(\opt)\ge \sum_{p\in S}\mu_p w_p f_p = \sum_{p\in S}x_p w_p$.
Indeed, along the same subsequence, we have
\[
\sum_{p\in S}x_p w_p
=
\lim_{T_j\to\infty}
\frac{\sum_{p\in S}w_p\,\mathbb{E}[M_p(T_j)]}{T_j}
=
\lim_{T_j\to\infty}
\frac{\mathbb{E}[\operatorname{cost}_{\opt}(T_j)]}{T_j}
\le
\rate(\opt),
\]
and the claim follows.

\subsection{Omitted Calculation in the Proof of \Cref{lemma:one-hole-opt}}\label{appendix:computation}

From \eqref{eq:for_contr}, it suffices to show that for any vector $f\in [0,1]^n$ for which
\[
\sum_{j=1}^na_jf_j<\frac17 \quad , \quad a_j:= \max\left\{1,\frac{\log j}{3}\right\}
\]
we have
\[
\sum_{j=1}^n \Phi(f_j) = \sum_{j=1}^n f_j \big(1 + \log\frac{1}{f_j}\big) < 1.
\]

We first use the elementary inequality $f\left(1+\log\frac1f\right)
\le sf+e^{-s}$ which is valid for every $f\in(0,1]$ and $s>0$. Applying it to each $f_j$ with $s_j=5.25a_j$ gives

$$
    \sum_{j=1}^n\Phi(f_j) \le \sum_{j=1}^n 5.25a_j f_j +
    \sum_{j=1}^n e^{-5.25a_j}
    <
    \frac{5.25}{7}
    +
    \sum_{j=1}^n e^{-5.25a_j}.
$$

It remains to bound the second term. Since $e^3\in(20,21)$, we have $a_j=1$ for $j\le20$, while $a_j=\log(j)/3$ for $j\ge21$. Therefore,

$$
    \sum_{j=1}^n e^{-5.25a_j}
    \le
    20e^{-5.25}
    +
    \sum_{j=21}^{\infty}j^{-7/4}.
$$
Using the integral bound for the latter sum,

$$
    20e^{-5.25}
    +
    \sum_{j=21}^{\infty}j^{-7/4}
    \le
    20e^{-5.25}
    +
    \int_{20}^{\infty}x^{-7/4}\,dx
    =
    20e^{-5.25}
    +
    \frac{4}{3}20^{-3/4}
    <\frac14.
$$
Consequently,

$$
    \sum_{j=1}^n\Phi(f_j)
    <
    \frac{5.25}{7}+\frac14
    =
    1,
$$
contradicting the assumed feasibility of $f$ and concluding the proof.

\subsection{The $\greedypart$ Sub-Routine (Proof of \Cref{lemma:partition_opt})}\label{app:opt-partition}

The following subroutine groups pages greedily according to their coverage in the global lower-bound relaxation.

\begin{algorithm}[h]
\caption{$\greedypart(S,r)$}
\label{alg:greedypartition}
\begin{algorithmic}[1]
\Statex \textbf{Input:} A non-empty set of pages $S$ and a hole requirement $r\in\{0,1,\ldots, |S|\}$.
\Statex \textbf{Output:} A partition $\mathcal{T}:=\{S_j\}_{j=1}^r$ of $S$ into $r$ non-empty sets.

\If{$r=|S|$}
\State \textbf{Return} each page in $S$ as a singleton set.
\EndIf

\If{at least $r$ pages in $S$ have $q_p=0$}
\State Choose distinct pages $z_1,\ldots,z_r$ with $q_{z_j}=0$.
\State Set $S_j:=\{z_j\}$ for $j\in[r-1]$, $S_r:=S\setminus\{z_1,\ldots,z_{r-1}\}$ and \textbf{return} $\mathcal{T}=\{S_j\}_{j=1}^r$.
\EndIf

\State Set $u_p:=(1+q_p/\lambda)e^{-q_p/\lambda}$ for the unique $\lambda>0$ satisfying $\sum_{p\in S}(1+q_p/\lambda)e^{-q_p/\lambda}=r$.

\State Set $E:=S$.

\For{$j=1,\ldots,r-1$}
\State Set $S_j:=\emptyset$.
\While{$\sum_{p\in S_j}u_p<1/2$}
\State Choose a page $p\in E$ with smallest $q_p$.
\State Set $S_j\gets S_j\cup\{p\}$ and $E\gets E\setminus\{p\}$.
\EndWhile
\EndFor

\State Set $S_r:=E$ and \textbf{return} $\mathcal{T}=\{S_j\}_{j=1}^r$.
\end{algorithmic}
\end{algorithm}

We now prove \Cref{lemma:partition_opt}. Recall that 
\[
\lopt(S,x):=
\min\left\{
    \sum_{p\in S} f_p q_p :
    f_p\in[0,1]\ \forall p\in S,\;\;
    \sum_{p\in S}\Phi(f_p)\ge x
\right\}.
\]
 If $r=|S|$, feasibility of the relaxation requires $f_p=1$ for every page, so
\[
\lopt(S,r)=\sum_{p\in S}q_p
=\sum_{j=1}^r\lopt(S_j,1).
\]
If at least $r$ pages have zero intrinsic cost, then $\lopt(S,r)=0$: assigning $f_p=1$ to $r$ such pages and zero to all others gives a feasible solution of cost zero. Each returned component also contains a zero-cost page, so $\lopt(S_j,1)=0$ for every $j\in[r]$. Thus, both early returns satisfy the lemma.

In all other cases, the number of zero-cost pages is strictly smaller than $|S|$. The KKT conditions for the relaxation give an optimal solution of the form
\[
f_p^\star=e^{-q_p/\lambda}\qquad , 
\qquad
\sum_{p\in S}\Phi(f_p^\star)=r.
\]
The sum on the left is continuous and strictly increasing in $\lambda>0$, with limits equal to the number of zero-cost pages and $|S|$ as $\lambda\to0$ and $\lambda\to\infty$, respectively. Hence, the multiplier exists, is unique, and can be found by one-dimensional binary search. The algorithm therefore uses the coverage values
\[
u_p=\Phi(f_p^\star),
\qquad
\sum_{p\in S}u_p=r.
\]

We next show that the algorithm returns a partition into $r$ non-empty sets, each with coverage at least $1/2$. Selecting pages by non-decreasing intrinsic cost is equivalent to selecting them by non-increasing coverage. Every page with coverage at least $1/2$ therefore forms a singleton group. Thereafter, every selected page has coverage below $1/2$, so a group closed by the loop has total coverage less than $1$: immediately before its last page is added, its coverage is below $1/2$, and that page contributes less than $1/2$. Thus, every completed group consumes at most one unit of coverage.

Inductively, before constructing group $j\le r-1$, the available pages have coverage at least $r-(j-1)$, so the loop can always reach its target. After constructing these $r-1$ groups, the remaining pages have coverage at least $1$ and form the final group $S_r$. Every page belongs to exactly one group, and
\[
\sum_{p\in S_j}\Phi(f_p^\star)\ge\frac12
\qquad\text{for every }j\in[r].
\]

Finally, define $f'_p:=\min\{6f_p^\star,1\}$. We claim that its restriction to each component $S_j$ is feasible for $\lopt(S_j,1)$. If some $p\in S_j$ satisfies $f_p^\star\ge1/6$, then $f'_p=1$, which alone supplies the required coverage. Otherwise, $f'_p=6f_p^\star$ for every $p\in S_j$. For $0<x\le1/6$, we have
\[
\frac{\Phi(6x)}{\Phi(x)}
=6\frac{1+\log(1/x)-\log6}{1+\log(1/x)}
\ge\frac{6}{1+\log6}>2.
\]
Consequently,
\[
\sum_{p\in S_j}\Phi(f'_p)
\ge2\sum_{p\in S_j}\Phi(f_p^\star)
\ge1,
\]
establishing feasibility. Summing the costs of these feasible solutions over the partition gives
\[
\sum_{j=1}^r\lopt(S_j,1)
\le\sum_{p\in S}q_pf'_p
\le6\sum_{p\in S}q_pf_p^\star
=6\lopt(S,r),
\]
which completes the proof.

\section{Supplemental Material for \Cref{sec:mandatory} (Mandatory Admission)}
In this chapter of the appendix, we present all omitted proofs and details from \Cref{sec:mandatory}.

\subsection{Formal Proof of the Lower Bound
(\Cref{lemma:lb_man})}\label{appendix:mandatory_lb}

Let $r:=|S|-k$. The case $r=0$ is trivial, so assume $1\le r\le |S|-1$. Every mandatory-admission policy is also feasible under optional admission, so \Cref{lemma:lb_opt} gives $\rate(\opt)\ge\lopt(S,r)$. It remains to show that $\rate(\opt)\ge\lghost(S,G)$ for every $G\subset S$ with $|G|=r$.

\paragraph{Reducing to the Contracted Instance.}
Fix such a set $G$ and write $N:=S\setminus G$, so $|N|=k$. Consider $\con(S,G)$ under its natural coupling with the original instance: each request to a page in $G$ becomes a request to the ghost $g$, while normal-page requests remain unchanged.

For each horizon $T>0$, we construct a policy $\pol$ for the contracted instance that simulates $\opt$ on $[0,T]$. The constructed policy may depend on $T$. Let $C$ and $C'$ denote their respective caches. We maintain the invariant
\[
    C\cap N\subseteq C'\cap N,
\]
so every normal page cached by $\opt$ is also cached by $\pol$. Up to a bounded initialization cost, $\pol$ starts with all pages in $N$ cached and processes requests as follows:
\begin{itemize}
    \item On a miss on a normal page $p\in N$, admit $p$ and evict $g$, leaving every normal page cached.

    \item On a miss on $g$, admit $g$ and evict any page in $N\setminus C$, where $C$ is the cache of $\opt$ after serving the corresponding original request. Such a page exists because mandatory admission requires $C$ to contain the requested page in $G$, while $|C|\le k=|N|$.
\end{itemize}
On a hit, $\pol$ leaves its cache unchanged. These updates preserve the invariant: $\opt$ can add a normal page only when that page is requested, and $\pol$ also has it cached after serving the request. Consequently, every normal-page miss of $\pol$ is also a miss of $\opt$, while ghost misses cost zero. Thus,
\[
    \operatorname{cost}_{\pol}(T)
    \le
    \operatorname{cost}_{\opt}(T)+O(1),
\]
where the initialization term is bounded independently of $T$.

\paragraph{Obtaining the Balance Constraint.}
By construction, every normal-page miss of $\pol$ evicts the ghost, and every ghost miss evicts a normal page. Since the contracted instance has exactly one hole, ghost misses and normal-page misses alternate. Writing $M_p(T)$ for the number of misses on page $p$ by time $T$ in the execution constructed for that horizon, we therefore have
\[
    \left|M_g(T)-\sum_{p\in N}M_p(T)\right|\le1.
\]

Choose a sequence $T_j\to\infty$ along which $\expect[M_p(T_j)]/(\mu_pT_j)$ converges for every $p\in N\cup\{g\}$, and denote these limits by $f_p$. Such a sequence exists because all these quantities lie in $[0,1]$ and there are finitely many pages. The preceding inequality gives
\[
    \mu_g f_g=\sum_{p\in N}\mu_p f_p.
\]
Applying the hole-time argument from Appendix~\ref{appendix:optional_lb} along the same subsequence also yields
\[
    \Phi(f_g)+\sum_{p\in N}\Phi(f_p)\ge1.
\]
Thus, $f$ is feasible for the relaxation defining $\lghost(S,G)$, and hence
\[
    \lghost(S,G)
    \le\sum_{p\in N}q_pf_p
    =
    \lim_{T_j\to\infty}
    \frac{\sum_{p\in N}w_p\,\expect[M_p(T_j)]}{T_j}
    \le\rate(\opt),
\]
where the last inequality follows from the finite-horizon cost comparison above. Since $G$ was arbitrary, combining this inequality
with $\rate(\opt)\ge\lopt(S,r)$ proves
\Cref{lemma:lb_man}.

\subsection{Analysis of $\manpol$
(\Cref{lemma:guarded-eviction-man})}
\label{appendix:mandatory-guarded-eviction}
We follow exactly the same steps as in the proof of \Cref{lemma:guarded-eviction}, adjusting some arguments to account for the introduction of buffers and anchors. 
Write $n:=|S|$, $Q:=\max_{p\in S}q_p$, and let $B:=\{b_1,b_2\}$ and $A:=S\setminus B$ denote the buffers and anchors, respectively.

We call a buffer miss \emph{qualified} if it finds a qualifying anchor, and \emph{unqualified} otherwise. Each qualified buffer miss is paired with the subsequent miss on the anchor it evicts. These pairs account for all anchor misses. We separately bound the cost rates of these pairs and the unqualified buffer misses. Initialization and the at most one unfinished pair contribute only a bounded cost and do not affect the asymptotic rate.

\medskip
\textit{Qualified Misses.}
Suppose a qualified miss on buffer $b$ evicts an anchor
$p$. Under perfect predictions, the resulting hole
interval has length $\ell\ge\theta/\mu_p$.
Since buffers have the highest arrival rates 
$(\mu_b\ge\mu_p)$, the two misses cost at most
\[
    w_b+w_p
    =\frac{q_b}{\mu_b}+\frac{q_p}{\mu_p}
    \le\frac{2Q}{\mu_p}
    \le\frac{2Q}{\theta}\ell.
\]
The hole intervals corresponding to distinct qualified
pairs are disjoint, so their total length over a horizon
of length $T$ is at most $T$. Dividing by $T$ and passing
to the asymptotic rate, their total cost rate is therefore
at most
\begin{equation}\label{eq:mge1}
    \frac{2Q}{\theta}.
\end{equation}

\medskip
\textit{Unqualified Misses.}
These are buffer misses at which no anchor qualifies.
As in the optional-admission proof, we count a larger
collection of events to avoid conditioning on the
algorithm's decisions, which may reveal information
about future requests.

Consider a virtual one-slot cache serving only the
two buffers and always retaining the most recently
requested buffer. Whenever the algorithm's hole is
a buffer, its other buffer coincides with the page
in the virtual cache. This holds after an unqualified
buffer miss, which evicts the other buffer, and after
an anchor miss, which evicts the least recently
requested buffer. Hits preserve this property.
Consequently, every unqualified miss is also a virtual
miss on the same page at a time when no anchor qualifies.

The virtual miss process depends only on buffer requests
and is therefore independent of the anchor arrival
processes. At any virtual miss time $t$, the anchors'
residual arrival times are independent exponentials,
so
\[
    \Pr[\text{no anchor qualifies at }t]
    =
    \prod_{p\in A}
    \Pr\left[\mathrm{Exp}(\mu_p)<\frac{\theta}{\mu_p}\right]
    =
    (1-e^{-\theta})^{n-2}.
\]
The virtual cache implements a two-page swap policy,
whose total cost rate is
\[
    \frac{w_{b_1}+w_{b_2}}
         {1/\mu_{b_1}+1/\mu_{b_2}}
    =
    \frac{\mu_{b_2}q_{b_1}+\mu_{b_1}q_{b_2}}
         {\mu_{b_1}+\mu_{b_2}}
    \le Q.
\]
By independence, restricting to virtual misses at which
no anchor qualifies multiplies this cost rate by
$(1-e^{-\theta})^{n-2}$. Since every unqualified miss
is included among these events, the total unqualified
cost rate is at most
\begin{equation}\label{eq:mge2}
    (1-e^{-\theta})^{n-2}Q.
\end{equation}
Adding the bounds in \eqref{eq:mge1} and \eqref{eq:mge2}
proves the lemma.

\subsection{Analysis of \Cref{alg:mandatory-partition} (Proof of \Cref{lem:mandatory-partition-feasibility})}\label{app:man-partition}

If $r=0$, the algorithm returns $S_0=S$, $G=\emptyset$ and no ordinary components, so both feasibility and the certificate bound are immediate. Henceforth, assume $r\ge1$. Write $Q_G:=\sum_{p\in G}q_p$, and let 
\[
\cert_0:= \begin{cases}
    \cert(\con(S_0,G),3/4) &\text{ if } G\neq \emptyset \\
    0 &\text{ otherwise } \\
\end{cases}
\]
The quantity we must bound is therefore $\sum_{j=1}^h\cert(S_j,3/4)+Q_G+\cert_0$.

Whenever the algorithm reaches Line~\hyperref[line:manpar-coverage]{\ref*{line:manpar-coverage}}, fewer than $r$ pages have zero intrinsic cost. Writing $\psi(x):=(1+x)e^{-x}$, the function $\sum_{p\in S}\psi(q_p/\lambda)$ is continuous and strictly increasing in $\lambda>0$, with limits equal to the number of zero-cost pages and $|S|$, respectively. Thus, the required $\lambda$ exists and is unique. The optimality conditions for $\lopt(S,r)$ give
\begin{equation}\label{eq:manpart-global-solution}
    f_p^\star=e^{-q_p/\lambda},
    \qquad
    u_p=\left(1+\frac{q_p}{\lambda}\right)f_p^\star,
    \qquad
    \sum_{p\in S}u_p=r,
    \qquad
    \lopt(S,r)=\sum_{p\in S}q_pf_p^\star.
\end{equation}

\subsection*{The case $h\ge1$.}
In this case, $G=C$ and $h=r-|C|$. Set $\delta:=\psi(6)=7e^{-6}<1/2$. Every expensive page satisfies $0<u_p<\delta$, while removing the cheap pages leaves coverage
\[
    \sum_{p\in S\setminus C}u_p
    =r-\sum_{p\in C}u_p
    \ge r-|C|
    =h.
\]
Each completed ordinary component has coverage in $[1/2,1/2+\delta)$: its coverage is below $1/2$ before the final page is added, and that page contributes less than $\delta$. Consequently, before constructing component $j\le h$, the remaining coverage is at least $h-(j-1)(1/2+\delta)\ge1$. The loop can therefore construct all $h$ components. Each contains at least two pages, since no expensive page alone contributes coverage $1/2$.

After these components have been constructed, the remaining expensive pages satisfy
\begin{equation}\label{eq:manpart-component-coverage}
    \sum_{p\in S_j}u_p\ge\frac12 \quad (j\in[h]),
    \qquad
    \sum_{p\in S_0\setminus G}u_p
    >h\left(\frac12-\delta\right)>0.
\end{equation}
Thus, $S_0\setminus G$ is nonempty, so $S_0$ is nonempty and $G\subsetneq S_0$. Each selected page is removed from the available set and assigned to exactly one ordinary component; the remaining pages form $S_0$. Hence, the returned sets partition $S$, and $|G|+h=|C|+(r-|C|)=r$.

\paragraph{Bounding the component certificates.}
We first record a convenient consequence of the prefix and swap candidates. For $i\ge2$, define
\[
    a_i:=\max\left\{1,\frac{5}{16}\log i\right\}.
\]
For any set $U$ ordered locally as $q_1\le\cdots\le q_m$, we claim that
\begin{equation}\label{eq:manpart-prefix-bound}
    \cert(U,3/4)\le\frac{q_i}{a_i}
    \qquad\text{for every }i\in\{2,\ldots,m\}.
\end{equation}
For $i\le24$, we have $a_i=1$, and the swap between the two cheapest pages has certificate at most $q_2\le q_i$. For $i\ge25$, the prefix of length $i$ has certificate
\[
    \left(\frac{8}{3\log i}+(1-i^{-3/4})^{i-2}\right)q_i
    \le \frac{16}{5\log i}\,q_i
    =\frac{q_i}{a_i}.
\]
Indeed, $\log x\cdot e^{-(x-2)x^{-3/4}}$ is decreasing for $x\ge25$ and is less than $8/15$ at $x=25$, so
\[
    \log i\left(\frac{8}{3\log i}+(1-i^{-3/4})^{i-2}\right)
    \le\frac83+\log i\cdot e^{-(i-2)i^{-3/4}}
    <\frac83+\frac8{15}
    =\frac{16}{5}.
\]

The coverage guarantees now force an inexpensive prefix in every component. The numerical estimate needed for this argument is
\begin{equation}\label{eq:manpart-coverage-series}
\begin{aligned}
    \delta+\sum_{i=2}^{\infty}\psi\left(\frac{27}{4}a_i\right)
    &\le \delta+23\psi\left(\frac{27}{4}\right)
        +\int_{24}^{\infty}\psi\left(\frac{135}{64}\log x\right)\,dx <0.481<\frac12,
\end{aligned}
\end{equation}
where we used $a_i=1$ for $2\le i\le24$ and integral comparison for the remaining decreasing terms.

Fix an ordinary component $S_j$, and order its pages locally as $q_1\le\cdots\le q_m$. If $\cert(S_j,3/4)>27\lambda/4$, then \eqref{eq:manpart-prefix-bound} gives $q_i/\lambda>(27/4)a_i$ for every $i\ge2$. Since the first page is also expensive, $q_1/\lambda>6$. The function $\psi$ is decreasing, so
\[
    \sum_{p\in S_j}u_p
    <\delta+\sum_{i=2}^{m}\psi\left(\frac{27}{4}a_i\right)
    <\frac12,
\]
contradicting \eqref{eq:manpart-component-coverage}.

For the residual component, the bound is immediate if $G=\emptyset$, since then $\cert_0=0$. Otherwise, order its normal pages as $q_1\le\cdots\le q_m$. The ghost comes first in the contracted ordering, so \eqref{eq:manpart-prefix-bound} gives $\cert_0\le q_i/a_{i+1}$ for every $i\ge1$. If $\cert_0>27\lambda/4$, then
\[
    \sum_{p\in S_0\setminus G}u_p
    <\sum_{i=1}^{m}\psi\left(\frac{27}{4}a_{i+1}\right)
    <\frac12-\delta,
\]
again contradicting \eqref{eq:manpart-component-coverage}. We have therefore shown that
\begin{equation}\label{eq:manpart-certificate-scale}
    \cert(S_j,3/4)\le\frac{27}{4}\lambda
    \quad (j\in[h]),
    \qquad
    \cert_0\le\frac{27}{4}\lambda.
\end{equation}

\paragraph{Charging costs to the global relaxation.}
We next show how the relaxation jointly pays for the ghost simulation and the remaining holes. More generally, suppose $\lambda$ is defined and a threshold $t\ge0$ separates $G$ from the other pages: $q_p\le t$ for $p\in G$ and $q_p\ge t$ for $p\in S\setminus G$. Using \eqref{eq:manpart-global-solution}, we obtain the identity
\begin{equation}\label{eq:manpart-budget}
\begin{aligned}
    \left(1+\frac{t}{\lambda}\right)\lopt(S,r)-Q_G-t(r-|G|)
    &=\sum_{p\in G}(t-q_p)(1-f_p^\star)
      +\sum_{p\in S\setminus G}(q_p-t)f_p^\star \ge0.
\end{aligned}
\end{equation}
Both sums are nonnegative because $t$ separates the two sets. In the present case, $G=C$, so taking $t=6\lambda$ gives
\[
    Q_G+6h\lambda\le7\lopt(S,r).
\]
Combining this budget with \eqref{eq:manpart-certificate-scale} and $h+1\le2h$, we conclude that
\[
\begin{aligned}
    \sum_{j=1}^{h}\cert(S_j,3/4)+Q_G+\cert_0
    &\le Q_G+\frac{27}{4}(h+1)\lambda\\
    &\le\frac94\left(Q_G+6h\lambda\right)\\
    &\le\frac{63}{4}\lopt(S,r)
    \le\frac{63}{4}\lman(S,r).
\end{aligned}
\]
as desired.

\subsection*{The case $h=0$.}
In that case, the algorithm returns at either Line~\hyperref[line:manpar-zero-returning]{\ref*{line:manpar-zero-returning}} or Line~\hyperref[line:manpar-cheap-return]{\ref*{line:manpar-cheap-return}}. In either case, $S_0=S$ and $|G|=r<|S|$, so feasibility is immediate. The simulation cost satisfies
\[
    Q_G\le7\lopt(S,r).
\]
Indeed, it is zero in the first return. In the second, $G$ consists of the $r$ cheapest pages, and $t:=\max_{p\in G}q_p\le6\lambda$ separates $G$ from the remaining pages. Applying \eqref{eq:manpart-budget} with $r-|G|=0$ gives the bound.

It remains to control $\cert_0$ using the ghost lower bound. Write $\ell:=\lghost(S,G)$, and let $(f_g,f_1,\ldots,f_m)$ be an optimal solution to its defining relaxation, with the normal pages ordered by $q_1\le\cdots\le q_m$. If $\cert_0=0$, there is nothing to prove. Otherwise, all normal pages have positive intrinsic cost, since a zero-cost normal page would give a zero-cost swap with the ghost. Thus, the ghost comes first in the contracted ordering, and the prefix and ghost-swap candidates give
\[
    q_i\ge \cert_0 a_{i+1},
    \qquad
    q_i\ge \cert_0\left(1+\frac{\mu_i}{\mu_g}\right)
    \qquad (i\in[m]).
\]
Multiplying these inequalities by $f_i$, summing, and using the balance constraint $\sum_{i=1}^m\mu_i f_i=\mu_gf_g$, we obtain
\begin{equation}\label{eq:manpart-ghost-fractions}
    \sum_{i=1}^{m}a_{i+1}f_i\le\frac{\ell}{\cert_0},
    \qquad
    f_g+\sum_{i=1}^{m}f_i\le\frac{\ell}{\cert_0}.
\end{equation}
The swap estimate controls the total miss fraction, including that of the free ghost. The prefix estimate additionally controls how cheaply these fractions can be spread across many normal pages. Together, they suffice to bound the coverage of the contracted instance.

Specifically, the elementary inequality $\Phi(f)\le xf+e^{-x}$ holds for every $f\in[0,1]$ and $x\ge0$. Applying it with $x=2$ to the ghost and $x=2+5a_{i+1}$ to normal page $i$, and then using \eqref{eq:manpart-ghost-fractions}, gives
\[
\begin{aligned}
    1
    &\le\Phi(f_g)+\sum_{i=1}^{m}\Phi(f_i)\\
    &\le2\left(f_g+\sum_{i=1}^{m}f_i\right)
       +5\sum_{i=1}^{m}a_{i+1}f_i
       +e^{-2}\left(1+\sum_{i=1}^{m}e^{-5a_{i+1}}\right)\\
    &\le\frac{7\ell}{\cert_0}
       +e^{-2}\left(1+\sum_{i=2}^{\infty}e^{-5a_i}\right).
\end{aligned}
\]
The remaining term is bounded by
\[
\begin{aligned}
    e^{-2}\left(1+\sum_{i=2}^{\infty}e^{-5a_i}\right)
    &\le e^{-2}\left(1+23e^{-5}+\int_{24}^{\infty}x^{-25/16}\,dx\right)\\
    &=e^{-2}\left(1+23e^{-5}+\frac{16}{9}24^{-9/16}\right)
    <0.197<\frac15.
\end{aligned}
\]
Consequently, $1<7\ell/\cert_0+1/5$, which yields $\cert_0\le(35/4)\lghost(S,G)$. Finally, since $|G|=r$,
\[
    Q_G+\cert_0
    \le7\lopt(S,r)+\frac{35}{4}\lghost(S,G)
    \le\frac{63}{4}\lman(S,r),
\]
again as desired.

\section{Supplemental Material for \Cref{sec:predictions} (Prediction Errors)}
In this chapter of the appendix, we present all omitted proofs and details from \Cref{sec:predictions}.

\subsection{A Simple Online Policy}\label{app:robust}

We give a deterministic, prediction-free policy with competitive ratio $O(\log(k+1))$. The policy groups the $r:=|S|-k$ pages of smallest intrinsic cost with one additional page, serves this group using a single cache slot, and pins all remaining pages. Formally:

\begin{algorithm}[h]
\caption{$\robustname(S,k)$}
\label{alg:robust}
\begin{algorithmic}[1]
\Statex \textbf{Input:} A non-empty set of pages $S$ and a cache size $k\in\{1,\ldots,|S|\}$.

\State Set $r:=|S|-k$.

\If{$r=0$}
    \State \textbf{Return} the policy that keeps every page in $S$ cached.
\EndIf

\State Let $G$ consist of the $r$ pages in $S$ with smallest intrinsic costs, and let $\mu_G:=\sum_{p\in G}\mu_p$.

\State Choose
\[
    b\in\arg\min_{p\in S\setminus G}
    \frac{\mu_Gq_p}{\mu_G+\mu_p}.
\]

\State Pin all pages in $S\setminus(G\cup\{b\})$ permanently in the cache.

\State Serve requests to $G\cup\{b\}$ using the remaining cache slot, always retaining the most recently requested page in this set.
\end{algorithmic}
\end{algorithm}

The policy uses $k-1$ slots for pinned pages and one slot for $G\cup\{b\}$, so it is feasible under mandatory admission. We next bound its cost rate.

\begin{lemma}
For every instance of $\pname$ with pages $S$ and cache size $k\ge1$,
\[
    \rate\bigl(\robustname(S,k)\bigr)
    \le4\log\bigl(2(k+1)\bigr)\cdot\rate(\opt).
\]
\end{lemma}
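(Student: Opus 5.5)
The plan is to compute the cost rate of $\robustname$ exactly, split it into a ghost-simulation term and a swap term, and charge these to $\lopt(S,r)$ and to $\lghost(S,G)$ respectively, with $G$ the set chosen by the algorithm. Throughout, $r=|S|-k\ge1$ and $N:=S\setminus G$, so $|N|=k$. By \Cref{lemma:lb_man}, $\rate(\opt)\ge\lman(S,r)\ge\max\{\lopt(S,r),\lghost(S,G)\}$.

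\textbf{Step 1: the rate of the policy.} Let $U:=G\cup\{b\}$, and note that the superposed request stream of $U$ has rate $\mu_G+\mu_b$. A request to $p\in U$ misses exactly when the previous request in $U$ was to a different page. This happens with probability $1-\mu_p/(\mu_G+\mu_b)$, independently of the past. Hence
\[
\rate(\robustname)=\sum_{p\in G}q_p\Bigl(1-\tfrac{\mu_p}{\mu_G+\mu_b}\Bigr)+\frac{\mu_Gq_b}{\mu_G+\mu_b}\le Q_G+D,
\]
where $Q_G:=\sum_{p\in G}q_p$ and $D:=\min_{p\in N}\mu_Gq_p/(\mu_G+\mu_p)$ is exactly the quantity minimized by the algorithm's choice of $b$.

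\textbf{Step 2: bounding $D$ by the ghost bound.} This mirrors the case $h=0$ in the proof of \Cref{lem:mandatory-partition-feasibility}. Take an optimal solution $(f_g,(f_p)_{p\in N})$ for $\lghost(S,G)=:\ell$. The definition of $D$ gives $q_p\ge D(1+\mu_p/\mu_G)$ for every $p\in N$. Multiplying by $f_p$, summing, and applying the balance constraint $\mu_Gf_g=\sum_{p\in N}\mu_pf_p$ yields
\[
f_g+\sum_{p\in N}f_p\le \ell/D.
\]
Next apply the elementary inequality $\Phi(f)\le xf+e^{-x}$ to all $k+1$ pages of the contracted instance, with $x:=\log(2(k+1))$. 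Combined with the covering constraint this gives
\[
1\le x\,\ell/D+(k+1)e^{-x}=x\,\ell/D+\tfrac12,
\]
so $D\le 2\log(2(k+1))\,\lghost(S,G)$.

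\textbf{Step 3: bounding $Q_G$ by $\lopt(S,r)$.} I expect this step to be the main obstacle. If at least $r$ pages have $q_p=0$, then $Q_G=0$ and there is nothing to show. Otherwise I would use the KKT solution $f^\star_p=e^{-q_p/\lambda}$ with $\sum_p\psi(q_p/\lambda)=r$, together with the budget identity \eqref{eq:manpart-budget}. First pick the threshold $t:=x\lambda$ with the same $x=\log(2(k+1))$, and split $G=G_1\cup G_2$ according to whether $q_p\le t$.

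For $G_1$, the identity with this threshold, applied to the set of all pages with $q_p\le t$, gives $\sum_{G_1}q_p\le(1+x)\lopt(S,r)$.

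For $G_2$, every page with $q_p>t$ has coverage $u_p<\psi(x)=(1+x)/(2(k+1))$. There are at most $k+r$ such pages, but only the ones in $G$ matter. The argument uses that $G$ consists of the $r$ cheapest pages, so every page of $N$ is at least as expensive as every page of $G_2$. If $G_2\neq\emptyset$, then all of $N$ lies above $t$, and $N$ contributes coverage at most $k\psi(x)<(1+x)/2$. The coverage identity $\sum_pu_p=r$ then forces the pages of $G$ to supply nearly all $r$ units. Since $u_p\le1$ for every page, this means most pages of $G$ have $u_p$ close to $1$, hence $f^\star_p$ bounded below by a constant times $1/(1+x)$, up to an additive deficit of $O(1+x)$ pages.

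I would convert this into $\sum_{G_2}q_pf^\star_p\gtrsim\sum_{G_2}q_p/(1+x)$. For the deficit pages I would charge against the largest few $q_p$ via the ghost bound, or argue by an exchange that their total is also $O(x)\lopt$. If the bookkeeping with $G_2$ proves too lossy, the fallback is to compare $Q_G$ directly against any feasible $f$ for $\lopt(S,r)$. That comparison uses $\Phi(f)\le(1+x)f+(1+x)e^{-x}$ on the pages with $f_p\ge e^{-x}$ and $\Phi\le\Phi(e^{-x})$ on the rest, and exploits that only $k$ pages lie outside $G$.

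\textbf{Conclusion.} Adding the two bounds gives $Q_G+D\le O(\log(2(k+1)))\,\lman(S,r)$. The final step is tuning the constants so that the total is at most $4\log(2(k+1))\,\rate(\opt)$.
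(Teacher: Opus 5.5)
Your Steps 1 and 2 are correct and match the paper. Your exact rate formula is a slightly sharper form of \eqref{eq:rob1}, and your swap/balance argument giving $D\le 2\log(2(k+1))\,\lghost(S,G)$ is the paper's own. The gap is Step 3, which you leave open, and the route you sketch for $G_2$ cannot work. Every page of $G_2$ has $q_p>x\lambda$, hence $f^\star_p<e^{-x}=1/(2(k+1))$ and $u_p<\psi(x)$. The pages of $G$ with $u_p$ near $1$ are those in $G_1$, not $G_2$, so the inequality $\sum_{G_2}q_pf^\star_p\gtrsim\sum_{G_2}q_p/(1+x)$ is false. A concrete instance: take $r=1$ and give all $k+1$ pages the same intrinsic cost $\tau$. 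Then $G=G_2$ is a single page, and its own term is exactly a $1/(k+1)$ fraction of $\lopt(S,1)$. The relaxation is paid almost entirely on $N$, so $Q_{G_2}$ has to be charged to what the relaxation pays on $N$, not on $G_2$. Your fallbacks do not do this. The ghost bound cannot control $Q_G$ at all: it zeroes the weights in $G$, and letting $\mu_G\to0$ with every $q_p$ fixed sends $\lghost(S,G)\to0$ while $Q_G$ and $\lopt(S,r)$ are unchanged. Your last fallback uses a slope $x$ common to all pages, so it only constrains the unweighted masses $f_p$. It therefore cannot yield a lower bound on $\sum_p q_pf_p$ proportional to $Q_G$. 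A smaller point: in the $G_1$ step, apply \eqref{eq:manpart-budget} to $G_1$ itself with threshold $\max_{G_1}q_p$. Applied to $\{p:q_p\le t\}$, the term $t(r-|\cdot|)$ becomes negative once that set has more than $r$ pages.

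The missing idea is to work at the scale $\tau:=\max_{p\in G}q_p$ instead of $x\lambda$. The paper applies $\Phi(f)\le sf+e^{-s}$ to an optimal $f^\star$ with the page-dependent slope $s_p=\alpha q_p/\tau$, where $\alpha=\log(2(k+1))$; this is weak duality with multiplier $\tau/\alpha$. It gives $\lopt(S,r)\ge\frac{\tau}{\alpha}\bigl(\sum_{p\in G}(1-e^{-\alpha q_p/\tau})-\sum_{p\in N}e^{-\alpha q_p/\tau}\bigr)$. Concavity of $y\mapsto1-e^{-\alpha y}$ on $[0,1]$ bounds the first sum below by $(1-e^{-\alpha})Q_G/\tau$. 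The $k$ pages of $N$ have $q_p\ge\tau$, so they contribute at most $ke^{-\alpha}$ to the second sum. Then $\tau\le Q_G$ together with $(k+1)e^{-\alpha}=1/2$ yields $Q_G\le2\alpha\,\lopt(S,r)$, which gives the constant $4$ with no further tuning. If you prefer to stay with the KKT solution, the same scale also repairs your argument. Since $|G|=r$, the coverage identity gives $1-\psi(\tau/\lambda)\le\sum_{p\in G}(1-u_p)=\sum_{p\in N}u_p\le k\,\psi(\tau/\lambda)$, so $(k+1)\psi(\tau/\lambda)\ge1$. Then \eqref{eq:manpart-budget} applied to $G$ with threshold $\tau$ gives $Q_G\le(1+\tau/\lambda)\lopt(S,r)$. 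Finally, writing $y=\tau/\lambda$, the condition $(1+y)e^{-y}\ge1/(k+1)$ forces $1+y\le2\log(2(k+1))$.
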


\begin{proof}
The case $k=|S|$ is immediate, so assume $r\ge1$. Only pages in $G\cup\{b\}$ can incur misses. For each $p\in G$, counting every request as a miss bounds its cost rate by $q_p$. After a miss on $b$, the policy retains $b$ until the next request to any page in $G$, and then leaves $b$ absent until its next request. These two waiting times have means $1/\mu_G$ and $1/\mu_b$, respectively. Hence, the cost rate due to $b$ is
\[
    \frac{w_b}{1/\mu_G+1/\mu_b}
    =\frac{\mu_Gq_b}{\mu_G+\mu_b}.
\]
Consequently,
\begin{equation}\label{eq:rob1}
    \rate\bigl(\robustname(S,k)\bigr)
    \le\sum_{p\in G}q_p
        +\frac{\mu_Gq_b}{\mu_G+\mu_b}.
\end{equation}
Write $\alpha:=\log(2(k+1))$. We bound the first term using $\lopt(S,r)$ and the second using the ghost lower bound $\lghost(S,G)$.

\paragraph{The Cost of the Cheap Pages.}
Let $\tau:=\max_{p\in G}q_p$. If $\tau=0$, then $\sum_{p\in G}q_p=0$. Otherwise, let $(f_p^\star)_{p\in S}$ be an optimal solution to $\lopt(S,r)$. Its feasibility gives $\sum_{p\in S}\Phi(f_p^\star)\ge r$, and its objective value is $\sum_{p\in S}q_pf_p^\star=\lopt(S,r)$. Applying the inequality $\Phi(f)\le sf+e^{-s}$ to each $f_p^\star$, with $s=\alpha q_p/\tau$, yields
\[
\begin{aligned}
    r
    &\le\sum_{p\in S}\Phi(f_p^\star)\le\sum_{p\in S}
        \left(\frac{\alpha q_p}{\tau}f_p^\star
        +e^{-\alpha q_p/\tau}\right) =\frac{\alpha}{\tau}\lopt(S,r)
        +\sum_{p\in S}e^{-\alpha q_p/\tau}.
\end{aligned}
\]
Rearranging, we obtain
\[
    \lopt(S,r)
    \ge\frac{\tau}{\alpha}
        \left(r-\sum_{p\in S}e^{-\alpha q_p/\tau}\right).
\]
Therefore, we have 
\begin{align*}
    \lopt(S,r)
    &\ge\frac{\tau}{\alpha}
        \left(r-\sum_{p\in S}e^{-\alpha q_p/\tau}\right)\\
    &=\frac{\tau}{\alpha}
        \left(
        \sum_{p\in G}(1-e^{-\alpha q_p/\tau})
        -\sum_{p\in S\setminus G}e^{-\alpha q_p/\tau}
        \right)\\
    &\ge\frac1\alpha
        \left(
        (1-e^{-\alpha})\sum_{p\in G}q_p
        -k\tau e^{-\alpha}
        \right)\\
    &\ge\frac{1-(k+1)e^{-\alpha}}{\alpha}
        \sum_{p\in G}q_p\\
    &=\frac1{2\alpha}\sum_{p\in G}q_p.
\end{align*}
The second line uses $|G|=r$. The third uses concavity of $1-e^{-\alpha x}$ on $[0,1]$ for pages in $G$, and $q_p\ge\tau$ for the $k$ pages outside $G$. The fourth uses $\tau\le\sum_{p\in G}q_p$, and the last uses $(k+1)e^{-\alpha}=1/2$. Thus, by \Cref{lemma:lb_man},
\begin{equation}\label{eq:rob2}
    \sum_{p\in G}q_p
    \le2\alpha\,\lopt(S,r)
    \le2\alpha\,\rate(\opt).
\end{equation}

\paragraph{The Cost of the Additional Page.}
Write
\[
    D:=\frac{\mu_Gq_b}{\mu_G+\mu_b}
      =\min_{p\in S\setminus G}
        \frac{\mu_Gq_p}{\mu_G+\mu_p}.
\]
Let $(f_g,(f_p)_{p\in S\setminus G})$ be an optimal solution to $\lghost(S,G)$, where $g$ is the zero-weight ghost with arrival rate $\mu_G$. By the definition of $D$, every normal page satisfies
\[
    q_p\ge D\left(1+\frac{\mu_p}{\mu_G}\right).
\]
Using the balance constraint $\mu_Gf_g=\sum_{p\in S\setminus G}\mu_pf_p$, we obtain
\begin{align*}
    \lghost(S,G)
    &=\sum_{p\in S\setminus G}q_pf_p\\
    &\ge D\sum_{p\in S\setminus G}
        \left(1+\frac{\mu_p}{\mu_G}\right)f_p\\
    &=D\left(f_g+\sum_{p\in S\setminus G}f_p\right).
\end{align*}
On the other hand, applying $\Phi(f)\le\alpha f+e^{-\alpha}$ to the covering constraint gives
\begin{align*}
    1
    &\le\Phi(f_g)+\sum_{p\in S\setminus G}\Phi(f_p)\\
    &\le\alpha\left(f_g+\sum_{p\in S\setminus G}f_p\right)
        +(k+1)e^{-\alpha}\\
    &=\alpha\left(f_g+\sum_{p\in S\setminus G}f_p\right)
        +\frac12.
\end{align*}
Therefore, $\lghost(S,G)\ge D/(2\alpha)$. Since $|G|=r$, \Cref{lemma:lb_man} implies
\begin{equation}\label{eq:rob3}
    \frac{\mu_Gq_b}{\mu_G+\mu_b}
    =D
    \le2\alpha\,\lghost(S,G)
    \le2\alpha\,\rate(\opt).
\end{equation}
Combining \eqref{eq:rob1}, \eqref{eq:rob2}, and \eqref{eq:rob3} yields
\[
    \rate\bigl(\robustname(S,k)\bigr)
    \le4\alpha\,\rate(\opt)
    =4\log\bigl(2(k+1)\bigr)\cdot\rate(\opt),
\]
as claimed.
\end{proof}

\subsection{Proof of \Cref{thm:cons}}\label{app:consistency}

We first propagate \Cref{lemma:guarded-eviction-errors} through the one-hole policy, and then apply the partition guarantee. Recall that $\cert(S,c)$ denotes the certificate returned by
$\ohp(S,c)$.

\begin{corollary}[One-Hole Policy with Prediction Errors]
\label{cor:one-hole-errors}
For any set $S$ of at least two pages, under mandatory admission and per-request predictions of total error $\error_S$,
\[
    \rate\bigl(\ohp(S,1/4)\bigr)
    \le (6+15\error_S)\cert(S,3/4).
\]
\end{corollary}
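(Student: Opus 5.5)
The plan is to compare, candidate by candidate, the rate of whatever $\ohp(S,1/4)$ actually runs with the certificate that $\ohp(S,3/4)$ would assign to the same candidate set. Since $\ohp(S,1/4)$ picks the minimizer of its own certificates, it is enough to show that every candidate $U$ satisfies two things. First, the actual rate of serving $U$ (pinning $S\setminus U$) with threshold parameter $1/4$ is at most $(6+15\error_S)$ times the $c=1/4$ certificate of $U$, up to adjustment. Second, the $c=1/4$ certificate itself is at most a constant times the $c=3/4$ certificate. A cleaner way to combine these is to bound the actual rate of the chosen $U^\star$ by $(6+15\error_S)$ times $\min_U R_{3/4}(U)=\cert(S,3/4)$. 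I would argue that the algorithm with $c=1/4$ selects a $U^\star$ whose rate is controlled by $\min_U R_{1/4}(U)$, and that $R_{1/4}(U)\le K\cdot R_{3/4}(U)$ pointwise for some constant $K$.

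Swap candidates $\{i,j\}$ are easy. Their certificate does not depend on $c$, and the two-page alternation uses no predictions, so its rate equals the certificate exactly, regardless of $\error_S$. For a prefix candidate $U=\{1,\ldots,j\}$ with $j\ge3$, \Cref{lemma:guarded-eviction-errors} applied to $U$ (using $\error_U\le\error_S$) gives a rate of at most $(16+40\error_S)q_j/\log j$. So I need to compare this with $R_{3/4}(U)=\bigl(\tfrac{8}{3\log j}+(1-j^{-3/4})^{j-2}\bigr)q_j\ge \tfrac{8}{3}\,q_j/\log j$. This gives a ratio of at most $\tfrac{3}{8}(16+40\error_S)=6+15\error_S$, which matches the stated constants exactly and confirms the intended route. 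The minimal case $j\ge3$ with small $m$ is already covered by the footnote in that lemma's proof, since the trivial bound $7Q\le16Q/\log m$ holds there.

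The main obstacle is that $\ohp(S,1/4)$ chooses $U^\star$ by minimizing its own certificates $R_{1/4}$, not $R_{3/4}$, so the chosen candidate need not be the one I have just bounded. I would handle this by redefining the comparison: the selected $U^\star$ minimizes $R_{1/4}$. I would bound its actual rate by $(6+15\error_S)$ times a quantity that is itself $\le$ its $R_{1/4}$-type surrogate, and then show the minimizer of $R_{1/4}$ is within the same factor of $\min R_{3/4}$. Concretely, for prefixes $R_{1/4}(U)\ge 8q_j/\log j$, so the actual rate is at most $(2+5\error_S)R_{1/4}(U^\star)\le(2+5\error_S)\min_U R_{1/4}(U)$. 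It then suffices to show $R_{1/4}(U)\le 3R_{3/4}(U)$ for each $U$. This is immediate for swaps. For prefixes I need $\tfrac{8}{\log j}+(1-j^{-1/4})^{j-2}\le 3\bigl(\tfrac{8}{3\log j}+(1-j^{-3/4})^{j-2}\bigr)$, which reduces to $(1-j^{-1/4})^{j-2}\le 3(1-j^{-3/4})^{j-2}$. This fails for large $j$, so the last step must instead use $(1-j^{-1/4})^{j-2}\le e^{-(j-2)j^{-1/4}}\le c'/\log j$ with a small numeric check. I expect this elementary but fiddly calculation to be where the constants $6$ and $15$ are actually earned.
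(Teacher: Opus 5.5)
Your final structure is exactly the paper's. You bound the rate of the selected candidate by $(2+5\error_S)\cert(S,1/4)$, using $R_{1/4}(U)\ge 8q_j/\log j$ for prefixes and rate $=$ certificate for swaps. You then reduce to the pointwise comparison $R_{1/4}(U)\le 3R_{3/4}(U)$. The gap is in the last step, where you claim that $(1-j^{-1/4})^{j-2}\le 3(1-j^{-3/4})^{j-2}$ fails for large $j$. It does not fail. Since $j^{-1/4}\ge j^{-3/4}$, the base on the left is smaller, so $(1-j^{-1/4})^{j-2}\le(1-j^{-3/4})^{j-2}$ for every $j\ge3$, even without the factor $3$. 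Intuitively, lowering $c$ lowers the probability that no anchor qualifies; asymptotically the two sides behave like $e^{-j^{3/4}}$ and $e^{-j^{1/4}}$, and you had the direction reversed. This is precisely the paper's observation: passing from $c=3/4$ to $c=1/4$ triples the first term of each prefix certificate and decreases the second. Hence $\cert(S,1/4)\le3\cert(S,3/4)$, and the constants $6$ and $15$ are simply $3\cdot(2+5\error_S)$.

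As written, your replacement route does not prove the stated bound. A bound $(1-j^{-1/4})^{j-2}\le c'/\log j$ with $c'>0$ cannot be compared against $3(1-j^{-3/4})^{j-2}$ for large $j$, because the latter decays faster than $1/\log j$. The best you can extract is therefore $R_{1/4}(U)\le\frac{8+c'}{\log j}q_j\le\frac{3(8+c')}{8}R_{3/4}(U)$, a factor strictly larger than $3$. With the exponential bound you wrote, $c'\approx0.51$, giving about $3.2$ and a final bound of roughly $(6.4+16\error_S)\cert(S,3/4)$. That suffices for an $O(1+\error_S)$ statement but not for $(6+15\error_S)\cert(S,3/4)$. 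Replace the last step with the one-line monotonicity observation above.
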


\begin{proof}
We first show that
\[
    \rate\bigl(\ohp(S,1/4)\bigr) \le (2+5\error_S)\cert(S,1/4).
\]
If the selected candidate is a swap, its rate equals its certificate independently of the predictions, so the bound is immediate. Otherwise, let $U$ be the selected prefix, $m:=|U|$, and $Q:=\max_{p\in U}q_p$. Its certificate satisfies
\[
    \cert(S,1/4)
    =\left(
        \frac{8}{\log m}
        +(1-m^{-1/4})^{m-2}
    \right)Q
    \ge \frac{8Q}{\log m}.
\]
Since the remaining pages are pinned, \Cref{lemma:guarded-eviction-errors} gives
\[
    \rate\bigl(\ohp(S,1/4)\bigr)
    \le (16+40\error_U)\frac{Q}{\log m}
    \le (2+5\error_S)\cert(S,1/4).
\]

Finally, changing $c$ from $3/4$ to $1/4$ triples the first term of each prefix certificate and decreases its second term. Swap certificates remain unchanged. Taking the minimum over the same collection of candidates therefore gives
\[
    \cert(S,1/4)\le 3\cert(S,3/4),
\]
which completes the proof.
\end{proof}

To apply \Cref{cor:one-hole-errors} to the contracted residual instance, we specify its ghost predictions and bound their error. At each ghost request at time $s$, first update the prediction of the requested original page. Writing $t'_p$ for the current next-arrival prediction of each $p\in G$, supply the ghost with the prediction
\[
    t'_g:=\max\left\{s,\min_{p\in G}t'_p\right\}.
\]
We claim that its error satisfies
\[
    \epsilon_g\le\sum_{p\in G}\epsilon_p.
\]

To see this, consider the next ghost request, at time $t$. If $t'_g>t$, the page requested at $t$ has a current prediction at least $t'_g$. Its overprediction error therefore covers $t'_g-t$, and each original arrival is charged at most once in this way. If $t'_g<t$, choose a page $p$ attaining the minimum predicted arrival time, and let $t_p\ge t$ be its actual next arrival. The ghost's error interval satisfies
\[
    [t'_g,t]\subseteq[t'_p,t_p].
\]
Moreover, clipping at $s$ ensures that this interval lies within the ghost interarrival interval $[s,t]$. Thus, repeated charges to the same original prediction occupy disjoint intervals, whose total length is at most its underprediction error $t_p-t'_p$.

Summing these charges proves the claimed rate bound. At a finite horizon, the unfinished prediction for each page contributes at most its elapsed interarrival time, which has bounded expectation. Consequently, the total prediction error in the contracted instance satisfies
\[
    \error_{\operatorname{Cont}(S_0,G)}
    =\epsilon_g+\sum_{p\in S_0\setminus G}\epsilon_p
    \le \error_{S_0}.
\]

\paragraph{Proof of \Cref{thm:cons}.}
The case $k\ge |S|$ is immediate by keeping every page cached. Otherwise, let $(\mathcal{T},S_0,G)$ be the output of $\manpar(S,|S|-k)$, with $\mathcal{T}=\{S_j\}_{j=1}^h$. Write $Q_G:=\sum_{p\in G}q_p$ and set
\[
    \cert_0:=
    \begin{cases}
        \cert(\operatorname{Cont}(S_0,G),3/4),&G\neq\emptyset,\\
        0,&G=\emptyset.
    \end{cases}
\]
Every ordinary component has error at most $\error$, and the same holds for the contracted residual instance by the preceding argument. Applying \Cref{cor:one-hole-errors} to these components and accounting for the additional simulation cost $Q_G$ gives
\[
\begin{aligned}
    \rate\bigl(\manfinal(S,|S|-k,1/4)\bigr)
    &\le
    (6+15\error)
    \left(\sum_{j=1}^h \cert(S_j,3/4)+\cert_0\right)+Q_G\\
    &\le
    (6+15\error)
    \left(\sum_{j=1}^h \cert(S_j,3/4)+Q_G+\cert_0\right)\\
    &\le
    15.75(6+15\error)\lman(S,|S|-k)\\
    &\le
    15.75(6+15\error)\rate(\opt).
\end{aligned}
\]
Here the third inequality uses \Cref{lem:mandatory-partition-feasibility}, and the last uses \Cref{lemma:lb_man}. This establishes the claimed $O(1+\error)$ competitive ratio.

\newpage

\end{appendices}



\newpage
\bibliographystyle{plainnat}
\bibliography{references}
\end{document}